\titleformat*{\paragraph}{\sc}
\definecolor{darkred}{rgb}{0.5,0,0}
\theoremstyle{plain}
\newtheorem{asn}{Assumption}
\crefname{asn}{Assumption}{Assumptions}
\newtheorem{lem}{Lemma}
\crefname{lem}{Lemma}{Lemmas}
\numberwithin{lem}{section}
\newtheorem{prop}{Proposition}
\newtheorem{defn}{Definition}
\newtheorem*{claim*}{Claim}
\newtheorem*{cor*}{Corollary}
\DeclareMathOperator*{\tr}{trace}
\DeclareMathOperator*{\diag}{diag}
\DeclareMathOperator*{\var}{Var}
\crefname{sappsec}{Supplemental Appendix}{Supplemental Appendices}
\crefname{sappsubsec}{Supplemental Appendix}{Supplemental Appendices}
\crefname{sappsubsubsec}{Supplemental Appendix}{Supplemental Appendices}
\crefname{appsec}{Appendix}{Appendices}
\newcommand{\alert}[1]{#1}
\crefname{sappsec}{Supplemental Appendix}{Supplemental Appendices}
\crefname{sappsubsec}{Supplemental Appendix}{Supplemental Appendices}
\crefname{sappsubsubsec}{Supplemental Appendix}{Supplemental Appendices}
\crefname{appsec}{Appendix}{Appendices}
\begin{document}

\title{\texorpdfstring{\vspace{-2em}}{}Local Projection Inference is Simpler \texorpdfstring{\\}{} and More Robust Than You Think\thanks{Email: {\tt jm4474@columbia.edu}, {\tt mikkelpm@princeton.edu}. We are grateful for comments from \alert{two anonymous referees}, Jushan Bai, Otavio Bartalotti, Guillaume Chevillon, Max Dovi, Marco Del Negro, Domenico Giannone, Nikolay Gospodinov, Michael Jansson, \`Oscar Jord\`a, Lutz Kilian, Michal Koles\'{a}r, Simon Lee, Sophocles Mavroeidis, Konrad Menzel, Ulrich M\"{u}ller, Serena Ng, Elena Pesavento, Mark Watson, Christian Wolf, Tao Zha, and numerous seminar participants. We would like to especially thank Atsushi Inoue and Anna Mikusheva for a very helpful discussion of our paper. Montiel Olea would like to thank Qifan Han and Giovanni Topa for excellent research assistance.  Plagborg-M{\o}ller acknowledges that this material is based upon work supported by the NSF under Grant {\#}1851665. Any opinions, findings, and conclusions or recommendations expressed in this material are those of the authors and do not necessarily reflect the views of the NSF.}}
\author{Jos\'{e} Luis Montiel Olea \\ Columbia University \and Mikkel Plagborg-M{\o}ller \\ Princeton University}
\date{First version: March 17, 2020 \texorpdfstring{\\[0.5ex]}{}This version: December 4, 2020}

\maketitle

\begin{abstract}
Applied macroeconomists often compute confidence intervals for impulse responses using local projections, i.e., direct linear regressions of future outcomes on current covariates. This paper proves that local projection inference robustly handles two issues that commonly arise in applications: highly persistent data and the estimation of impulse responses at long horizons.  We consider local projections that control for lags of the variables in the regression. We show that lag-augmented local projections with normal critical values are asymptotically valid uniformly over (i) both stationary and non-stationary data, and also over (ii) a wide range of response horizons. Moreover, lag augmentation obviates the need to correct standard errors for serial correlation in the regression residuals. Hence, local projection inference is arguably both simpler than previously thought and more robust than standard autoregressive inference, whose validity is known to depend sensitively on the persistence of the data and on the length of the horizon.
\end{abstract}

\noindent \emph{Keywords:} impulse response, local projection, long horizon, uniform inference.

\section{Introduction}
Impulse response functions are key objects of interest in empirical macroeconomic analysis. It is increasingly popular to estimate these parameters using the method of \emph{local projections} \citep{Jorda2005}:  simple linear regressions of a future outcome on current covariates \citep{Ramey2016,Angrist2018,Nakamura2018,Stock2018}. Since local projection estimators are regression coefficients, they have a simple and intuitive interpretation. Moreover, inference can be carried out using textbook standard error formulae, adjusting for serial correlation in the (multi-step forecast) regression residuals. 

Despite its popularity, there exist no theoretical results justifying the use of local projection \emph{inference} over autoregressive procedures. From an identification and estimation standpoint, \citet{Kilian2017} and \citet{PlagborgMoller2019} argue that neither local projections nor Vector Autoregressions (VARs) dominate the other in terms of mean squared error \alert{in finite samples}, and in population the two methods are equivalent. However, from an inference perspective, the only available guidance on the relative performance of local projections comes in the form of a small number of simulation studies, which by necessity cannot cover the entire range of empirically relevant data generating processes. 

In this paper we show that---in addition to its intuitive appeal---frequentist local projection inference is robust to two common features of macroeconomic applications: highly persistent data and the estimation of impulse responses at long horizons. Key to our result is that we consider \emph{lag-augmented} local projections, which use lags of the variables in the regression as controls. Formally, we prove that standard confidence intervals based on such lag-augmented local projections have correct asymptotic coverage \emph{uniformly} over the persistence in the data generating process and over a wide range of horizons.\footnote{We focus on marginal inference on individual impulse responses, not \emph{simultaneous} inference on a vector of several response horizons \citep{IK2016,MontielOlea2019}.} This means that confidence intervals remain valid even if the data exhibits unit roots, and even at horizons $h$ that are allowed to grow with the sample size $T$, e.g., $h=h_T \propto T^{\eta}$, $\eta \in [0,1)$. In fact, when persistence is not an issue, and the data is known to be stationary, local projection inference is also valid at \emph{long} horizons; i.e., horizons that are a non-negligible fraction of the sample size ($h_T \propto T$).

Lag-augmenting local projections not only robustifies inference, it also simplifies the computation of standard errors by obviating the adjustment for serial correlation in the residuals. It is common practice in the local projections literature to compute Heteroskedasticity and Autocorrelation Consistent/Robust (HAC/HAR) standard errors \citep{Jorda2005,Ramey2016,Kilian2017,Stock2018}. Instead, we prove that the usual Eicker-Huber-White heteroskedasticity-robust standard errors suffice for \emph{lag-augmented} local projections. The reason is that, although the regression residuals are serially correlated, the \emph{regression scores} (the product of the residuals and residualized regressor of interest) are serially uncorrelated under weak assumptions. This finding further simplifies local projection inference, as it side-steps the delicate choice of HAR procedure and associated difficult-to-interpret tuning parameters \citep[e.g.,][]{Lazarus2018}.

The robustness properties of lag-augmented local projection inference stand in contrast to the well-known fragility of standard autoregressive procedures. Textbook autoregressive inference methods for impulse responses (such as the delta method) are invalid in \alert{some cases with} near-unit roots or medium-long to long horizons (e.g., $h_T \propto \sqrt{T})$, \alert{as discussed further below}. We show that lag-augmented local projection inference is valid when the data has near-unit roots and the horizon sequence satisfies $h_{T}/T \rightarrow 0$. Though the method fails in the case of unit roots and very long horizons $h_T \propto T$, existing VAR-based methods that achieve correct coverage in this case are either highly computationally demanding or result in impractically wide confidence intervals.  When the data is stationary and interest centers on short horizons, local projection inference is valid but less efficient than textbook AR inference. Thus, the robustness afforded by our recommended procedure is not a free lunch. We provide a detailed comparison with alternative inference procedures in \cref{sec:comp} below.

Our results rely on assumptions that are similar to those used in the literature on autoregressive inference. In particular, we assume that the true model is a VAR($p$) with possibly conditionally heteroskedastic innovations and known lag length. We discuss the choice of lag length $p$ in \cref{sec:conc}. The key assumption that we require on the innovations is that they are conditionally mean independent of both past and \emph{future} innovations (which is trivially satisfied for i.i.d. innovations). Our strengthening of the usual martingale difference assumption is crucial to avoid HAC inference, but we show that the assumption is satisfied for a large class of conditionally heteroskedastic innovation processes. The robustness property of local projection inference only obtains asymptotically if the researcher controls for all $p$ lags of all of the variables in the VAR system. Thus, our paper highlights the advantages of multivariate modeling even when using single-equation local projections.

To illustrate our theoretical results, we present a small-scale simulation study suggesting that lag-augmented local projection confidence intervals achieve a favorable trade-off between coverage and length. Since local projection estimation is subject to small-sample biases just like VAR estimation \citep{Herbst2020}, we consider a simple and computationally convenient bootstrap implementation of local projection. The simulations suggest that non-augmented autoregressive procedures with delta method standard errors have more severe under-coverage problems than local projection inference, especially at moderate and long horizons. Autoregressive confidence intervals can be meaningfully \emph{shorter} than lag-augmented local projection intervals in \emph{relative} terms, but in \emph{absolute} terms the difference in length is surprisingly modest. Our simulations also indicate that lag-augmented local projections with heteroskedasticity-robust standard errors have better coverage/length properties than more standard \emph{non-augmented} local projections with off-the-shelf HAR standard errors. Finally, although the lag-augmented autoregressive bootstrap procedure of \citet{Inoue2020} achieves good coverage, it yields prohibitively wide confidence intervals at longer horizons when the data is persistent.

\paragraph{Related Literature.}
\alert{It is well known that standard autoregressive (AR) inference on impulse responses requires an auxiliary rank condition to rule out super-consistent limit distributions, thus yielding a $\sqrt{T}$-normal limit with strictly positive variance, see Assumption B of \citet{Inoue2020}. When this rank condition holds, the textbook AR impulse response estimator is asymptotically normal even in the presence of (near-)unit roots \citep{Inoue2002}. However, there are two common features of the data that lead to violations of the rank condition. First, the condition can fail when some linear combinations of the variables exhibit no persistence \citep{Benkwitz2000}. Second, in the presence of (near-)unit roots, certain linear combinations of the autoregressive coefficients are necessarily super-consistent \citep{Sims1990}. This compromises textbook AR inference for certain combinations of impulse response horizons and parameter values that typically cannot be ruled out \emph{a priori}, especially in AR(1) or VAR(1) models, but also in higher-order autoregressions (\citealp{Phillips1998}; \citealp[Remark 3, p. 455]{Inoue2020}). In an important paper, \citet{Inoue2020} show that \emph{lag-augmented} autoregressive inference solves the rank problem caused by (near-)unit roots, but data generating processes that lack persistence still need to be ruled out \emph{a priori}. We build on their ideas, which in turn are based on \citet{Toda1995} and \citet{Dolado1996}. As we show, the validity of lag-augmented \emph{local projection} (LP) inference does not hinge on auxiliary rank conditions.}

\alert{Moreover, the validity of textbook AR inference is also compromised when the length of the impulse response horizon is large \citep{Pesavento2007,Mikusheva2012}.} Standard bootstrap methods rectify some of these problems, but not all. Several papers have proposed AR-based methods for impulse response inference at \emph{long} horizons $h=h_T \propto T$ \citep{Wright2000,Gospodinov2004,Pesavento2007,Mikusheva2012,Inoue2020}. With the exception of \citet{Mikusheva2012}, the literature \alert{on long-horizon inference} has exclusively focused on near-unit root processes as opposed to devising uniformly valid procedures. The \citet{Hansen1999} grid bootstrap analyzed by \citet{Mikusheva2012} is asymptotically valid at short and long horizons. However, it is not valid at intermediate horizons (e.g., $h_T \propto \sqrt{T}$), unlike the LP procedure we analyze. \citeauthor{Mikusheva2012} argues, though, that the grid bootstrap is \emph{close} to being valid at intermediate horizons, although it is much more computationally demanding than our recommended procedure, especially in VAR models with several parameters. \citet{Inoue2020} show that a version of the Efron bootstrap confidence interval, when applied to lag-augmented AR estimators, is valid at long horizons. We show that this procedure delivers impractically wide confidence intervals at moderately long horizons when the data is persistent, unlike lag-augmented LP.

We appear to be the first to prove the \emph{uniform} validity of lag-augmented LP inference. \citet{Mikusheva2007,Mikusheva2012} and \citet{Inoue2020} derive the uniform coverage properties of various AR inference procedures, but they do not consider LP. The \emph{pointwise} properties of LP procedures have been discussed by \citet{Jorda2005}, \citet{Kilian2017}, and \citet{Stock2018}, among others. \citet{Kilian2011} and \citet{Brugnolini2018} present simulation studies comparing AR inference and LP inference. \citet{Brugnolini2018} finds that the lag length in the LP matters, which is consistent with our theoretical results.

Though the theoretical results in this paper appear to be novel, \citet[Section 5]{Dufour2006} and \citet{Breitung2019} have discussed some of the main ideas presented herein. First, both these papers state that lag augmentation in LP avoids unit root asymptotics, but neither paper considers inference at long horizons or derives uniform inference properties. Second, \citet{Breitung2019} further argue that HAC inference in LP can be avoided if the true model is a VAR($p$), although it is not clear from their discussion what are the assumptions needed for this to be true. Neither of these papers provide results concerning the efficiency of lag-augmented LP inference relative to other lag-augmented or non-augmented inference procedures, as we do in \cref{sec:comp}. 

Local projections are closely related to multi-step forecasts. \citet{Richardson1989} and \citet{Valkanov2003} develop a non-standard limit distribution theory for long-horizon forecasts. \citet{Chevillon2017} proves a robustness property of direct multi-step inference that involves non-normal asymptotics due to the lack of lag augmentation. \citet{Phillips2013} test the null hypothesis of no long-horizon predictability using a novel approach that requires a choice of tuning parameters, but yields uniformly-over-persistence normal asymptotics. This test is based on an estimator with a faster convergence rate than ours in the non-stationary case. However, to the best of our knowledge, their approach does not carry over immediately to impulse response inference, and it is not obvious whether the procedure is uniformly valid over both short and long horizons. 

\paragraph{Outline.}
\cref{sec:ar1} provides a non-technical overview of our results in the context of a simple AR(1) model, including an illustrative simulation study. \cref{sec:comp} provides an in-depth comparison of lag-augmented LP with other inference procedures. \cref{sec:var} presents the formal uniformity result for a general VAR($p$) model. \cref{sec:boot} describes a simple bootstrap implementation of lag-augmented local projection that we recommend for practical use. \cref{sec:conc} concludes. Proofs are relegated to \cref{sec:var_proof_main} and the Online Supplement. \cref{sec:appendix,sec:var_XpX_ar1} contain further simulation and theoretical results. The supplement and a full Matlab code repository are available online.\footnote{\url{https://github.com/jm4474/Lag-augmented_LocalProjections} \label{fn:github}}

\section{Overview of the Results}
\label{sec:ar1}
This section provides an overview of our results in the context of a simple univariate AR(1) model. The discussion here merely intends to illustrate our main points. \cref{sec:var} presents general results for VAR($p$) models.

\subsection{Lag-Augmented Local Projection}
\label{sec:ar1_intuition}

\paragraph{Model.}
Consider the AR(1) model for the data $\lbrace y_t \rbrace$:
\begin{equation} \label{eqn:ar1}
y_t = \rho y_{t-1} + u_t,\quad t=1,2,\dots,T,
\quad y_0 = 0.
\end{equation}
The parameter of interest is a nonlinear transformation of $\rho$, namely the impulse response coefficient at horizon $h \in \mathbb{N}$. We denote this parameter by $\beta(\rho,h) \equiv \rho^h$. In \cref{sec:var} below we argue that the zero initial condition $y_0=0$ is not needed for our results to go through. Our main assumption in the univariate model is:

\begin{asn} \label{asn:u_mds}
$\lbrace u_t \rbrace$ is strictly stationary and satisfies $E(u_t \mid \lbrace u_s \rbrace_{s \neq t})=0$ almost surely.
\end{asn}
\noindent The assumption requires the innovations to be mean independent relative to past and future innovations. This is a slight strengthening of the usual martingale difference assumption on $u_t$.  \cref{asn:u_mds} is trivially satisfied if $\lbrace u_t \rbrace$ is i.i.d., but it also allows for stochastic volatility and GARCH-type innovation processes.\footnote{For example, consider processes $u_t = \tau_t \varepsilon_t$, where $\varepsilon_t$ is i.i.d. with $E(\varepsilon_t)=0$, and for which one of the following two sets of conditions hold: (a) $\lbrace \tau_t \rbrace$ and $\lbrace \varepsilon_t \rbrace$ are independent processes; or (b) $\tau_t$ is a function of lagged values of $\varepsilon_t^2$, and the distribution of $\varepsilon_t$ is symmetric. \cref{asn:u_mds} is in principle testable, but that is outside the scope of this paper.} 

\paragraph{Local Projections With and Without Lag Augmentation.}
We consider the local projection (LP) approach of \cite{Jorda2005} for conducting inference about the impulse response $\beta(\rho, h)$. A common motivation for this approach is  that the AR(1) model \eqref{eqn:ar1} implies 
\begin{equation} \label{eqn:ytph_decomp}
y_{t+h} = \beta(\rho,h)y_t + \xi_t(\rho,h),
\end{equation}
where the regression residual (or \emph{multi-step forecast error}), 
\[\xi_t(\rho,h) \equiv \sum_{\ell=1}^h \rho^{h-\ell}u_{t+\ell},\]
is generally serially correlated, even if the innovation $u_t$ is i.i.d.

The most straight-forward LP impulse response estimator simply regresses $y_{t+h}$ on $y_t$, as suggested by equation \eqref{eqn:ytph_decomp}, but the validity of this approach is sensitive to the persistence of the data. Specifically, this standard approach leads to a non-normal limiting distribution for the impulse response estimator when $\rho \approx 1$, since the regressor $y_t$ exhibits near-unit-root behavior in this case. Hence, inference based on normal critical values will not be valid uniformly over all values of $\rho \in [-1,1]$ even for fixed forecast horizons $h$. If $\rho$ is safely within the stationary region, then the LP estimator is asymptotically normal, but inference generally requires the use of Heteroskedasticity and Autocorrelation Robust (HAR) standard errors to account for serial correlation in the residual $\xi_t(\rho,h)$.

To robustify and simplify inference, we will instead consider a \emph{lag-augmented} local projection, which uses $y_{t-1}$ as an additional control variable. In the autoregressive literature, ``lag augmentation'' refers to the practice of using more lags for estimation than suggested by the true autoregressive model. Define the covariate vector $x_t \equiv (y_t,y_{t-1})'$. Given any horizon $h \in \mathbb{N}$, the lag-augmented LP estimator $\hat{\beta}(h)$ of $\beta(\rho,h)$ is given by the coefficient on $y_t$ in a regression of $y_{t+h}$ on $y_t$ and $y_{t-1}$:
\begin{equation}
\begin{pmatrix} \label{eqn:LPestimates}
\hat{\beta}(h) \\
\hat{\gamma}(h)
\end{pmatrix} \equiv \left(\sum_{t=1}^{T-h} x_tx_t'\right)^{-1}\sum_{t=1}^{T-h}x_t y_{t+h}.
\end{equation}
Here $\hat{\beta}(h)$ is the impulse response estimator of interest, while $\hat{\gamma}(h)$ is a nuisance coefficient.

The purpose of the lag augmentation is to make the effective regressor of interest stationary even when the data $y_t$ has a unit root. Note that equations \eqref{eqn:ar1}--\eqref{eqn:ytph_decomp} imply
\begin{equation} \label{eqn:ytph_decomp2}
y_{t+h} =  \beta(\rho,h)u_t + \beta(\rho,h+1)y_{t-1} + \xi_t(\rho,h).
\end{equation}
If $u_t$ were observed, the above equation suggests regressing $y_{t+h}$ on $u_t$, while controlling for $y_{t-1}$. Intuitively, this will lead to an asymptotically normal estimator of $\beta(\rho,h)$, since the regressor of interest $u_t$ is stationary by \cref{asn:u_mds}, and we control for the term that involves the possibly non-stationary regressor $y_{t-1}$. Fortunately, due to the linear relationship $y_t=\rho y_{t-1}+u_t$, the coefficient $\hat{\beta}(h)$ on $y_t$ in the feasible lag-augmented regression \eqref{eqn:LPestimates} on $(y_t,y_{t-1})$ precisely equals the coefficient on $u_t$ in the desired regression on $(u_t,y_{t-1})$. This argument for why lag-augmented LP can be expected to have a uniformly normal limit distribution even when $\rho \approx 1$ is completely analogous to the reasoning for using lag augmentation in AR inference \citep{Sims1990,Toda1995,Dolado1996,Inoue2002,Inoue2020}. In the LP case, lag augmentation has the additional benefit of simplifying the computation of standard errors, as we now discuss.

\paragraph{Standard Errors.} 
We now define the standard errors for the lag-augmented LP estimator. We will show that, contrary to conventional wisdom (e.g., \citealp[p. 166]{Jorda2005}; \citealp[p. 84]{Ramey2016}), HAR standard errors are \emph{not} needed to conduct inference on \emph{lag-augmented} LP, despite the fact that the regression residual $\xi_t(\rho,h)$ is serially correlated. Instead, it suffices to use the usual heteroskedasticity-robust Eicker-Huber-White standard error of $\hat{\beta}(h)$:\footnote{This is computed by the {\tt regress, robust} command in Stata, for example. The usual homoskedastic standard error formula suffices if $u_t$ is assumed to be i.i.d.}
\begin{equation}
\label{eqn:EHW}
\hat{s}(h) \equiv \frac{(\sum_{t=1}^{T-h} \hat{\xi}_t(h)^2 \hat{u}_t(h)^2)^{1/2}}{\sum_{t=1}^{T-h} \hat{u}_t(h)^2},
\end{equation}
where we define the lag-augmented LP residuals
\begin{equation}\label{eqn:estimatedresiduals}
\hat{\xi}_t(h) \equiv y_{t+h} - \hat{\beta}(h)y_t - \hat{\gamma}(h)y_{t-1},\quad t=1,2,\dots,T-h,
\end{equation}
and the residualized regressor of interest
\[\hat{u}_t(h) \equiv y_t - \hat{\rho}(h)y_{t-1},\quad t=1,2,\dots,T-h,\]
\[\hat{\rho}(h) \equiv \frac{\sum_{t=1}^{T-h} y_t y_{t-1}}{\sum_{t=1}^{T-h} y_{t-1}^2}.\]
As mentioned in the introduction, the fact that we may avoid HAR inference simplifies the implementation of LP inference, as there is no need to choose amongst alternative HAR procedures or specify tuning parameters such as bandwidths \citep{Lazarus2018}.

Why is it not necessary to adjust for serial correlation in the residuals? Since lag-augmented LP controls for $y_{t-1}$, equation  \eqref{eqn:ytph_decomp2} suggests that the estimator $\hat{\beta}(h)$ is asymptotically equivalent with the coefficient in a linear regression of the (population) residualized outcome $y_{t+h} - \beta(\rho,h+1)y_{t-1}$ on the (population) residualized regressor $u_t = y_t - \rho y_{t-1}$:
\begin{align*}
\hat{\beta}(h) &\approx \frac{\sum_{t=1}^{T-h} \lbrace y_{t+h} - \beta(\rho,h+1)y_{t-1} \rbrace u_t}{\sum_{t=1}^{T-h}u_t^2} \\
&= \beta(\rho,h) + \frac{\sum_{t=1}^{T-h} \xi_t(\rho,h) u_t}{\sum_{t=1}^{T-h}u_t^2}.
\end{align*}
The second term in the decomposition above determines the sampling distribution of the lag-augmented local projection. Although the multi-step regression residual $\xi_t(\rho,h)$ is serially correlated on its own, the \emph{regression score} $\xi_t(\rho,h)u_t$ is serially uncorrelated under \cref{asn:u_mds}.\footnote{\citet{Breitung2019} make this same observation, but they appear to claim that it is sufficient to assume that $\lbrace u_t \rbrace$ is white noise, which is incorrect.} For any $s < t$,
\begin{align}
E[\xi_t(\rho,h)u_t\xi_s(\rho,h)u_s ] &= E[E(\xi_t(\rho,h)u_t\xi_s(\rho,h)u_s \mid u_{s+1},u_{s+2},\dots)] \nonumber \\
&= E[\xi_t(\rho,h) u_t\xi_s(\rho,h)\underbrace{E(u_s \mid u_{s+1},u_{s+2},\dots)}_{=0}]. \label{eqn:scores_uncorr}
\end{align}
Thus, the heteroskedasticity-robust (but not autocorrelation-robust) standard error $\hat{s}(h)$ suffices for doing inference on $\hat{\beta}(h)$.\footnote{\citet[p. 152]{Stock2018} mention a similar conclusion for the distinct case of LP with an instrumental variable, under some conditions on the instrument.} Notice that this result crucially relies on (i) lag-augmenting the local projections and (ii) the strengthening in \cref{asn:u_mds} of the usual martingale difference assumption on $\lbrace u_t \rbrace$ (as remarked above, the strengthening still allows for conditional heteroskedasticity and other plausible features of economic shocks).\footnote{The nuisance coefficient $\hat{\gamma}(h)$ is not interesting \emph{per se}, but note that inference on this coefficient would generally require HAR standard errors, and its limit distribution is in fact non-standard when $\rho \approx 1$.} 

\alert{Though lag augmentation robustifies and simplifies local projection inference, it is not necessarily a free lunch. We show in \cref{sec:comp} that the relative efficiency of non-augmented and lag-augmented local projection estimators depends on $\rho$ and $h$.}

\paragraph{Lag-Augmented Local Projection Inference.}
Define the nominal $100 (1-\alpha)\%$ lag-augmented LP confidence interval for the impulse response at horizon $h$ based on the standard error $\hat{s}(h)$:
\[\hat{C}(h,\alpha) \equiv \left[ \hat{\beta}(h)-z_{1-\alpha/2}\: \hat{s}(h)\:,\: \hat{\beta}(h)+z_{1-\alpha/2}\: \hat{s}(h) \right],\]
where $z_{1-\alpha/2}$ is the $(1-\alpha/2)$ quantile of the standard normal distribution.

Our main result shows that the lag-augmented LP confidence interval above is valid regardless of the persistence of the data, i.e., whether or not the data has a unit root. Crucially, the result does not break down at moderately long horizons $h$. We provide a formal result for VAR($p$) models in \cref{sec:var} and for now just discuss heuristics. Consider any upper bound $\bar{h}_T$ on the horizon which satisfies $\bar{h}_T/T \to 0$. Then \cref{thm:var_lp_inference} below implies that 
\begin{equation} \label{eqn:coverage}
\inf_{\rho \in [-1,1]} \inf_{1 \leq h \leq \bar{h}_T}  P_{\rho}\left( \beta(\rho,h) \in \hat{C}(h,\alpha) \right )  \to  1-\alpha \quad \text{as } T \to \infty,
\end{equation}
where $P_\rho$ denotes the distribution of the data $\lbrace y_t \rbrace$ under the AR(1) model \eqref{eqn:ar1} with parameter $\rho$. In words, the result states that, for sufficiently large sample sizes, LP inference is valid even under the \emph{worst-case} choices of parameter $\rho \in [-1,1]$ and horizon $h \in [1,\bar{h}_T]$. As is well known, such \emph{uniform} validity is a much stronger  result than \emph{pointwise} validity for fixed $\rho$ and $h$. In fact, if we restrict attention to only the stationary region $\rho \in [-1+a,1-a]$, $a \in (0,1)$, then the statement \eqref{eqn:coverage} is true with the upper bound $\bar{h}_T = (1-a)T$ on the horizon. That is, if we know the time series is not close to a unit root, then local projection inference is valid even at long horizons $h$ that are non-negligible fractions of the sample size $T$. 

\subsection{Illustrative Simulation Study}
\label{sec:sim_ar1}

We now present a small simulation study to show that lag-augmented LP achieves a favorable trade-off between robustness and efficiency relative to other procedures. For clarity, we continue to assume the simple AR(1) model \eqref{eqn:ar1} with known lag length. Our baseline design considers homoskedastic innovations $u_t \stackrel{i.i.d.}{\sim} N(0,1)$. In \cref{sec:sim_ar1_arch} we present results for ARCH innovations.

We stress that, although we use the AR(1) model for illustration here, the central goal of this paper is to develop a procedure that is feasible even in realistic VAR($p$) models. Thus, we avoid computationally demanding procedures, such as the AR grid bootstrap, which are difficult to implement in applied settings. We provide an extensive theoretical comparison of various inference procedures in \cref{sec:comp}.

\cref{tab:TableMC} displays the coverage and median length of impulse response confidence intervals at various horizons. We consider several versions of AR inference and LP inference, either implemented using the bootstrap or using delta method standard errors. ``LP'' denotes local projection and ``AR'' autoregressive inference. ``LA'' denotes lag augmentation. The subscript ``$b$'' denotes bootstrap confidence intervals constructed from a wild recursive bootstrap design \citep{Goncalves2004}, as described in \cref{sec:boot} (for LP we use the percentile-t confidence interval). Columns without the ``$b$'' subscript use delta method standard errors. For LA-LP, we always use Eicker-Huber-White standard errors as discussed in \cref{sec:ar1_intuition}, whereas non-augmented LP always uses HAR standard errors.\footnote{As an off-the-shelf, state-of-the-art HAR procedure, we choose the Equally Weighted Cosine (EWC) estimator with degrees of freedom as recommended by \citet[equations 4 and 10]{Lazarus2018}. The degrees of freedom depend on the effective sample size $T-h$ and thus differ across horizons $h$.} The column ``AR-LA'' is the Efron bootstrap confidence interval for \emph{lag-augmented} AR estimates developed by \citet{Inoue2020} and discussed further in \cref{sec:comp}.\footnote{We use the \citet{Pope1990} bias-corrected AR estimates to generate the bootstrap samples, as recommended by \citet{Inoue2020}.} \alert{All estimation procedures include an intercept.} The sample size is $T=240$. We consider data generating processes (DGPs) $\rho \in \lbrace 0,.5,.95,1\rbrace $ and horizons $h$ up to 60 periods (25\% of the sample size, which is not unusual in applied work). The nominal confidence level is 90\%. We use 5,000 Monte Carlo repetitions, with 2,000 bootstrap draws per repetition.

\afterpage{
\begin{landscape}
\begin{table}[p]
    \centering
    \caption{Monte Carlo results: homoskedastic innovations}
    \vspace{0.5\baselineskip}
    \begin{tabular}{r|cccccc|cccccc}
& \multicolumn{6}{c|}{Coverage} & \multicolumn{6}{c}{Median length} \\
$h$ & $\text{LP-LA}_b$ & $\text{LP-LA}$ & $\text{LP}_b$ & $\text{LP}$ & $\text{AR-LA}_b$ & $\text{AR}$ & $\text{LP-LA}_b$ & $\text{LP-LA}$ & $\text{LP}_b$ & $\text{LP}$ & $\text{AR-LA}_b$ & $\text{AR}$ \\
\hline
\multicolumn{13}{c}{$\rho = 0.00$} \\
  1 & 0.902 & 0.892 & 0.912 & 0.889 & 0.891 & 0.894 & 0.218 & 0.211 & 0.233 & 0.215 & 0.211 & 0.210 \\
  6 & 0.908 & 0.899 & 0.916 & 0.898 & 0.000 & 1.000 & 0.219 & 0.214 & 0.233 & 0.220 & 0.000 & 0.000 \\
 12 & 0.909 & 0.900 & 0.903 & 0.897 & 0.000 & 1.000 & 0.222 & 0.217 & 0.230 & 0.226 & 0.000 & 0.000 \\
 36 & 0.903 & 0.895 & 0.903 & 0.898 & 0.000 & 1.000 & 0.235 & 0.229 & 0.244 & 0.239 & 0.000 & 0.000 \\
 60 & 0.898 & 0.886 & 0.894 & 0.889 & 0.000 & 0.979 & 0.252 & 0.244 & 0.261 & 0.255 & 0.000 & 0.000 \\
\multicolumn{13}{c}{$\rho = 0.50$} \\
  1 & 0.906 & 0.896 & 0.912 & 0.885 & 0.897 & 0.897 & 0.219 & 0.212 & 0.205 & 0.187 & 0.211 & 0.184 \\
  6 & 0.895 & 0.886 & 0.906 & 0.875 & 0.897 & 0.832 & 0.252 & 0.245 & 0.293 & 0.266 & 0.046 & 0.032 \\
 12 & 0.906 & 0.894 & 0.903 & 0.887 & 0.897 & 0.766 & 0.255 & 0.248 & 0.293 & 0.280 & 0.002 & 0.001 \\
 36 & 0.900 & 0.889 & 0.901 & 0.884 & 0.897 & 0.643 & 0.271 & 0.262 & 0.309 & 0.296 & 0.000 & 0.000 \\
 60 & 0.905 & 0.891 & 0.903 & 0.880 & 0.897 & 0.595 & 0.291 & 0.279 & 0.333 & 0.316 & 0.000 & 0.000 \\
\multicolumn{13}{c}{$\rho = 0.95$} \\
  1 & 0.892 & 0.878 & 0.842 & 0.827 & 0.882 & 0.850 & 0.220 & 0.212 & 0.076 & 0.072 & 0.212 & 0.075 \\
  6 & 0.903 & 0.838 & 0.851 & 0.789 & 0.882 & 0.810 & 0.523 & 0.452 & 0.395 & 0.345 & 1.011 & 0.318 \\
 12 & 0.889 & 0.806 & 0.853 & 0.752 & 0.882 & 0.769 & 0.678 & 0.550 & 0.644 & 0.518 & 1.744 & 0.430 \\
 36 & 0.885 & 0.814 & 0.865 & 0.674 & 0.882 & 0.656 & 0.728 & 0.625 & 0.859 & 0.612 & 6.567 & 0.272 \\
 60 & 0.892 & 0.833 & 0.892 & 0.693 & 0.882 & 0.595 & 0.731 & 0.651 & 0.942 & 0.641 & 23.050 & 0.095 \\
\multicolumn{13}{c}{$\rho = 1.00$} \\
  1 & 0.895 & 0.874 & 0.820 & 0.554 & 0.877 & 0.532 & 0.219 & 0.211 & 0.040 & 0.040 & 0.210 & 0.039 \\
  6 & 0.875 & 0.777 & 0.836 & 0.503 & 0.877 & 0.494 & 0.564 & 0.498 & 0.243 & 0.222 & 1.206 & 0.214 \\
 12 & 0.843 & 0.676 & 0.827 & 0.429 & 0.877 & 0.459 & 0.821 & 0.671 & 0.477 & 0.385 & 2.553 & 0.379 \\
 36 & 0.741 & 0.428 & 0.755 & 0.200 & 0.877 & 0.348 & 1.338 & 0.950 & 1.200 & 0.592 & 21.107 & 0.670 \\
 60 & 0.642 & 0.276 & 0.712 & 0.156 & 0.877 & 0.295 & 1.434 & 0.978 & 1.667 & 0.637 & 161.250 & 0.731 \\
\end{tabular}
    \label{tab:TableMC}
    \\
    \vspace{0.5\baselineskip}
\begin{minipage}{1.15\textwidth} 
{\footnotesize Coverage probability and median length of nominal 90\% confidence intervals at different horizons. AR(1) model with $\rho \in \lbrace 0,.5,.95,1\rbrace $, $T=240$, i.i.d. standard normal innovations. 5,000 Monte Carlo repetitions; 2,000 bootstrap iterations.} 
\end{minipage}
\end{table}
\end{landscape}
}

Consistent with our theoretical results, the bootstrap version of lag-augmented local projection (column 1) achieves coverage close to the nominal level in almost all cases, whereas the competing procedures either under-cover or return impractically wide confidence intervals. In contrast, non-augmented LP (columns 3 and 4) exhibits larger coverage distortions in almost all cases. As is well known, textbook AR delta method confidence intervals (column 6) severely under-cover when $\rho>0$ and the horizon is even moderately large.

It is only when both $\rho=1$ and $h \geq 36$ that lag-augmented local projection exhibits serious coverage distortions, again consistent with our theory. However, even in these cases, the coverage distortions are similar to or less pronounced than those for non-augmented LP and for delta method AR inference.

Although the \citet{Inoue2020} lag-augmented AR bootstrap confidence interval (column 5) achieves correct coverage for $\rho>0$ at all horizons, this interval is extremely wide in the problematic cases where $\rho$ is close to 1 and the horizon $h$ is intermediate or long. We explain this fact theoretically in \cref{sec:comp}. Confidence intervals with median width greater than 1 would appear to be of little practical use, since the true impulse response parameter is bounded above by 1 in the AR(1) model.\footnote{In the AR(1) model, we could intersect all confidence intervals with the interval $[-1,1]$. In this case, the median length of the \citet{Inoue2020} confidence interval is close to 1, cf. \cref{sec:comp_details_laarboot}.} Note also that the \citet{Inoue2020} interval severely under-covers when $\rho=0$ at all even (but not odd) horizons $h$, as explained theoretically in \cref{sec:comp}.\footnote{\alert{\citet{Inoue2020} assume $\rho \neq 0$ and discuss why this restriction is necessary in their case.}}

Although outperformed by bootstrap procedures, the lag-augmented local projection delta method interval (column 2) performs well among the group of delta method procedures. Its coverage distortions are much less severe than textbook AR delta method inference (column 4) and non-augmented LP inference with HAR standard errors (column 6). Recall that the lag-augmented LP confidence interval is at least as easy to compute as these other delta method confidence intervals. The reason why the bootstrap improves on the coverage properties of the delta method procedures is related to the well-known finite-sample bias of AR and LP estimators \citep{Kilian1998, Herbst2020}.\footnote{Our bootstrap implementation of \emph{non-augmented} LP also appears to be quite effective at correcting the most severe coverage distortions of the delta method procedure.}

\cref{tab:TableMC} illustrates the fact that the robustness of lag-augmented local projection inference entails an efficiency loss relative to AR inference when $\rho$ is well below 1, although this loss is not large in absolute terms. In \emph{percentage} terms, local projection confidence intervals are much wider than AR-based confidence intervals when $\rho \ll 1$ and the horizon $h$ is intermediate or long, since AR procedures mechanically impose that the impulse response function tends to 0 geometrically fast with the horizon. Yet, in \emph{absolute} terms, the median length of the LP confidence intervals is not so large as to be a major impediment to applied research. The relative efficiency of lag-augmented LP vs. non-augmented LP cannot be ranked and depends on the DGP and on the horizon. When $\rho$ is close to 1, lag-augmented LP intervals are sometimes (much) narrower than lag-augmented AR intervals. We analytically characterize the various efficiency trade-offs in \cref{sec:comp_details_releff}.

\alert{\cref{sec:sim_further} shows that the preceding qualitative conclusions extend to richer models. There we consider a bivariate VAR(4) model with varying degrees of persistence, as well as two empirically calibrated VAR(12) models with four or five observables.}

\section{Comparison With Other Inference Procedures}
\label{sec:comp}
The simulations and theoretical results in this paper suggest that lag-augmented local projection is the only known \alert{confidence interval procedure} that achieves uniformly valid coverage over the DGP and over a wide range of horizons, \alert{while preserving reasonable average length and remaining computationally feasible in realistic settings}. However, the simulations also suggest that lag-augmented local projection inference is less efficient than standard AR inference when the data is stationary. In this section we discuss in more detail the coverage and length properties of alternative confidence interval procedures for impulse responses. We review the well-known drawbacks of textbook AR inference, provide new results on the relative length of lag-augmented LP vs. non-augmented LP and lag-augmented AR, and discuss the computational challenges of the AR grid bootstrap. We refer the reader back to the small-scale simulation study in \cref{sec:sim_ar1} for illustrations of the following arguments.

\paragraph{Textbook Autoregressive Inference.}
The uniformity result (\ref{eqn:coverage}) for lag-augmented LP stands in stark contrast to textbook AR inference on impulse responses, which suffers from several well-known issues. First, for the standard OLS AR estimator, the usual asymptotic normal limiting theory is invalid when the derivative of the impulse response parameter with respect to the AR coefficients has a singular Jacobian matrix. In the AR(1) model, this occurs \alert{at all horizons $h\geq 2$} in the white noise case $\rho=0$ \citep{Benkwitz2000}. Second, as with non-augmented LP, textbook AR inference is \alert{not uniformly valid when the data is nearly non-stationary, unless one further restricts the parameter space} (\citealp[Remark 2.5]{Phillips1998}; \citealp{Inoue2002}; \citealp[Remark 3, p. 455]{Inoue2020}).\footnote{\alert{This is well known in the AR(1) model. In the AR(2) model, a non-normal limit arises at $h=2$ when there is a unit root and the autoregressive coefficients are equal \citep[Remark 3, p. 455]{Inoue2020}.}} Third, pre-testing for the presence of a unit root does not yield uniformly valid inference and can lead to poor finite sample performance \citep[e.g.,][p. 1412]{Mikusheva2007}. Fourth, plug-in AR inference with normal critical values must necessarily break down at medium-long horizons $h=h_T \propto T^{1/2}$ and at long horizons $h_T \propto T$, due to the severe nonlinearity of the impulse response transformation at such horizons \citep[Section 4.3]{Mikusheva2012}. \citet{Wright2000} and \citet{Pesavento2006,Pesavento2007} construct confidence intervals for persistent processes at long horizons $h=h_T \propto T$ by inverting the non-standard AR limit distribution, but these tailored procedures do not work uniformly over the parameter space or over the horizon.

The severe under-coverage of delta method AR inference is starkly illustrated in \cref{sec:sim_ar1} (see column 6 of \cref{tab:TableMC}). As discussed in detail by \cite{Inoue2020}, standard bootstrap approaches to AR inference do not solve all the uniformity issues.

We must emphasize, however, that if we restrict attention to stationary processes and short-horizon impulse responses, the standard OLS AR impulse response estimator is more efficient than lag-augmented LP. Hence, there is a trade-off between efficiency in benign settings and robustness to persistence and longer horizons, as is also clear in the simulation results in \cref{sec:sim_ar1}. We expand upon the efficiency properties of the standard AR estimator in \cref{sec:comp_details_releff}.

\paragraph{Lag-Augmented AR Inference.}
The above-mentioned non-uniformity of the textbook AR inference method in the case of near-non-stationary data can be remedied by lag augmentation \citep{Inoue2020}. In the case of an AR(1) model, the lag-augmented AR estimator $\hat{\beta}_\text{ARLA}(h)$ is given by $\hat{\rho}_1^h$, where $(\hat{\rho}_1,\hat{\rho}_2)$ are the OLS coefficients from a regression of $y_t$ on $(y_{t-1},y_{t-2})$ (i.e., we estimate an AR(2) model). The intuition why this guarantees a normal limiting distribution even in the unit root case is the same as in \cref{sec:ar1_intuition}. Lag-augmented AR and lag-augmented LP coincide at horizon $h=1$, but not at longer horizons. Lag augmentation involves a loss of efficiency: The lag-augmented AR estimator is strictly less efficient than the non-augmented AR estimator except when the true process is white noise (see \cref{sec:comp_details_releff}). Note that lag augmentation by itself does not solve the above-mentioned issues that occur when the Jacobian of the impulse response transformation is singular, or when doing inference at medium-long or long horizons.\footnote{The AR(1) simulations in \cref{sec:sim_ar1} show that the coverage of the \citet{Inoue2020} confidence interval is 0 at all \emph{even} horizons when $\rho=0$. This is because the true impulse response is 0, but the bootstrap samples of $\hat{\rho}_1^h$ are all strictly positive. Their procedure achieves uniformly correct coverage at \emph{odd} horizons.}

The bootstrap confidence interval for lag-augmented AR proposed by \citet{Inoue2020} has valid coverage even at long horizons. Specifically, \citet{Inoue2020} show that the \emph{Efron} bootstrap confidence interval---applied to recursive AR bootstrap samples of $\hat{\beta}_\text{ARLA}(h)$---has valid coverage even at long horizons $h=h_T \propto T$, as long as the largest autoregressive root is bounded away from 0.\footnote{For intuition, consider the AR(1) case. The Efron bootstrap preserves monotonic transformations, and the bootstrap transformation $\beta(\rho,h)=\rho^h$ is monotonic (if we restrict attention to $\rho \in (0,1]$ or $\rho \in [-1,0)$). Hence, the Efron confidence interval is valid for $\rho^h$ if it is valid for $\rho$ itself. In more general VAR($p$) models, the same argument can be applied at long horizons, since here only the largest autoregressive root matters for impulse responses (if the roots are well-separated).}

Unfortunately, we show in \cref{sec:comp_details_laarboot} that the expected length of the lag-augmented AR interval is prohibitively large when the data is persistent and the horizon is long. Precisely, in the case of an AR(1) model, $\hat{\beta}_\text{ARLA}(h)=\hat{\rho}_1^h$ is \emph{inconsistent} for sequences of DGPs $\rho=\rho_T$ and horizons $h=h_T$ such that $h_{T} \propto T^{\eta}, \eta \in [1/2,1]$, and $h_T(1-\rho_T) \to a \in [0,\infty)$. The reason is that the lag-augmented coefficient estimator $\hat{\rho}_1$ converges at rate $T^{-1/2}$ even in the unit root case, implying that the estimation error in $\hat{\rho}_1$ is not negligible when raising the estimator to a power of $h=h_T$. This implies that the Efron bootstrap confidence interval is inconsistent (i.e., its length does not shrink to 0 in probability) for such sequences $\rho_T$ and $h_T$. In fact, when $\eta>1/2$,  the width of the confidence interval for the $h_{T}$ impulse response is almost equal to the entire positive part of the parameter space $[0,1]$ with probability equal to the nominal confidence level. This contrasts with the lag-augmented LP confidence interval, which is consistent for any sequence $\rho_T \in [-1,1]$ and any sequence $h_T$ such that $h_T/T\to 0$. The large width of the \cite{Inoue2020} interval is illustrated in the simulations in \cref{sec:sim_ar1} (see the second-to-last column in \cref{tab:TableMC}).

Interestingly, \emph{if we restrict attention to stationary processes and short horizons, the relative efficiency of lag-augmented AR and lag-augmented LP inference is ambiguous}. In the context of a stationary, homoskedastic AR(1) model with a fixed horizon $h$ of interest, \cref{fig:se_indiff} shows that lag-augmented AR is more efficient than lag-augmented LP when $\rho$ is small or when the horizon $h$ is large, and vice versa. For any horizon $h$, there exists some cut-off value for $\rho \in (0,1)$, above which lag-augmented LP is more efficient. Intuitively, the nonlinear impulse response transformation $\rho \mapsto \rho^h$ is highly sensitive to values of $\rho$ near 1 whenever $h$ is large, which compounds the effects of estimation error in $\hat{\rho}$, whereas LP is a purely linear procedure.

\begin{figure}[t]
\centering
\includegraphics[width=0.8\linewidth]{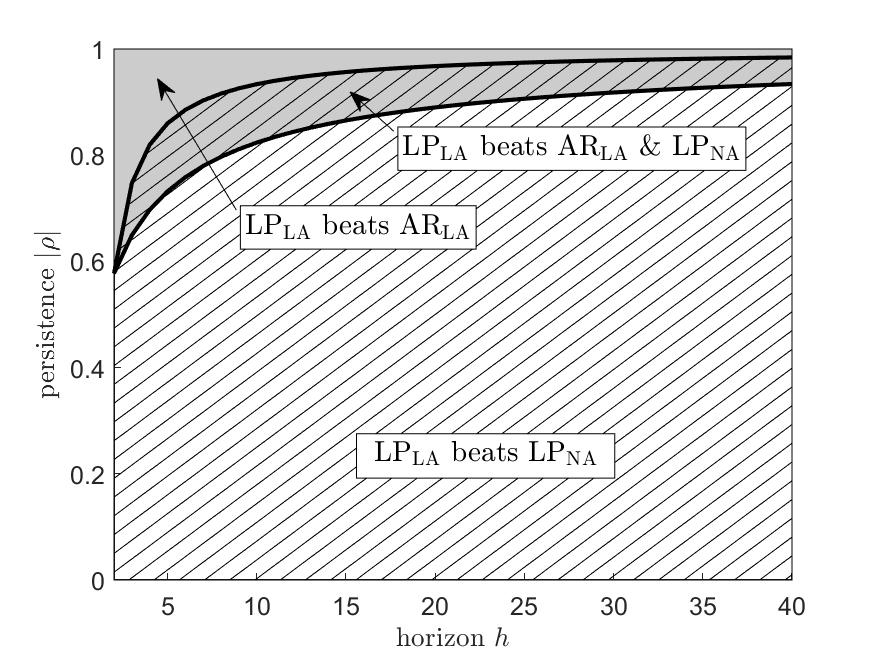}
\caption{Efficiency ranking of three different estimators of the fixed impulse response $\beta(\rho,h) = \rho^h$ in the homoskedastic AR(1) model: lag-augmented LP ($\text{LP}_\text{LA}$), non-augmented LP ($\text{LP}_\text{NA}$), and lag-augmented AR  ($\text{AR}_\text{LA}$). Gray area: combinations of $(|\rho|,h)$ for which $\text{LP}_\text{LA}$ is more efficient than $\text{AR}_\text{LA}$. Thatched area: $\text{LP}_\text{LA}$ is more efficient than $\text{LP}_\text{NA}$. See \cref{sec:comp_details_releff} for analytical derivations of the indifference curves (thick lines).} \label{fig:se_indiff}
\end{figure}

\paragraph{AR Grid Bootstrap and Projection.}
The grid bootstrap of \citet{Hansen1999} represents a computationally intensive approach to doing valid inference at fixed and long horizons, regardless of persistence, but it is invalid at intermediate horizons, as shown by \citet{Mikusheva2012}. The grid bootstrap is based on test inversion, so it requires running an autoregressive bootstrap on each point in a fine grid of potential values for the impulse response parameter of interest. It also requires estimating a constrained OLS estimator that imposes the hypothesized null on the impulse response at each point in the grid. Recall that lag-augmented LP inference is computationally simple and valid at any horizon $h=h_T$ satisfying $h_T/T \to 0$. However, in the case of unit roots and long horizons $h_T \propto T$, lag-augmented LP inference with normal critical values is not valid, while the grid bootstrap is valid \citep{Mikusheva2012}.

Another computationally intensive approach is to form a uniformly valid confidence set for the AR parameters and then map it into a confidence interval for impulse responses by projection. Although doable in the AR(1) model, this approach would appear to be computationally infeasible and possibly highly conservative in realistic VAR($p$) settings, unlike lag-augmented LP (see \cref{sec:var}).

\paragraph{Other Local Projection Approaches.}
Non-augmented LP is not robust to non-stationarity, as already discussed in \cref{sec:ar1_intuition}. \emph{If the data is stationary and the horizon $h$ is fixed, the relative efficiency of non-augmented LP and lag-augmented LP is generally ambiguous}, as shown in \cref{fig:se_indiff} in the case of a homoskedastic AR(1) model. \alert{There are two competing forces. On the one hand, as shown in \cref{sec:ar1_intuition}, non-augmented LP uses the regressor $y_t$, which has higher variance than the effective regressor $u_t$ in the lag-augmented case. By itself, this suggests that non-augmented LP should be more efficient. On the other hand, absent lag augmentation, the LP regression scores are serially correlated and thus have a larger long-run variance. On balance, \cref{sec:comp_details_releff} shows that lag-augmented LP is relatively more efficient the smaller is $\rho$ and the larger is $h$.}

In some empirical settings, the researcher may directly observe the autoregressive innovation, or some component of the innovation, for example by constructing narrative measures of economic shocks \citep{Ramey2016}. For concreteness, consider the AR(1) model \eqref{eqn:ar1} and assume we observe the innovation $u_t$. In this case, it is common in empirical practice to simply regress $y_{t+h}$ on $u_t$, without controls. Although this strategy provides consistent impulse response estimates when the data is stationary, it is inefficient relative to lag-augmented LP, since the latter approach additionally controls for the variable $y_{t-1}$, which would otherwise show up in the error term in the representation \eqref{eqn:ytph_decomp2}. Thus, lag augmentation is desirable on robustness and efficiency grounds even if some shocks are directly observed.

\paragraph{Summary.}
Existing and new theoretical results confirm the main message of our simulations in \cref{sec:sim_ar1}: Lag-augmented LP is the only known procedure that is computationally feasible in realistic problems and can be shown to have valid coverage under a wide range of DGPs and horizon lengths, without achieving such valid coverage by returning a confidence interval that is impractically wide. This robustness does come at the cost of a loss of efficiency relative to non-robust AR methods. However, the efficiency loss is large in \emph{relative} terms only in stationary, short-horizon cases, where lag-augmented LP confidence intervals do well in \emph{absolute} terms, as illustrated in \cref{sec:sim_ar1}. Based on these results, we believe that it is only in the case of highly persistent data and very long horizons $h=h_T \propto T$ that the use of alternative robust procedures should be considered, such as the computationally demanding AR grid bootstrap.

\section{General Theory for the VAR(\texorpdfstring{$p$}{p}) Model}
\label{sec:var}
This section presents the inference procedure and theoretical uniformity result for a general VAR($p$) model. In this case, the lag-augmented LP procedure controls for $p$ lags of all the time series that enter into the VAR model. We follow \citet{Mikusheva2012} and \citet{Inoue2020} in assuming that the lag length $p$ is finite and known. We also assume that the VAR process has no deterministic dynamics for simplicity. See \cref{sec:conc} for further discussion of these assumptions.

\subsection{Model and Inference Procedure} \label{sec:var_model}

Consider an $n$-dimensional VAR($p$) model for the data $y_t=(y_{1,t},\dots,y_{n,t})'$:
\begin{equation} \label{eqn:var}
y_{t} = \sum_{\ell=1}^p A_\ell y_{t-\ell} +  u_{t}, \quad t=1,2,\dots,T,\quad y_0=\dots=y_{1-p}=0,
\end{equation}
Let $A \equiv (A_1, \ldots, A_{p})$ denote the $n \times np$ matrix collecting all the autoregressive coefficients. The assumption of zero pre-sample initial conditions $y_0=\dots=y_{1-p}=0$ is made for notational simplicity and can be relaxed, as discussed below in the remarks after \cref{thm:var_lp_inference}. As in the AR(1) case, we assume that the $n$-dimensional innovation process $\lbrace u_{t} \rbrace$ satisfies the strengthening of the martingale difference condition in \cref{asn:u_mds} (which from now on will refer to the vector process $\lbrace u_t \rbrace$). 

We seek to do inference on a scalar function of the reduced-form impulse responses of the VAR model. Generalizations to \emph{structural} impulse responses and \emph{joint} inference require more notation but are otherwise straight-forward, see \cref{sec:conc}. Let $\beta_{i}(A,h)$ denote the $n \times 1$ vector containing each of variable $i$'s reduced-form impulse responses at horizon $h \geq 0$. Without loss of generality, we focus on the impulse responses of the first variable $y_{1,t}$. Thus, we seek a confidence interval for the scalar parameter $\nu'\beta_{1}(A,h)$, where $\nu \in \mathbb{R}^n \backslash \lbrace 0 \rbrace$ is a user-specified vector. For example, the choice $\nu = e_j$ (the $j$-th unit vector) selects the horizon-$h$ response of $y_{1,t}$ with respect to the $j$-th reduced-form innovation $u_{j,t}$.

Local projection estimators of impulse responses are motivated by the representation
\begin{equation} \label{eqn:var_long_regression}
y_{1,t+h} = \beta_1(A,h)' y_{t} + \sum_{\ell=1}^{p-1} \delta_{1,\ell}(A,h)^{\prime}y_{t-\ell}  + \xi_{1,t}(A,h),
\end{equation} 
see \citet{Jorda2005} and \citet[Chapter 12.8]{Kilian2017}. Here $\delta_{1,\ell}(A,h)$ is an $n \times 1$ vector of regression coefficients that can be obtained by iterating on the VAR model \eqref{eqn:var}. The model-implied multi-step forecast error in this regression is
\begin{equation}
\xi_{1,t}(A,h) \equiv  \sum_{\ell = 1}^{h} \beta_{1}(A,h-\ell)' u_{t+\ell}.
\end{equation}

\paragraph{Multivariate Lag-Augmented Local Projection.}
The lag-augmented LP estimator corresponding to the VAR model \eqref{eqn:var} is motivated by  \eqref{eqn:var_long_regression}. We regress $y_{1,t+h}$ on the $n$ variables $y_{t}$, using the $np$ variables $(y'_{t-1}, \ldots, y'_{t-p})$ as additional controls. According to equation \eqref{eqn:var_long_regression}, the population regression coefficients on the last $n$ control variables $y_{t-p}$ equal zero. Thus, we are including one additional lag in the estimation of the impulse response coefficients. Given any horizon $h \in \mathbb{N}$, the lag-augmented LP estimator $\hat{\beta}_1(h)$ of $\beta_1(A,h)$ is given by the vector of coefficients on $y_{t}$ in the regression of $y_{1,t+h}$ on $x_{t} \equiv (y_{t}', y_{t-1}', \ldots, y_{t-p}')'$:
\begin{equation}
\begin{pmatrix} \label{eqn:LPestimates_VAR}
\hat{\beta}_1(h) \\
\hat{\gamma}_1(h)
\end{pmatrix} \equiv \left(\sum_{t=1}^{T-h} x_tx_t'\right)^{-1}\sum_{t=1}^{T-h}x_t y_{1,t+h},
\end{equation}
where $\hat{\beta}_{1}(h)$ is a vector of dimension $n \times 1$. 

The usual (Eicker-Huber-White) heteroskedasticity-robust standard error for $\nu'\hat{\beta}_1(h)$ is defined as
\[\hat{s}_1(h,\nu) \equiv \frac{1}{T-h}\left\lbrace \nu'\hat{\Sigma}(h)^{-1} \left(\sum_{t=1}^{T-h} \hat{\xi}_{1,t}(h)^2\hat{u}_t(h)\hat{u}_t(h)' \right) \hat{\Sigma}(h)^{-1}\nu\right\rbrace^{1/2},\]
where
\[\hat{\xi}_{1,t}(h) \equiv y_{1,t+h} - \hat{\beta}_1(h)' y_{t} - \hat{\gamma}_{1}(h)' X_{t}, \quad X_{t} \equiv (y_{t-1}' , \ldots, y_{t-p}')', \]
\[ \hat{u}_{t}(h) \equiv y_{t} -\hat{A}(h)X_{t}, \quad \hat{A}(h) \equiv \left( \sum_{t=1}^{T-h} y_{t}X_{t}' \right) \left ( \sum_{t=1}^{T-h} X_{t} X_{t}'  \right)^{-1}, \]
and
\[\hat{\Sigma}(h) \equiv \frac{1}{T-h}\sum_{t=1}^{T-h} \hat{u}_t(h)\hat{u}_t(h)'.\]
The $1-\alpha$ confidence interval for $\nu'\beta_1(A,h)$ is defined as
\[\hat{C}_1(h,\nu,\alpha) \equiv \left[ \nu'\hat{\beta}_1(h)-z_{1-\alpha/2}\: \hat{s}_1(h,\nu)\:,\: \nu'\hat{\beta}_1(h)+z_{1-\alpha/2}\: \hat{s}_1(h,\nu) \right].\]

\paragraph{Parameter Space.}
We consider a class of VAR processes with possibly multiple unit roots combined with arbitrary stationary dynamics. Specifically, we will prove that the confidence interval $\hat{C}_1(h,\nu,\alpha)$ has uniformly valid coverage over the following parameter space. Let $\|M\| \equiv \sqrt{\tr(M'M)}$ denote the Frobenius matrix norm, and let $I_n$ denote the $n \times n$ identity matrix.

\begin{defn}[VAR parameter space] \label{dfn:param_space}
Given constants $a \in [0,1)$, $C>0$, and $\epsilon \in (0,1)$, let $\mathcal{A}(a,C,\epsilon)$ denote the space of autoregressive coefficients $A=(A_1, \ldots, A_{p})$ such that the associated $p$-dimensional lag-polynomial $A(L)=I_n-\sum_{\ell=1}^p A_\ell L^\ell$ admits the factorization
\begin{equation}
A(L) = B(L) (I_n - \diag(\rho_1, \ldots, \rho_n) L),
\end{equation}
where $\rho_i \in [a-1,1-a]$ for all $i=1,\dots,n$, and $B(L)$ is a lag polynomial of order $p-1$ with companion matrix $\mathbf{B}$ satisfying $\| \mathbf{B}^\ell \| \leq C (1-\epsilon)^\ell$ for all $\ell = 1,2,\dots$.\footnote{See \cref{sec:var_proof_main} for the standard definition of a companion matrix.}
\end{defn}

This parameter space contains any stationary VAR process (for sufficiently small $a,\epsilon$ and sufficiently large $C$) as well as many---but not all---non-stationary processes. Lag polynomials $A(L)$ in this parameter space imply that the process $\lbrace y_t \rbrace$ can be written in the form $y_t = \diag(\rho_1,\dots,\rho_n)y_{t-1} + \tilde{y}_t$, where $\tilde{y}_t \equiv B(L)^{-1}u_t$ is a stationary process whose impulse responses at horizon $\ell$ decay at the geometric rate $(1-\epsilon)^\ell$. We allow all the roots $\rho_1,\dots,\rho_n$ to be potentially close to or equal to 1. \citet[Section 4.2]{Mikusheva2012} considers the same class of processes but with $\rho_2=\dots=\rho_n=0$. We are not aware of other uniform inference results that allow multiple near-unit roots. Although the parameter space in \cref{dfn:param_space} appears more restrictive than the local-to-unity framework of \citet[Eqn. 2]{Phillips1988}, we argue below that our uniform coverage result applied to the parameter space $\mathcal{A}(a,C,\epsilon)$ immediately implies an extended result that also covers processes with cointegration among the control variables $y_{2,t},\dots,y_{n,t}$. However, we do impose the restriction that the response variable of interest $y_{1,t}$ has at most one root near unity, as in \citet{Wright2000}, \citet{Pesavento2006}, \citet{Mikusheva2012}, and \citet{Inoue2020}.

\subsection{Additional Assumptions} 
\label{sec:var_asn}

Our main result requires two further technical assumptions in addition to \cref{asn:u_mds}. Let $\lambda_{\min}(M)$ denote the smallest eigenvalue of a symmetric positive semidefinite matrix $M$.

\begin{asn} \label{asn:var_u_reg}
\leavevmode
\begin{enumerate}[i)] 
\item \label{itm:var_asn_u_bounds} $E(\|u_t\|^8)<\infty$,  and there exists $\delta > 0$ such that $\lambda_{\min}(E[u_tu_t' \mid \lbrace u_s \rbrace_{s<t}]) \geq \delta$ almost surely.
\item \label{itm:var_asn_u2_cum} The process $\lbrace u_t \otimes u_t \rbrace$ has absolutely summable cumulants up to order 4.
\end{enumerate}
\end{asn}

\noindent Part (\ref{itm:var_asn_u_bounds}) of \cref{asn:var_u_reg} is a common requirement for consistent estimation of regression standard errors with possibly heteroskedastic residuals. Part (\ref{itm:var_asn_u2_cum}) is a standard weak dependence restriction on the second moments of $u_t$ \citep[Chapter 2.6]{Brillinger2001}.

We will write $\rho(A)=(\rho_1(A),\dots,\rho_n(A))'$ to represent any of the possible vectors of roots $\rho_1,\dots,\rho_n$ corresponding to a  collection of autoregressive coefficients $A=(A_1, \ldots, A_{p}) \in \mathcal{A}(0,C,\epsilon)$. This is a slight abuse of notation, since the mapping from $A(L)$ to $\rho_i$'s is one-to-many. Define $g(\rho,h)^2 \equiv \min\lbrace \frac{1}{1-|\rho|},h\rbrace$ and $\rho_i^*(A,\epsilon) \equiv  \max\lbrace  |\rho_{i}(A)|, 1-\epsilon/2 \rbrace$. Define also the $np \times np$ diagonal matrix $G(A,h,\epsilon) \equiv I_p \otimes \diag(g(\rho_1^*(A,\epsilon) ,h),\dots,g(\rho_n^*(A,\epsilon),h))$.

\begin{asn} \label{asn:var_XpX}
For any $C>0$ and $\epsilon \in (0,1)$,
\[\lim_{K \to \infty}  \lim_{T\to\infty} \inf_{A \in \mathcal{A}(0,C,\epsilon)} P_A\left( \lambda_{\min}\left(G(A,T,\epsilon)^{-1}\left[\frac{1}{T}\sum_{t=1}^T X_tX_t'\right]G(A,T,\epsilon)^{-1}  \right) \geq 1/K \right) = 1.\]
\end{asn}
\noindent This high-level assumption ensures that the properly scaled (matrix) ``denominator'' in the VAR OLS estimator $\hat{A}(h)$ is uniformly non-singular asymptotically, so the estimator is uniformly well-defined with high probability in the limit. Hence, the assumption is essentially necessary for our result. 

How can \cref{asn:var_XpX} be verified? $G(A_T,T,\epsilon)^{-1}\left[\frac{1}{T}\sum_{t=1}^T X_tX_t'\right]G(A_T,T,\epsilon)^{-1}$ is known to converge in distribution in a \emph{pointwise} sense to an almost surely positive definite (perhaps stochastically degenerate) random matrix under stationary, local-to-unity, or unit root sequences $\lbrace A_T \rbrace$ \citep[e.g.,][]{Phillips1988,Hamilton1994}.\footnote{Note that the diagonal entries of $G(A,T,\epsilon)^{-1}$ are constants for stationary VAR coefficient matrices $A$, whereas these diagonal entries are proportional to $T^{-1/2}$ under local-to-unity or unit root sequences.} \cref{asn:var_XpX} requires that such convergence obtains for \emph{all} possible sequences $\lbrace A_T \rbrace$. In \cref{sec:var_XpX_ar1} we illustrate how the assumption can be verified in the AR(1) model under an additional weak condition on the innovation process.

\subsection{Main Result}

We now state the result that the LP estimator $\nu'\hat{\beta}_1(h)$ is asymptotically normally distributed uniformly over the parameter space in \cref{dfn:param_space}, even at long horizons $h$. Let $P_A$ denote the probability measure of the data $\lbrace y_t \rbrace$ when it is generated by the VAR($p$) model \eqref{eqn:var} with coefficients $A \in \mathcal{A}(a,C,\epsilon)$. The distribution of the innovations $\lbrace u_t \rbrace$ is fixed.

\begin{prop} \label{thm:var_lp_inference}
Let \cref{asn:u_mds,asn:var_u_reg,asn:var_XpX} hold. Let $C>0$ and $\epsilon \in (0,1)$.
\begin{enumerate}[i)]
\item \label{itm:var_lp_inference_stat} Let $a \in (0,1)$. For all $x \in \mathbb{R}$,
\[\sup_{A \in \mathcal{A}(a,C,\epsilon)} \sup_{1 \leq h \leq (1-a)T} \left| P_A\left(\frac{\nu'[\hat{\beta}_1(h)-\beta_1(A,h)]}{\hat{s}_1(h,\nu)} \leq x\right) - \Phi(x) \right| \to 0.\] 
\item \label{itm:var_lp_inference_all} Consider any sequence $\lbrace \bar{h}_T \rbrace$ of nonnegative integers such that $\bar{h}_T < T$ for all $T$ and $\bar{h}_T/T \to 0$. Then for all $x \in \mathbb{R}$,
\[\sup_{A \in \mathcal{A}(0,C,\epsilon)} \sup_{1 \leq h \leq \bar{h}_T} \left| P_A\left(\frac{\nu'[\hat{\beta}_1(h)-\beta_1(A,h)]}{\hat{s}_1(h,\nu)} \leq x\right) - \Phi(x) \right| \to 0.\]
\end{enumerate}
\end{prop}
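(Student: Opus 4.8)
The plan is to establish asymptotic normality of the studentized statistic by showing that the numerator, properly scaled, is asymptotically normal uniformly over the parameter space, and that the standard error $\hat{s}_1(h,\nu)$ consistently estimates the correct scale, again uniformly. The central algebraic device is the representation in \eqref{eqn:var_long_regression}: by the Frisch–Waugh–Lovell theorem, $\nu'\hat{\beta}_1(h)$ equals the coefficient from regressing the residualized outcome on the residualized regressor $\hat{u}_t(h)\nu$-direction, so that
\begin{equation*}
\nu'[\hat{\beta}_1(h)-\beta_1(A,h)] = \frac{\nu'\hat{\Sigma}(h)^{-1}\frac{1}{T-h}\sum_{t=1}^{T-h}\hat{u}_t(h)\xi_{1,t}(A,h)}{1}+\text{(small)},
\end{equation*}
where I would first show that replacing the estimated residualized regressor $\hat{u}_t(h)$ by the population innovation $u_t$, and replacing $\hat{\Sigma}(h)$ by its probability limit, introduces only asymptotically negligible error uniformly. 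This step leans on \cref{asn:var_XpX} to control the ``denominator'' $\frac{1}{T}\sum X_tX_t'$ after rescaling by $G(A,h,\epsilon)$, and on \cref{asn:var_u_reg}(\ref{itm:var_asn_u_bounds}) to keep the innovation second-moment matrix bounded below.

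Second, I would reduce the numerator to a martingale-difference sum. The key object is the regression score $\xi_{1,t}(A,h)\,u_t$. Exactly as in the scalar computation in \eqref{eqn:scores_uncorr}, \cref{asn:u_mds} implies that the scores are serially uncorrelated, so the heteroskedasticity-robust (not autocorrelation-robust) variance is the right asymptotic variance. More strongly, I would verify that $\lbrace \xi_{1,t}(A,h)u_t\rbrace$ forms a (reverse-time) martingale difference array and apply a martingale central limit theorem to the scalar projection $\nu'\hat{\Sigma}(h)^{-1}u_t\,\xi_{1,t}(A,h)$. The summability-of-cumulants condition in \cref{asn:var_u_reg}(\ref{itm:var_asn_u2_cum}) together with the eighth-moment bound supplies the Lindeberg/negligibility conditions and controls the conditional variance, allowing the CLT to hold uniformly (via a subsequence argument: any sequence $A_T$ in the parameter space admits a convergent subsequence of the relevant rescaled quantities, and normality along every such subsequence yields the uniform sup statement).

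Third, I would prove uniform consistency of the variance estimator appearing inside $\hat{s}_1(h,\nu)^2$, namely that $\frac{1}{T-h}\sum \hat{\xi}_{1,t}(h)^2\hat{u}_t(h)\hat{u}_t(h)'$ converges, after the appropriate rescaling, to the long-run variance of the scores, uniformly over $A\in\mathcal{A}$ and over the admissible horizon range. This requires handling the estimation error in $\hat\beta_1(h)$, $\hat\gamma_1(h)$, and $\hat A(h)$ simultaneously, and showing that the cross terms vanish; this is where the eighth-moment assumption is used, since the variance estimator involves fourth powers of the data. The horizon uniformity is delicate because $\xi_{1,t}(A,h)$ is a sum of up to $h$ terms: its variance grows with $h$, which is why the restriction $\bar h_T/T\to 0$ (part~\ref{itm:var_lp_inference_all}) or $h\le(1-a)T$ in the stationary case (part~\ref{itm:var_lp_inference_stat}) is needed to ensure the effective sample size $T-h$ dominates and the self-normalization in the studentized statistic washes out the $h$-dependent scale.

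The main obstacle I anticipate is the simultaneous uniformity over both the persistence parameter $\rho(A)$ and the horizon $h$. The difficulty is that the rate of convergence of $\hat A(h)$ is not constant over the parameter space—it is $\sqrt{T}$ in the stationary interior but degenerate directions appear near unit roots—so the scaling matrix $G(A,h,\epsilon)$ must be threaded through every estimate, and one must show that the self-normalized statistic is invariant to which rate regime one is in. Establishing that the studentization exactly cancels the $G$-scaling in the limit, uniformly over all sequences $A_T$ (including those where some roots drift toward $1$ at various rates while others stay fixed, and simultaneously $h=h_T$ grows), is the technical crux; it is precisely this cancellation that makes normal critical values valid without knowing the persistence regime, and it is the step most likely to require the full strength of \cref{asn:var_XpX}.
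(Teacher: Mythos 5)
Your plan follows essentially the same route as the paper's proof: the Frisch--Waugh reduction to a regression on $\hat{u}_t(h)$, the observation that the scores $\xi_{1,t}(A,h)u_t$ are serially uncorrelated and in fact form a reverse-time martingale difference array amenable to a martingale CLT, the reduction of uniformity to arbitrary drifting sequences $\lbrace A_T, h_T\rbrace$, and the cancellation of the $G(A,h,\epsilon)$-scaling through self-normalization are precisely the steps in \cref{sec:var_proof_main} and its auxiliary lemmas. The only minor discrepancy is in the bookkeeping of assumptions: the Lindeberg/negligibility condition is delivered by the eighth-moment bound together with the conditional variance lower bound (\cref{thm:var_res_4th_bound}), while the cumulant summability in \cref{asn:var_u_reg}(\ref{itm:var_asn_u2_cum}) is used for the consistency of the sample variance of the scores (\cref{thm:var_se_infeas}).
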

\begin{proof}
See \cref{sec:var_proof_main}.
\end{proof}
The uniform asymptotic normality established above immediately implies that the confidence interval $\hat{C}_1(h,\nu,\alpha)$ has uniformly valid coverage asymptotically. Part (\ref{itm:var_lp_inference_stat}) considers stationary VAR processes whose largest roots are bounded away from 1; then inference is valid even at long horizons $h=h_T \propto T$. Part (\ref{itm:var_lp_inference_all}) allows all or some of the $n$ roots $\rho_1,\dots,\rho_n$ to be near or equal to 1, but then we require $h_T/T \to 0$.

\paragraph{Remarks.}
\begin{enumerate}
\item The proof of \cref{thm:var_lp_inference} shows that the uniform convergence rate of $\hat{\beta}_1(h_T)$ is $O_p((h_T/T)^{1/2})$ if $h_T/T \to 0$. This rate may be slower than that of the possibly super-consistent non-augmented LP estimator, which is the price to pay for uniformity. If we restrict attention to the stationary parameter space $\mathcal{A}(a,C,\epsilon)$, $a>0$, the convergence rate of $\hat{\beta}_1(h_T)$ is $O_p(T^{-1/2})$ provided that $h_T \leq (1-a)T$.

\item There are three main challenges in establishing the uniform validity of local projection inference.
\begin{enumerate}[a)]
\item The variance of the regression residual $\xi_{1,t}(A,h)$ is increasing in the horizon $h$ and also depends on $A$. Thus, the simplest laws of large numbers and central limit theorems for stationary processes do not apply. We instead apply a central limit theorem for martingale difference sequences and derive uniform bounds on moments of relevant variables. The central limit theorem is delicate, since the regression scores $\xi_{1,t}(A,h)u_t$ are not a martingale difference sequence with respect to the natural filtration generated by past $u_t$'s. However, it is possible to ``reverse time'' in a way that makes the scores a martingale difference sequence with respect to an alternative filtration, see the proof of the auxiliary \cref{thm:var_clt}.
\item To handle both unit roots, stationary processes, and everything in between, we must consider various kinds of sequences of drifting parameters $A=A_T$, following the general logic of \citet{Andrews2019}. This is primarily an issue when showing consistency of the standard error $\hat{s}_1(h,\nu)$, which requires deriving the convergence rates of the various estimators along drifting parameter sequences. We do this by explicit calculation of moment bounds that are uniform in the both the DGP and the horizon.

\item Our proof requires bounds on the rate of decay of impulse response functions that are uniform in both the DGP and the horizon. Though the AR(1) case is trivial due to the monotonically decreasing exponential functional form $\beta(\rho,h)=\rho^h$, the bounds for the general VAR($p$) case require more work, see especially \cref{thm:var_bound_for_IRFs_A} in \cref{sec:var_se_proof}. These results may be of independent interest.
\end{enumerate}

\item \cref{thm:var_lp_inference} does not cover the case where $h \propto T$ and some of the roots $\rho_i$ are local-to-unity or equal to unity. Simulation evidence and analytical calculations along the lines of \citet{Hjalmarsson2020} strongly suggest that even in the AR(1) model the asymptotic normality of lag-augmented local projections does \emph{not} go through when $\rho=1$ and $h = \kappa T$ for $\kappa \in (0,1)$. Indeed, in this case the sample variance of the regression scores $\xi_t(\rho,h)u_t$ appears not to  converge in probability to a constant, thus violating the conclusion of the key auxiliary \cref{thm:var_se_infeas} below. As discussed in \cref{sec:comp}, the behavior of plug-in autoregressive impulse response estimators is also non-standard when $\rho \approx 1$ and $h \propto T$.

\item A corollary of our main result is that we can allow for cointegrating relationships to exist among the control variables $y_{2,t},\dots,y_{n,t}$. This is because both the LP estimator and the reduced-form impulse responses are equivariant with respect to non-singular linear transformations of these $n-1$ variables. For example, consider a 3-dimensional process $(y_{1,t},y_{2,t},y_{3,t})$ that follows a VAR model in the parameter space in \cref{dfn:param_space} with $\rho_2=1,\rho_3=0$. Now consider the transformed process $(y_{1,t},\tilde{y}_{2,t},\tilde{y}_{3,t}) = (y_{1,t}, y_{2,t} + y_{3,t}, -y_{2,t} + y_{3,t})$. The variables $\tilde{y}_{2,t}$ and $\tilde{y}_{3,t}$ are cointegrated with cointegrating vector $(1,1)'$. Since $(\tilde{y}_{2,t},\tilde{y}_{3,t})$ is a non-singular linear transformation of $(y_{2,t},y_{3,t})$, the conclusions of \cref{thm:var_lp_inference} apply also to the transformed data vector.

\item If the vector of innovations $u_t$ were observed, an alternative estimator would regress $y_{1,t+h}$ onto $u_t$ and $y_{t-1},\dots,y_{t-p}$. As discussed in \cref{sec:ar1_intuition}, this estimator is numerically equivalent with $\hat{\beta}_1(h)$, so the uniformity result carries over.

\item It is easily verified in our proofs that, rather than initializing the process at zero, we can allow the initial conditions $y_0,\dots,y_{1-p}$ to be random variables that are independent of the innovations $\lbrace u_t\rbrace_{t \geq 1}$, as long as $E[\|y_\ell\|^4] < \infty$ for $\ell \leq 0$.

\end{enumerate}

\section{Bootstrap Implementation}
\label{sec:boot}
In this section we describe the bootstrap implementation of lag-augmented local projection that we recommend for practical use. We find in simulations that the bootstrap procedure is effective at correcting small-sample coverage distortions. These distortions arise primarily due to the small-sample bias of local projection, which \citet{Herbst2020} show is analogous to the well-known bias of the VAR OLS estimator \citep{Kilian1998}. 

Our baseline algorithm is based on a wild autoregressive bootstrap design, which allows for heteroskedastic VAR innovations \citep{Goncalves2004} as in our theoretical results. Guided by simulation evidence, we construct the bootstrap confidence interval using the equal-tailed percentile-t method, which has a built-in bias correction (\citealp{Kilian1998}; \citealp[Chapter 12.2.6]{Kilian2017}). 

The bootstrap procedure for computing a $1-\alpha$ confidence interval proceeds as follows, assuming a VAR($p$) model:
\begin{enumerate}
    \item Compute the impulse response estimate of interest $\nu'\hat{\beta}_1(h)$ and its standard error $\hat{s}_1(h,\nu)$ by lag-augmented local projection as in \cref{sec:var_model}.
    \item \label{itm:boot_var} Estimate the VAR($p$) model by OLS without lag augmentation. Compute the corresponding VAR residuals $\hat{u}_t$. Bias-adjust the VAR coefficients using the formula in \citet{Pope1990} (this adjustment is optional, but improves finite-sample performance).
    \item Compute the impulse response of interest implied by the VAR model estimated in step \ref{itm:boot_var}. Denote this impulse response by $\nu'\hat{\beta}_\text{1,VAR}(h)$.
    \item For each bootstrap iteration $b=1,\dots,B$:
    \begin{enumerate}[i)]
        \item Generate bootstrap residuals $\hat{u}_t^* \equiv U_t \hat{u}_t$, $t=1,\dots,T$, where $U_t \stackrel{i.i.d.}{\sim} N(0,1)$ are computer-generated random variables that are independent of the data.
        \item Draw a block of $p$ initial observations $(y_1^*,\dots,y_p^*)$ uniformly at random from the $T-p+1$ blocks of $p$ observations in the original data.
        \item Generate bootstrap data $y_t^*$, $t=p+1,\dots,T$, by iterating on the bias-corrected VAR($p$) model estimated in step \ref{itm:boot_var}, using the innovations $\hat{u}_t^*$.
        \item Apply the lag-augmented LP estimator to the bootstrap data $\lbrace y_t^* \rbrace$. Denote the impulse response estimate and its standard error by $\nu'\hat{\beta}(h)^*$ and $\hat{s}_1(h,\nu)^*$, respectively.
        \item Store $\hat{T}_b^* \equiv (\nu'\hat{\beta}_1(h)^*-\nu'\hat{\beta}_\text{1,VAR}(h))/\hat{s}_1(h,\nu)^*$.\footnote{It is critical that the bootstrap t-statistic $\hat{T}_b^*$ is centered at the VAR-implied impulse response $\nu'\hat{\beta}_\text{1,VAR}(h)$ rather than the LP-estimated impulse response $\nu'\hat{\beta}_1(h)$. This is because the former estimate is the pseudo-true parameter in the recursive bootstrap DGP, and the latter estimate differs from the former by an amount that is not asymptotically negligible.}
    \end{enumerate}
    \item Compute the $\alpha/2$ and $1-\alpha/2$ quantiles of the $B$ draws of $\hat{T}_b^*$, $b=1,\dots,B$. Denote these by $\hat{Q}_{\alpha/2}$ and $\hat{Q}_{1-\alpha/2}$, respectively.
    \item Return the percentile-t confidence interval\footnote{It is not valid to use the Efron bootstrap confidence interval based on the bootstrap quantiles of $\hat{\beta}(h)^*$. This is because the bootstrap samples are asymptotically centered around $\hat{\beta}_\text{VAR}(h)$, not $\hat{\beta}(h)$.}
\[[\nu'\hat{\beta}_1(h)-\hat{s}_1(h,\nu)\hat{Q}_{1-\alpha/2},\nu'\hat{\beta}_1(h)-\hat{s}_1(h,\nu)\hat{Q}_{\alpha/2}].\]
\end{enumerate}

Instead of the above recursive VAR design, it is also possible to use the standard fixed-design pairs bootstrap, as in any linear regression with serially uncorrelated scores.\footnote{This is the bootstrap carried out by Stata's {\tt bootstrap} command with standard settings.} In this case, the usual Efron bootstrap confidence interval is valid, like the percentile-t interval. However, simulations suggest that the pairs bootstrap procedure is less accurate in small samples than the above recursive bootstrap design, \alert{mirroring the results in \cite{Goncalves2004} for autoregressive inference}.

Our online code repository implements the above recommended bootstrap procedure, as well as several alternative LP- and VAR-based procedures, see \cref{fn:github}.

\section{Conclusion and Directions for Future Research}
\label{sec:conc}

Local projection inference is already popular in the applied macroeconomics literature. The simple nature of local projections has allowed the methods of causal analysis in macroeconomics to connect with the rich toolkit for program evaluation in applied microeconomics; see for example \citet{Angrist2018}, \citet{Nakamura2018}, \citet{Stock2018}, and \citet{rambachan2019econometric}. We hope the novel results in this paper on the statistical properties of local projections may further this convergence.

\paragraph{Recommendations for Applied Practice.}
The simplicity and statistical robustness of \emph{lag-augmented} local projection inference makes it an attractive option relative to  existing inference procedures. We recommend that applied researchers conduct inference based on lag-augmented local projections with heteroskedasticity-robust (Eicker-Huber-White) standard errors. This procedure can be implemented using  any regression software and has desirable theoretical properties relative to textbook delta method autoregressive inference and to non-augmented local projection methods. In particular, we showed that confidence intervals based on lag-augmented local projections that use robust standard errors with standard normal critical values are uniformly valid over the persistence in the data and for a wide range of horizons. We also suggested a simple bootstrap implementation in \cref{sec:boot}, which seems to achieve even better finite-sample performance.

Conventional VAR-based procedures deliver smaller standard errors than local projections in many cases, but this comes at the cost of fragile coverage, especially at longer horizons. In our opinion, there are only two cases in which the lag-augmented local projection inference method is inferior to competitors: (i) If the data is known to be at most moderately persistent and interest centers on very short impulse response horizons, in which case textbook VAR inference is valid and efficient. (ii) When the data has (near-)unit roots and interest centers on horizons that are a substantial fraction of the sample size, in which case the computationally demanding AR grid bootstrap may be deployed if feasible \citep{Hansen1999,Mikusheva2012}. In all other cases, lag-augmented local projection inference appears to achieve a competitive  trade-off between robustness and efficiency.

How should the VAR lag length $p$ be chosen in practice? Naive pre-testing for $p$ causes uniformity issues for subsequent inference \citep{Leeb2005}. Though we leave the development of a formal procedure for future research (see below), our theoretical analysis yields three insights. First, users of local projection should worry about the choice of $p$ in order to obtain robust inference, just as users of VAR methods do. Second, $p$ should be chosen conservatively, as is conventional in VAR analysis \citep[Chapter 2.6.5]{Kilian2017}. \alert{In our framework there is no asymptotic efficiency cost of controlling for more than $p_0$ lags if the true model is a VAR($p_0$), and the simulation results in \cref{sec:sim_further} confirm that the cost is also small in finite samples.} Third, the logic of \cref{sec:ar1_intuition} suggests that in realistic models where the higher-lag VAR coefficients are relatively small, it is not crucial to get $p$ exactly right: What matters is that we include enough control variables so that the effective regressor of interest approximately satisfies the conditional mean independence condition (\cref{asn:u_mds}).

\paragraph{Directions for Future Research.}
It would be interesting to relax the assumption of a finite lag length $p$ by adopting a VAR($\infty$) framework. We are not aware of existing work on uniform inference in such settings. One possibility would be to base inference on a sieve VAR framework that lets the lag length used for estimation tend to infinity at an appropriate rate as in \citet{Goncalves2007}. A second possibility is to impose \emph{a priori} bounds on the rate of decay of the VAR coefficients, and then take the resulting worst-case bias of finite-$p$ local projection estimators into account when constructing confidence intervals \citep[as in the ``honest inference'' approach of][]{Armstrong_Kolesar:2018}. 

Due to space constraints, we leave a proof of the validity of the suggested bootstrap strategy to future work. It appears straight-forward, albeit tedious, to prove its pointwise validity. Proving uniform validity requires extending the already lengthy proof of \cref{thm:var_lp_inference}.

Several extensions of the results in this paper could be pursued by adopting techniques from the VAR literature. First, the results of \citet{PlagborgMoller2019} suggest straight-forward ways to generalize our results on reduced-form impulse response inference to \emph{structural} inference. Second, our assumption of no deterministic dynamics in the VAR model could presumably be relaxed using standard arguments. Third, by considering linear system estimators rather than single-equation OLS, our results on scalar inference could be extended to simultaneous inference on several impulses \citep{IK2016,MontielOlea2019}. Finally, whereas we adopt a frequentist perspective in this paper, it remains an open question whether local projection inference is relevant from a Bayesian perspective.

\appendix

\numberwithin{asn}{section}

\section{Proof of \texorpdfstring{\cref{thm:var_lp_inference}}{Proposition \ref{thm:var_lp_inference}}} \label{sec:var_proof_main}

\paragraph{Notation.}
We first introduce some additional notation. For $p \geq 1$, the \emph{companion matrix} of the VAR($p$) model \eqref{eqn:var} is the $np \times np$ matrix given by
\begin{equation} \label{eqn:var_companion}
\mathbf{A} =
\begin{bmatrix}
A_1 & A_2 & \ldots & A_{p-1} & A_{p }\\
I_{n} & 0 & \ldots  & 0 & 0 \\
0 & I_n &   & 0& 0 \\
\vdots & & \ddots  & \vdots &\vdots \\
0 &  0 & \ldots & I_n & 0 \\
\end{bmatrix},
\end{equation}
where $A_1, \ldots, A_{p}$ are the slope coefficients of the autoregressive model \citep[p. 25]{Kilian2017}. The companion matrix of a VAR with no lags is defined as a the $n \times n$ matrix of zeros. 

Recall that $\|M\| \equiv \sqrt{\tr(M'M)}$ denotes the Frobenius norm of the matrix $M$. This norm is sub-multiplicative: $\|M_1M_2\| \leq \|M_1\| \times \|M_2\|$. We use $\lambda_{\min}(M)$ to denote the smallest eigenvalue of the symmetric positive semidefinite matrix $M$.

Denote $\Sigma \equiv E(u_tu_t')$, and note that this matrix is positive definite by \cref{asn:var_u_reg}(\ref{itm:var_asn_u_bounds}). Define, for any collection of autoregressive coefficients $A$, for any $h \in \mathbb{N}$, and for an arbitrary vector $w \in \mathbb{R}^n$:
\begin{equation} \label{eqn:var_vph}
v(A,h,w) \equiv \lbrace E[\xi_{1,t}(A,h)^2 (w'u_t)^2] \rbrace^{1/2},
\end{equation}
where
\begin{equation}
\xi_{i,t}(A,h) \equiv  \sum_{\ell = 1}^{h} \beta_{i}(A,h-\ell)'
 u_{t+\ell},\quad i=1,\dots,n.
\end{equation}
The $n \times 1$ vector $\beta_{i}(A,h)$ contains each of variable $i$'s impulse response coefficients at horizon $h \geq 1$:
\begin{equation} \label{eqn:var_IRF_companion}
\beta_{i} (A, h)' \equiv e_i(n)' J \mathbf{A}^{h} J',   
\end{equation}
where $J \equiv [ I_{n}, 0_{n \times n(p-1)} ]$ and $e_{i}(n)$ is the $i$-th column of the identity matrix of dimension $n$.

Finally, recall the notation $\rho_i(A)$, $g(\rho,h)$, $\rho_i^*(A,\epsilon)$, and $G(A,h,\epsilon)$ introduced in \cref{sec:var_asn}.

In the proofs below we simplify notation by omitting the subscript $A$ (which indexes the data generating process) from expectations, variances, covariances, and so on.

\paragraph{Proof.} We have defined the lag-augmented local projection estimator of $\beta_{1}(A,h)$ as the vector of coefficients on $y_{t}$ in the regression of $y_{1,t+h}$ on $y_{t}$ with controls $X_{t} \equiv (y_{t-1}', \ldots, y_{t-p}')$.  By the Frisch-Waugh theorem, we can also obtain the coefficient of interest by regressing $y_{1,t+h}$ on the VAR residuals:

\begin{equation} \label{eqn:var_betahat}
\hat{\beta}_{1}(h) \equiv  \left( \sum_{t=1}^{T-h} \hat{u}_{t}(h) \hat{u}_{t}(h)'   \right)^{-1} \sum_{t=1}^{T-h} \hat{u}_{t}(h) y_{1,t+h},
\end{equation}
where we recall the definitions
\[ \hat{u}_{t}(h) \equiv y_{t} -\hat{A}(h)X_{t}, \quad \hat{A}(h) \equiv \left( \sum_{t=1}^{T-h} y_{t}X_{t}' \right) \left ( \sum_{t=1}^{T-h} X_{t} X_{t}'  \right)^{-1}.\]
Recall also from \eqref{eqn:var_long_regression} that
\begin{eqnarray}
y_{1,t+h} &=&  \beta_1(h, A)' y_{t} + \sum_{\ell=1}^{p-1} \delta_{1,\ell}(A,h)^{\prime}y_{t-\ell}  + \xi_{1,t}(A,h) \nonumber \\   
&=&   \beta_1(h, A)' y_{t} + \gamma_{1}(A,h)'X_{t}  + \xi_{1,t}(A,h) \nonumber \\
&& \textrm{(where the last $n$ entries of $\gamma_{1}(A,h)$ are zero)} \nonumber \\
&=&  \beta_1(h, A)' (y_{t} - A X_{t}  ) + \underbrace{(\beta_1(h, A)' A  + \gamma_{1}(A,h)')}_{\equiv \eta_1(A,h)'}X_{t} + \xi_{1,t}(A,h). \label{eqn:var_long_regression_u} 
\end{eqnarray}
Using the definition \eqref{eqn:var_betahat} of the lag-augmented local projection estimator, we have
\begin{align*}
\hat{\beta}_{1}(h) &=  \left( \sum_{t=1}^{T-h} \hat{u}_{t}(h) \hat{u}_{t}(h)'   \right)^{-1} \sum_{t=1}^{T-h} \hat{u}_{t}(h) y_{1,t+h} \\
&=  \left( \sum_{t=1}^{T-h} \hat{u}_{t}(h) \hat{u}_{t}(h)'   \right)^{-1}  \sum_{t=1}^{T-h} \hat{u}_{t}(h)[u'_t \beta_{1}(A,h) + X'_{t} \eta_1(A,h) + \xi_{1,t}(A,h)] \\
&  \textrm{(by equation \eqref{eqn:var_long_regression_u})} \\
&= \left( \sum_{t=1}^{T-h} \hat{u}_{t}(h) \hat{u}_{t}(h)'   \right)^{-1} \sum_{t=1}^{T-h} \hat{u}_t(h) [u'_t  \beta_1(\rho,h) + \xi_{1,t}(A,h)] \\
&\textrm{(because $\textstyle \sum_{t=1}^{T-h} \hat{u}_t(h)X'_{t}=0$ by definition of $\hat{u}_t(h)$)}\\
&= \beta_1(A,h) + \left( \sum_{t=1}^{T-h} \hat{u}_{t}(h) \hat{u}_{t}(h)'   \right)^{-1}  \sum_{t=1}^{T-h} \hat{u}_t(h) [(u_t-\hat{u}_t(h))'\beta_{1}(A,h)  + \xi_{1,t}(A,h)]  \\
&= \beta_1(A,h) + \left( \sum_{t=1}^{T-h} \hat{u}_{t}(h) \hat{u}_{t}(h)'   \right)^{-1}  \sum_{t=1}^{T-h}  \hat{u}_t(h)\xi_{1,t}(A,h),
\end{align*}
where the last equality uses $u_t - \hat{u}_t(h)=(\hat{A}(h)-A)X_{t}$ and again $\sum_{t=1}^{T-h} \hat{u}_t(h) X'_{t}=0$ by definition of $\hat{u}_t(h)$. Define $\hat{\nu}(h) \equiv \hat{\Sigma}(h)^{-1}\nu$ and $\tilde{\nu} \equiv \Sigma^{-1}\nu$. Then
\begin{align*}
\frac{\nu'[\hat{\beta}_1(h)-\beta_1(A,h)]}{\hat{s}_1(h,\nu)} &=  \frac{\hat{\nu}(h)' \sum_{t=1}^{T-h} \hat{u}_t(h)  \xi_{1,t}(A,h)}{(T-h)\hat{s}_1(h,\nu)} \\
&= \left( \frac{\hat{\nu}(h)'\sum_{t=1}^{T-h} \xi_{1,t}(A,h) u_t }{(T-h)^{1/2}v(A,h,\tilde{\nu})} + \frac{\hat{\nu}(h)'\sum_{t=1}^{T-h} [\hat{u}_t(h)-u_t] \xi_{1,t}(A,h)}{(T-h)^{1/2}v(A,h,\tilde{\nu})}  \right) \\
& \qquad \times \frac{v(A,h,\tilde{\nu})}{(T-h)^{1/2}\hat{s}_1(h,\nu)}.
\end{align*}
Using the drifting parameter sequence approach of \citet{Andrews2019}, both statements (\ref{itm:var_lp_inference_stat}) and (\ref{itm:var_lp_inference_all}) of the proposition follow if we can show the following: For any sequence $\lbrace A_T \rbrace$ of autoregressive coefficients in $\mathcal{A}(0,C,\epsilon)$, and for any sequence $\lbrace h_T \rbrace$ of nonnegative integers satisfying $h_T \leq (1-a)T$ for all $T$ and $g(\max_i \lbrace |\rho_i(A)|\rbrace ,h_T)^2/(T-h_T) \to 0$, we have:

\begin{enumerate}[i)]
\item \label{itm:var_proof_i} $\frac{\sum_{t=1}^{T-h_T} \xi_{1,t}(A_T,h_T) (w'u_t) }{(T-h_T)^{1/2}v(A_T,h_T,w)} \underset{P_{A_T}}{\overset{d}{\to}} N(0,1)$, \quad \textrm{ for any $w \in \mathbb{R}^n \backslash \lbrace 0 \rbrace$}.
\item \label{itm:var_proof_ii} $\frac{(T-h_T)^{1/2}\hat{s}_1(h_T,\nu)}{v(A_T,h_T,\tilde{\nu})} \underset{P_{A_T}}{\overset{p}{\to}} 1$.
\item \label{itm:var_proof_iii} $\frac{\sum_{t=1}^{T-h} [\hat{u}_t(h)-u_t] \xi_{1,t}(A,h)}{(T-h_T)^{1/2}v(A_T,h_T,w)} \underset{P_{A_T}}{\overset{p}{\to}} 0$, \quad \textrm{ for any $w \in \mathbb{R}^n \backslash \lbrace 0 \rbrace$}.
\item \label{itm:var_proof_iv} $\hat{\nu}(h_T) \underset{P_{A_T}}{\overset{p}{\to}}  \nu$.
\end{enumerate}



\noindent Result (\ref{itm:var_proof_i}) follows from \cref{thm:var_clt} below. Result (\ref{itm:var_proof_ii}) follows from \cref{thm:var_se} below. Result (\ref{itm:var_proof_iii}) follows by bounding
\begin{align*}
&\frac{\left\|\sum_{t=1}^{T-h_T} \xi_{1,t}(A_T,h_T)[\hat{u}_t(h_T)-u_t]\right\|}{(T-h_T)^{1/2}v(A_T,h_T,w)} \\
&\leq (T-h_T)^{1/2}\left\|[\hat{A}(h_T)-A_T]G(A_T,T-h_T,\epsilon)\right\| \\
&\qquad \times \left\|\frac{\sum_{t=1}^{T-h_T} G(A_T,T-h_T,\epsilon)^{-1} X_t\xi_{1,t}(A_T,h_T)}{(T-h_T)v(A_T,h_T,w)}\right\| .
\end{align*}
The first factor on the right-hand side above is $O_{P_{A_T}}(1)$ by \cref{thm:var_estim_conv}(\ref{itm:var_estim_conv_iii}) below. The second factor on the right-hand side above tends to zero in probability by \cref{thm:var_estim_conv_numer} below. Thus, result (\ref{itm:var_proof_iii}) follows.

Finally, result (\ref{itm:var_proof_iv}) follows immediately from \cref{thm:var_Sigmahat_conv} below and the fact that $\Sigma$ is positive definite by \cref{asn:var_u_reg}(\ref{itm:var_asn_u_bounds}). \qed

\begin{lem}[Central limit theorem for $\xi_{i,t}(A,h) (w'u_t) $] \label{thm:var_clt}
Let \cref{asn:u_mds,asn:var_u_reg} hold. Let $i=1,\dots,n$. Let $\lbrace A_T \rbrace$ be a sequence of autoregressive coefficients in the parameter space $\mathcal{A}(0,\epsilon,C)$, and let $\lbrace h_T \rbrace$ be a sequence of nonnegative integers satisfying $T-h_T \to \infty$ and $g(\rho_i(A),h_T)^2/(T-h_{T}) \to 0$. Then
\[\frac{\sum_{t=1}^{T-h_T} \xi_{i,t}(A_T,h_T)(w'u_t)}{(T-h_T)^{1/2}v(A_T,h_T,w)} \underset{P_{A_T}}{\overset{d}{\to}} N(0,1),\]
for any $w \in \mathbb{R}^n \backslash \{0\}$.
\end{lem}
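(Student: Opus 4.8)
The plan is to recognize the normalized sum as the terminal value of a martingale under a \emph{time-reversed} filtration and then invoke a central limit theorem for martingale difference triangular arrays. Write $n \equiv T-h_T$ and define the normalized summands
\[
X_{T,t} \equiv \frac{\xi_{i,t}(A_T,h_T)(w'u_t)}{n^{1/2}\,v(A_T,h_T,w)}, \qquad t=1,\dots,n,
\]
so that the object of interest is $S_T = \sum_{t=1}^n X_{T,t}$. Because $\lbrace u_t\rbrace$ is strictly stationary (\cref{asn:u_mds}) and the coefficients defining $\xi_{i,t}(A_T,h_T)$ do not depend on $t$, each summand satisfies $E[X_{T,t}^2]=1/n$ by the definition of $v$ in \eqref{eqn:var_vph}; hence the array is variance-normalized with $E[\sum_t X_{T,t}^2]=1$, and no separate computation of the limiting variance is needed.

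First I would establish the martingale difference structure. The score $\xi_{i,t}(A_T,h_T)(w'u_t)$ depends only on $u_t,u_{t+1},\dots,u_{t+h_T}$, so it is measurable with respect to $\mathcal{G}_t \equiv \sigma(u_s : s\geq t)$, and $\lbrace\mathcal{G}_t\rbrace$ is increasing as $t$ \emph{decreases}. Reindexing the sum from $t=n$ down to $t=1$ therefore makes $\lbrace X_{T,t}\rbrace$ a martingale difference array with respect to $\lbrace\mathcal{G}_t\rbrace$: since $\xi_{i,t}(A_T,h_T)$ is $\mathcal{G}_{t+1}$-measurable,
\[
E[X_{T,t}\mid \mathcal{G}_{t+1}] = \frac{\xi_{i,t}(A_T,h_T)\,w'\,E[u_t\mid \lbrace u_s\rbrace_{s>t}]}{n^{1/2}v(A_T,h_T,w)} = 0,
\]
where the last equality is the content of the strengthened martingale difference condition in \cref{asn:u_mds}, applied after the tower property (conditioning on the sub-$\sigma$-algebra generated by future innovations only). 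This is exactly the ``reverse-time'' device anticipated in the statement and in display \eqref{eqn:scores_uncorr}.

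With the MDS structure in hand, the martingale CLT reduces the claim to two conditions: (a) convergence of the sum of conditional variances, $V_T^2 \equiv \sum_{t=1}^n E[X_{T,t}^2\mid \mathcal{G}_{t+1}] \overset{p}{\to} 1$, and (b) a conditional Lindeberg (Lyapunov) negligibility condition. Condition (b) is the routine part: using that $\xi_{i,t}$ is $\mathcal{G}_{t+1}$-measurable, the conditional variance simplifies to $E[X_{T,t}^2\mid\mathcal{G}_{t+1}] = n^{-1}v^{-2}\,\xi_{i,t}^2\, w'E[u_tu_t'\mid\lbrace u_s\rbrace_{s>t}]w$, and a Lyapunov bound follows from controlling $\sum_t E[X_{T,t}^4] \lesssim (nv^4)^{-1}E[\xi_{i,t}^4(w'u_t)^4]$; the eighth-moment bound in \cref{asn:var_u_reg}(\ref{itm:var_asn_u_bounds}) together with the uniform impulse-response bounds of \cref{thm:var_bound_for_IRFs_A} give $E[\xi_{i,t}^4(w'u_t)^4]\lesssim v^4$ and hence $\sum_t E[X_{T,t}^4] = O(1/n)\to 0$.

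The main obstacle is condition (a), a law of large numbers for the conditional variances. Since the summands $Z_t \equiv \xi_{i,t}^2\, w'E[u_tu_t'\mid\lbrace u_s\rbrace_{s>t}]w$ satisfy $E[Z_t]=v^2$ and form a strictly stationary sequence with effective memory of length $\asymp h_T$, it suffices to show $\var(V_T^2) = (n^2v^4)^{-1}\var(\sum_t Z_t) \to 0$. Bounding $\var(\sum_t Z_t)\leq n\sum_k |\cov(Z_0,Z_k)|$, the crux is the uniform estimate $\sum_k|\cov(Z_0,Z_k)| \lesssim g(\rho_i(A_T),h_T)^2\, v^4$, which yields $\var(V_T^2) \lesssim g(\rho_i(A_T),h_T)^2/n$, vanishing precisely under the maintained rate condition $g(\rho_i(A_T),h_T)^2/(T-h_T)\to 0$. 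Establishing this covariance bound is where the weak-dependence restriction enters: the fourth-order cumulant summability in \cref{asn:var_u_reg}(\ref{itm:var_asn_u2_cum}) controls the non-Gaussian contributions to $\cov(\xi_{i,t}^2,\xi_{i,t+k}^2)$ and to the fluctuations of the conditional second moment of $u_t$, while the uniform impulse-response decay bounds of \cref{thm:var_bound_for_IRFs_A} convert the overlap structure of $\xi_{i,t}$ and $\xi_{i,t+k}$ into the factor $g(\rho_i(A_T),h_T)^2$. I expect the bookkeeping here---keeping every bound uniform over both the drifting coefficient sequence $\lbrace A_T\rbrace$ and the horizon $\lbrace h_T\rbrace$, and verifying that the normalization $v\asymp g$ holds uniformly---to be the most delicate part of the argument.
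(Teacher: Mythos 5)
Your proposal is correct and follows essentially the same route as the paper: the time-reversed filtration $\sigma(u_s : s \geq t)$ to obtain a martingale difference array via \cref{asn:u_mds}, a Lyapunov/negligibility bound from the fourth-moment estimate (the paper's \cref{thm:var_res_4th_bound}), and a Chebyshev-type variance calculation---using the cumulant summability in \cref{asn:var_u_reg} and uniform impulse-response decay to extract the factor $g(\rho_i(A_T),h_T)^2/(T-h_T)\to 0$---for the law of large numbers of the squared scores (the paper's \cref{thm:var_se_infeas}). The only difference is that you verify the martingale CLT via the sum of conditional variances plus a conditional Lindeberg condition, while the paper uses the variant requiring $\sum_t \chi_{T,t}^2 \overset{p}{\to} 1$ plus maximal negligibility; both reduce to the same underlying covariance computation.
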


\begin{proof}
The definition of the multi-step forecast error implies 
\begin{equation}\label{eqn:var_sumauxiliary}
\sum_{t=1}^{T-h_T} \xi_{i,t}(A_{T},h_{T}) (w'u_t)  = \sum_{t=1}^{T-h_T}  \left( \beta_{i}(A_{T}, h_{T}-1)' u_{t+1} +  \ldots +\beta_{i}(A_{T}, 0)'  u_{t+h_{T}} \right) (w' u_t).
\end{equation}
The summands above do not form a martingale difference sequence with respect to a conventionally defined filtration of the form $\sigma(u_{t+h_T},u_{t+h_T-1},u_{t+h_T-2},\dots)$, even if $\lbrace u_t \rbrace$ is i.i.d. Instead, we will define a process that ``reverses time''. For any $T$ and any time period $1 \leq t \leq T-{h_T}$, define the triangular array and filtration
\begin{align*}
 \chi_{T,t} & = \frac{\xi_{i,T-h_T+1-t}(A_T,h_T) ( w' u_{T-h_{T}+1-t})} {  (T-h)^{1/2} v(A_{T}, h_{T},w),  },  \\
 \mathcal{F}_{T,t} & = \sigma( u_{T-h_{T}+1-t}, u_{T-h_{T}+2-t}, \ldots ).
\end{align*}
We say that we have reversed time because $\chi_{T,1}$ corresponds to the (scaled) last term that appears in the summation \eqref{eqn:var_sumauxiliary}; the term $\chi_{T,2}$ to the second-to-last term, and so on. By reversing time we have achieved three things. First, the sequence of $\sigma$-algebras is a \emph{filtration}:
\[ \mathcal{F}_{T,1} \subseteq \mathcal{F}_{T,2} \subseteq \ldots \subseteq \mathcal{F}_{T,T-h_{T}}. \]
Second, the process $\lbrace \chi_{T,t}\rbrace $ is adapted to the filtration $\lbrace \mathcal{F}_{T,t}\rbrace $, as $\chi_{T,t}$ is measurable with respect to $\mathcal{F}_{T,t}$ for all $t$. Third, the pair $\lbrace \chi_{T,t},\mathcal{F}_{T,t} \rbrace$ form a martingale difference array:
\begin{align*}
E[\chi_{T,t} \mid \mathcal{F}_{T,t-1}] & \propto  E[( \beta_{i}(A_{T},h_{T}-1)' u_{T-h_{T}+2-t}   \ldots + \beta_{i}(A_{T},0)'u_{T+1-t}) (w' u_{T-h_{T}+1-t})  \\
& \qquad \mid u_{T-h_{T}+2-t}, u_{T-h_{T}+3-t}, \dots   ] \\
&=( \beta_{i}(A_{T},h_{T}-1)' u_{T-h_{T}+2-t}   \ldots + \beta_{i}(A_{T},0)'u_{T+1-t}) \\
& \quad \times E[(w' u_{T-h_{T}+1-t}) \mid u_{T-h_{T}+2-t}, u_{T-h_{T}+3-t}, \dots   ] \\
&= 0,
\end{align*}
where the last equality follows from \cref{asn:u_mds}. 

Thus, we can apply the martingale central limit theorem in \citet[Thm. 24.3]{Davidson1994} to show that
\[\sum_{t=1}^{T-h_T} \chi_{T,t} \stackrel{d}{\to} N(0,1),\]
which is the statement of the lemma. We now verify the conditions of this theorem. First, by definition of $v(A,h,w)$,
\[\sum_{t=1}^{T-h_T} E[\chi_{T,t}^2] = 1.\]
Second, in \cref{thm:var_se_infeas} below we show (by means of Chebyshev's inequality)
\[\sum_{t=1}^{T-h_T} \chi_{T,t}^2 = \frac{\sum_{t=1}^{T-h_T}\xi_{i,t}(A_T,h_T)^2 (w' u_t)^2}{(T-h_T) v(A_{T},h_T,w)^2} \stackrel{p}{\to} 1.\]
Finally, we argue that $\max_{1 \leq t \leq T-h_T} |\chi_{T,t}(A_T,h_T)| \stackrel{p}{\to} 0$. By \citet[Thm. 23.16]{Davidson1994}, it is sufficient to prove that, for arbitrary $c>0$, we have
\[(T-h_T)E\left[\chi_{T,t}^2 \mathbbm{1}(|\chi_{T,t}|>c)\right] \to 0.\] Indeed,
\begin{align*}
&(T-h_T)E\left[\chi_{T,t}^2 \mathbbm{1}(|\chi_{T,t}|>c)\right] \\
& \leq (T-h_T)E\left[\chi_{T,t}^2  \mathbbm{1}(|\chi_{T,t}|>c) \times \frac{\chi_{T,t}^2}{c^2}\right] \\
& \leq (T-h_T)\frac{E[\chi_{T,t}^4]}{c^2} \\ 
&= \frac{1}{(T-h_T) c^2}E\left[\left|v(A_T,h_T,w)^{-1}\xi_{i, T-h_{T}+1-t}(A_T,h_T)(w' u_{T-h_{T}+1-t})\right|^4\right] \\
&\leq \frac{6E(\|u_t\| ^8)}{(T-h_T) \times  \delta^2 \times  \lambda_{\min}(\Sigma)^2 \times  c^2},
\end{align*}
where the last inequality uses \cref{thm:var_res_4th_bound} below (recall that $\delta$ is the constant in \cref{asn:var_u_reg}(\ref{itm:var_asn_u_bounds})). The right-hand side tends to zero as $T-h_{T} \to \infty$, as required. 
\end{proof}

\begin{lem}[Consistency of standard errors.] \label{thm:var_se}
Let \cref{asn:u_mds,asn:var_u_reg,asn:var_XpX} hold. Let the sequence $\lbrace A_T \rbrace$ of elements in $\mathcal{A}(0,C,\epsilon)$ and the sequence $\lbrace h_T \rbrace$ of non-negative integers satisfy $T-h_T \to \infty$ and $g(\max_i\lbrace |\rho_i(A_T)| \rbrace,h_T)^2/(T-h_T) \to \infty$. Define $\tilde{\nu} \equiv \Sigma^{-1}\nu$. Then
\[\frac{(T-h_T)^{1/2}\hat{s}(h_T,\nu)}{v(A_T,h_T,\tilde{\nu})} \underset{P_{A_T}}{\overset{p}{\to}} 1.\]
\end{lem}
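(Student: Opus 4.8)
The plan is to prove the equivalent squared statement $(T-h_T)\hat{s}_1(h_T,\nu)^2/v(A_T,h_T,\tilde\nu)^2\overset{p}{\to}1$ and then take square roots by the continuous mapping theorem; throughout I suppress the $T$ subscripts on $A_T,h_T$ and write $v\equiv v(A_T,h_T,\tilde\nu)$. Setting $\hat\nu(h)\equiv\hat\Sigma(h)^{-1}\nu$ and using $\hat\Sigma(h)=\frac{1}{T-h}\sum_{t=1}^{T-h}\hat u_t(h)\hat u_t(h)'$, the Eicker--Huber--White formula collapses to the scalar sample average
\[(T-h)\hat{s}_1(h,\nu)^2=\frac{1}{T-h}\sum_{t=1}^{T-h}\hat\xi_{1,t}(h)^2\big(\hat\nu(h)'\hat u_t(h)\big)^2.\]
The denominator equals $v^2=E[\xi_{1,t}(A,h)^2(\tilde\nu'u_t)^2]$, and the \emph{infeasible} version of the ratio built from the true objects $\xi_{1,t}$, $u_t$ and the fixed vector $\tilde\nu$ already satisfies $\frac{1}{(T-h)v^2}\sum_{t=1}^{T-h}\xi_{1,t}(A,h)^2(\tilde\nu'u_t)^2\overset{p}{\to}1$ by \cref{thm:var_se_infeas} (taken at $w=\tilde\nu$, $i=1$). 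Hence it suffices to show that replacing the feasible pair $(\hat\xi_{1,t}(h),\,\hat\nu(h)'\hat u_t(h))$ by the infeasible pair $(\xi_{1,t}(A,h),\,\tilde\nu'u_t)$ perturbs the normalized sum by $o_{P_A}(1)$.

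Writing $\hat a_t=\hat\xi_{1,t}(h)$, $\hat b_t=\hat\nu(h)'\hat u_t(h)$, $a_t=\xi_{1,t}(A,h)$ and $b_t=\tilde\nu'u_t$, the quantity to control is $\frac{1}{(T-h)v^2}\sum_t(\hat a_t^2\hat b_t^2-a_t^2b_t^2)$, which I split through $\hat a_t^2\hat b_t^2-a_t^2b_t^2=(\hat a_t^2-a_t^2)\hat b_t^2+a_t^2(\hat b_t^2-b_t^2)$ and bound term by term with Cauchy--Schwarz, e.g.
\[\frac{1}{(T-h)v^2}\sum_{t=1}^{T-h}|\hat a_t^2-a_t^2|\,\hat b_t^2\le\frac{1}{v^2}\Big(\frac{1}{T-h}\sum_{t=1}^{T-h}(\hat a_t^2-a_t^2)^2\Big)^{1/2}\Big(\frac{1}{T-h}\sum_{t=1}^{T-h}\hat b_t^4\Big)^{1/2}.\]
The estimation errors are expanded, using $\hat u_t(h)=u_t-(\hat A(h)-A)X_t$ and the long-regression identity \eqref{eqn:var_long_regression_u}, as
\[\hat b_t-b_t=(\hat\nu(h)-\tilde\nu)'u_t-\hat\nu(h)'(\hat A(h)-A)X_t,\]
\[\hat a_t-a_t=-(\hat\beta_1(h)-\beta_1)'u_t-\big\{(\hat\beta_1(h)-\beta_1)'A+(\hat\gamma_1(h)-\gamma_1)'\big\}X_t.\]
Each error is therefore an estimation-error coefficient multiplying a sample average of products of $u_t$ and $X_t$. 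The ``level'' sample moments ($\frac{1}{T-h}\sum\hat b_t^4$, which is $O_{P_A}(1)$, and the $\xi$-type factors, whose order is $v^2$) are pinned down by the uniform fourth-moment bounds of \cref{thm:var_res_4th_bound}, while the estimation-error factors are shown to be of strictly smaller order than the matching power of $v$ by combining $\hat\Sigma(h)\overset{p}{\to}\Sigma$ (so $\hat\nu(h)\overset{p}{\to}\tilde\nu$) from \cref{thm:var_Sigmahat_conv} with the convergence-rate results of \cref{thm:var_estim_conv} and \cref{thm:var_estim_conv_numer}, the $G(A,T-h,\epsilon)$ rescaling absorbing the non-stationary directions of $X_t$.

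The main obstacle is that the normalizer $v^2$ itself grows with the horizon---it is of order $g(\max_i|\rho_i(A)|,h)^2=\min\lbrace(1-\max_i|\rho_i(A)|)^{-1},h\rbrace$---so each error term must be shown to be not merely $o_{P_A}(1)$ but $o_{P_A}(v^2)$, \emph{uniformly} along the drifting sequence $\{A_T\}$ and the growing horizon $\{h_T\}$. The most delicate piece is the $\hat\xi$-replacement: unlike $\hat u_t(h)$, the LP residual carries the coefficient error $(\hat\gamma_1(h)-\gamma_1)'X_t$ in the possibly non-stationary control block $X_t$, and the number of innovation terms comprising $\xi_{1,t}(A,h)$ grows with $h_T$, so both the levels and the errors scale in a horizon-dependent way. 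Controlling these simultaneously is precisely what requires the uniform-in-$(A,h)$ decay bounds on impulse responses of \cref{thm:var_bound_for_IRFs_A} (to keep $\beta_1(A,h-\ell)$ and $\gamma_1(A,h)$ under control as $h_T\to\infty$) married to the moment bounds of \cref{thm:var_res_4th_bound}; together these guarantee that the Cauchy--Schwarz level factors are of order $v^2$ and that every error contribution vanishes relative to it. Once each of the finitely many error sums is shown to be $o_{P_A}(v^2)$, combining with the infeasible limit and applying the continuous mapping theorem yields the claim.
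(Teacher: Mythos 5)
Your proposal is correct and follows essentially the same route as the paper's argument: reduce the feasible Eicker--Huber--White variance to the infeasible sample variance of $\xi_{1,t}(A,h)(\tilde{\nu}'u_t)$, which converges by \cref{thm:var_se_infeas}, and show the replacement errors from $(\hat{\xi}_{1,t}(h),\hat{\nu}(h)'\hat{u}_t(h))$ are $o_{P_{A_T}}(v^2)$ using the rate results of \cref{thm:var_estim_conv,thm:var_estim_conv_numer,thm:var_Sigmahat_conv}, the moment bounds of \cref{thm:var_res_4th_bound}, the uniform impulse-response decay bounds, and the $G(A,T-h,\epsilon)$ rescaling. The decomposition, the choice of auxiliary lemmas, and the identification of the horizon-dependent normalization $v^2$ as the central difficulty all match the paper's proof strategy.
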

\begin{proof}
See \cref{sec:var_se_proof}.
\end{proof}

\begin{lem}[Convergence rates of estimators] \label{thm:var_estim_conv}
Let the conditions of \cref{thm:var_se} hold. Let $w \in \mathbb{R}^n \backslash \lbrace 0 \rbrace$. Then the following statements all hold:
\begin{enumerate}[i)]
\item $\frac{\|\hat{\beta}_1(h_T)-\beta_1(A_T,h_T)\|}{v(A_T,h_T,w)} \underset{P_{A_T}}{\overset{p}{\to}} 0$. \label{itm:var_estim_conv_i}
\item $\frac{\left\|G(A_T,T-h_T,\epsilon)[\hat{\eta}_1(A_T,h_T)-\eta_1(A_T,h_T)]\right\|}{v(A_T,h_T,w)} \underset{P_{A_T}}{\overset{p}{\to}} 0$. \label{itm:var_estim_conv_ii}
\item $(T-h_T)^{1/2}\|(\hat{A}(h_T)-A_T)G(A_T,T-h_T,\epsilon)\| = O_{P_{A_T}}(1)$. \label{itm:var_estim_conv_iii}
\end{enumerate}
\end{lem}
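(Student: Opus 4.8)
\emph{Overall strategy.} All three claims run off the same engine, the scaled VAR OLS representation. Writing $G \equiv G(A_T,T-h_T,\epsilon)$ and using $y_t = A_T X_t + u_t$, the OLS error is $\hat{A}(h_T)-A_T = (\sum_{t=1}^{T-h_T} u_t X_t')(\sum_{t=1}^{T-h_T} X_t X_t')^{-1}$, so that
\[(\hat{A}(h_T)-A_T)G = \frac{1}{(T-h_T)^{1/2}}\left(\frac{1}{(T-h_T)^{1/2}}\sum_{t=1}^{T-h_T} u_t X_t' G^{-1}\right)\left(\frac{1}{T-h_T}G^{-1}\sum_{t=1}^{T-h_T} X_t X_t' G^{-1}\right)^{-1}.\]
I would establish part~(\ref{itm:var_estim_conv_iii}) first and then deduce parts~(\ref{itm:var_estim_conv_i}) and~(\ref{itm:var_estim_conv_ii}) from it.

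\emph{Part (iii).} The inverse ``denominator'' factor is $O_{P_{A_T}}(1)$ directly by \cref{asn:var_XpX}, after noting that replacing $T$ by $T-h_T$ and $G(A_T,T,\epsilon)$ by $G(A_T,T-h_T,\epsilon)$ is harmless because $h_T/T\to 0$ and the ratio $g(\rho,T-h_T)/g(\rho,T)$ lies in $[\sqrt{(T-h_T)/T},1]$. For the ``numerator'' factor, $X_t$ is measurable with respect to $\mathcal{F}_{t-1}=\sigma(u_{t-1},u_{t-2},\dots)$, so $E[u_t X_t'\mid \mathcal{F}_{t-1}]=0$ by \cref{asn:u_mds} and the summands form a martingale difference array; the cross terms vanish and
\[E\left\|\frac{1}{(T-h_T)^{1/2}}\sum_{t=1}^{T-h_T} u_t X_t' G^{-1}\right\|^2 = \frac{1}{T-h_T}\sum_{t=1}^{T-h_T} E\left[(u_t'u_t)(X_t' G^{-2} X_t)\right].\]
By Cauchy--Schwarz this is at most $\{E\|u_t\|^4\}^{1/2}\sup_t\{E\|G^{-1}X_t\|^4\}^{1/2}$, which is bounded provided the scaled regressors have uniformly bounded fourth moment. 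Combining the two factors by submultiplicativity of the Frobenius norm gives $(T-h_T)^{1/2}\|(\hat{A}(h_T)-A_T)G\|=O_{P_{A_T}}(1)$.

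\emph{Parts (i) and (ii).} For~(\ref{itm:var_estim_conv_i}) I would use the algebraic identity from the proof of \cref{thm:var_lp_inference}, $\hat{\beta}_1(h)-\beta_1(A,h) = \hat{\Sigma}(h)^{-1}\frac{1}{T-h}\sum_t \hat{u}_t(h)\xi_{1,t}(A,h)$, and split the score sum as $\sum u_t\xi_{1,t}+\sum(\hat{u}_t-u_t)\xi_{1,t}$. A martingale second-moment bound of the type used in \cref{thm:var_clt} gives $\|\sum u_t\xi_{1,t}\|=O_{P_{A_T}}((T-h_T)^{1/2}v)$, while $\hat{u}_t-u_t=-(\hat{A}(h)-A)X_t$ yields $\|\sum(\hat{u}_t-u_t)\xi_{1,t}\|\le \|(\hat{A}(h)-A)G\|\,\|G^{-1}\sum X_t\xi_{1,t}\|=O_{P_{A_T}}((T-h_T)^{-1/2})\,o_{P_{A_T}}((T-h_T)v)$ by part~(\ref{itm:var_estim_conv_iii}) and \cref{thm:var_estim_conv_numer}. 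With $\hat{\Sigma}(h)^{-1}=O_{P_{A_T}}(1)$ (\cref{thm:var_Sigmahat_conv}) this gives $\|\hat{\beta}_1-\beta_1\|=O_{P_{A_T}}(v/(T-h_T)^{1/2})=o_{P_{A_T}}(v)$. For~(\ref{itm:var_estim_conv_ii}), the in-sample orthogonality $\sum\hat{u}_t X_t'=0$ makes $\hat{\eta}_1(h)$ the simple regression coefficient $(\sum X_t X_t')^{-1}\sum X_t y_{1,t+h}$; substituting $y_{1,t+h}=u_t'\beta_1+X_t'\eta_1+\xi_{1,t}$ gives $G(\hat{\eta}_1-\eta_1)=(\frac{1}{T-h}G^{-1}\sum X_tX_t'G^{-1})^{-1}\frac{1}{T-h}[G^{-1}\sum X_t u_t'\beta_1+G^{-1}\sum X_t\xi_{1,t}]$. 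The first bracketed term is $O_{P_{A_T}}((T-h_T)^{1/2}\|\beta_1\|)$ by the same martingale bound, the second is $o_{P_{A_T}}((T-h_T)v)$ by \cref{thm:var_estim_conv_numer}; dividing by $v$ and using $\|\beta_1\|/v=O(1)$ and $T-h_T\to\infty$ delivers the claim.

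\emph{Main obstacle.} The crux is the uniform fourth-moment bound $\sup_{A\in\mathcal{A}(0,C,\epsilon)}\sup_{t}E\|G(A,T-h_T,\epsilon)^{-1}X_t\|^4<\infty$, i.e., that the diagonal scaling $G$ normalizes the possibly near-nonstationary regressors simultaneously over the DGP, the horizon, and $t$. Establishing it requires writing each coordinate as $y_{i,t}=\sum_{s\ge0}\rho_i^s\tilde{y}_{i,t-s}$ with $\tilde{y}_t=B(L)^{-1}u_t$ and bounding $E[y_{i,t}^2]\lesssim g(\rho_i,t)^2$ and $E[y_{i,t}^4]\lesssim g(\rho_i,t)^4$ uniformly, leaning on the geometric decay built into $\mathcal{A}(0,C,\epsilon)$, the finite eighth moment and cumulant summability of \cref{asn:var_u_reg}, and the impulse-response bounds of \cref{thm:var_bound_for_IRFs_A}. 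Verifying that $g(\rho_i^*,T-h_T)$ dominates $g(\rho_i,t)$ uniformly for $t\le T-h_T$, and the comparability $\|\beta_1\|/v(A,h,w)=O(1)$ used in part~(\ref{itm:var_estim_conv_ii}), are the remaining delicate bookkeeping steps; everything else is routine once these uniform moment and scaling bounds are in hand.
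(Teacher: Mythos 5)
Your proposal is correct and follows essentially the same route as the paper's argument: part (\ref{itm:var_estim_conv_iii}) via the scaled OLS representation, with \cref{asn:var_XpX} controlling the Gram matrix and a martingale second-moment (Chebyshev) bound on $\sum_t u_tX_t'G^{-1}$ resting on uniform moment bounds for the scaled regressors, and parts (\ref{itm:var_estim_conv_i})--(\ref{itm:var_estim_conv_ii}) deduced from it together with \cref{thm:var_estim_conv_numer,thm:var_Sigmahat_conv} and the uniform bounds $\|\beta_1(A,h)\| = O(1)$ and $v(A,h,w)$ bounded away from zero. You also correctly identify the genuine technical crux --- the bound $\sup_{A,t\leq T-h_T}E\|G(A,T-h_T,\epsilon)^{-1}X_t\|^4<\infty$ uniform over the DGP and horizon --- and the ingredients (geometric decay of $B(L)^{-1}$, \cref{asn:var_u_reg}, the impulse-response bounds) needed to establish it.
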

\begin{proof}
See \cref{sec:var_estim_conv_proof}.
\end{proof}

\begin{lem}[OLS numerator] \label{thm:var_estim_conv_numer}
Let \cref{asn:u_mds,asn:var_u_reg} hold. Let $\lbrace A_T \rbrace$ be a sequence of autoregressive coefficients in $A_{T} \in \mathcal{A}(0,\epsilon,C)$, and let $\lbrace h_T \rbrace$ be a sequence of nonnegative integers satisfying $T-h_T \to \infty$ and $g(\max_i \lbrace |\rho_i(A)| \rbrace,h_T)^2/T \to 0$. Then, for any $w \in \mathbb{R}^n \backslash \lbrace 0 \rbrace$, $i,j \in \lbrace 1,\ldots, n\rbrace $, and $r \in \lbrace 1,\ldots, p\rbrace $,
\[\frac{\sum_{t=1}^{T-h_T} \xi_{i,t}(A_T,h_T)y_{j,t-r}}{(T-h_T)v(A_T,h_T,w)g(\rho^*_j(A_{T},\epsilon),T-h_T)} \underset{P_{A_T}}{\overset{p}{\to}} 0. \]
\end{lem}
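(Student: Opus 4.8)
The plan is to observe that the numerator has mean zero and then to control its variance, so that the claim follows from Chebyshev's inequality. First I would note that, by the definition $\xi_{i,t}(A,h)=\sum_{\ell=1}^{h}\beta_i(A,h-\ell)'u_{t+\ell}$, the summand $\xi_{i,t}(A_T,h_T)y_{j,t-r}$ is a product of a function of the \emph{future} innovations $u_{t+1},\dots,u_{t+h_T}$ and a function of the \emph{past} innovations $\{u_s\}_{s\le t-r}$ (using $y_0=\dots=y_{1-p}=0$). Since $r\ge 1$, \cref{asn:u_mds} gives $E[u_{t+\ell}\mid\{u_s\}_{s\le t}]=0$ for each $\ell\ge1$, hence $E[\xi_{i,t}(A_T,h_T)y_{j,t-r}]=0$ and the numerator $S\equiv\sum_{t=1}^{T-h_T}\xi_{i,t}(A_T,h_T)y_{j,t-r}$ has mean zero. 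It therefore suffices to show $E[S^2]=o\big((T-h_T)^2\,v(A_T,h_T,w)^2\,g(\rho_j^*(A_T,\epsilon),T-h_T)^2\big)$.

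For the variance I would decompose $S=\sum_{\ell=1}^{h_T}\beta_i(A_T,h_T-\ell)'Z_\ell$, where $Z_\ell\equiv\sum_{t=1}^{T-h_T}u_{t+\ell}\,y_{j,t-r}$. After reindexing by $\tau=t+\ell$, each $Z_\ell$ is a sum of increments $u_\tau y_{j,\tau-\ell-r}$; since $\ell+r\ge2$, the factor $y_{j,\tau-\ell-r}$ is measurable with respect to $\{u_s\}_{s<\tau}$, so by \cref{asn:u_mds} these increments form a vector martingale difference sequence. The crucial quantity is the cross-covariance $E[Z_\ell Z_{\ell'}']=\sum_{t,s}E[u_{t+\ell}u_{s+\ell'}'\,y_{j,t-r}y_{j,s-r}]$. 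Conditioning out the innovation carrying the largest time index shows, via \cref{asn:u_mds}, that the only surviving terms are those with a common lead index $t+\ell=s+\ell'$; on this diagonal I would bound each term using $E\|u_t\|^8<\infty$ (\cref{asn:var_u_reg}) together with the second-moment bound $E[y_{j,\tau}^2]\lesssim g(\rho_j^*(A_T,\epsilon),\tau)^2$ (itself a consequence of the uniform impulse-response bounds and the martingale-difference variance identity), yielding $\|E[Z_\ell Z_{\ell'}']\|\lesssim(T-h_T)\,g(\rho_j^*(A_T,\epsilon),T-h_T)^2$ uniformly in $\ell,\ell'$.

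Feeding this into the decomposition and using the uniform impulse-response summability $\sum_{k=0}^{h_T-1}\|\beta_i(A_T,k)\|\lesssim g(\rho_i^*(A_T,\epsilon),h_T)^2$ from \cref{thm:var_bound_for_IRFs_A}, I obtain $E[S^2]\le\big(\sum_{\ell}\|\beta_i(A_T,h_T-\ell)\|\big)^2\max_{\ell,\ell'}\|E[Z_\ell Z_{\ell'}']\|\lesssim(T-h_T)\,g(\rho_j^*(A_T,\epsilon),T-h_T)^2\,g(\rho_i^*(A_T,\epsilon),h_T)^4$. Dividing by the squared normalization leaves $g(\rho_i^*(A_T,\epsilon),h_T)^4/\big[(T-h_T)\,v(A_T,h_T,w)^2\big]$. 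Invoking the lower bound $v(A_T,h_T,w)\gtrsim g(\rho_1^*(A_T,\epsilon),h_T)$ (the same bound that keeps the normalizations in \cref{thm:var_clt} and \cref{thm:var_se} nondegenerate), this ratio is $\lesssim g(\rho_i^*(A_T,\epsilon),h_T)^2/(T-h_T)$ for the response variable $i=1$ that is actually needed in the main proof, which tends to zero under the maintained horizon restriction $g(\max_i|\rho_i(A_T)|,h_T)^2/(T-h_T)\to0$ invoked in the proof of \cref{thm:var_lp_inference}.

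I expect the main obstacle to be obtaining the variance bound \emph{without} a spurious factor of order $h_T$: a naive Cauchy--Schwarz bound on the off-diagonal terms $E[\xi_{i,t}\xi_{i,s}y_{j,t-r}y_{j,s-r}]$, which are nonzero whenever $|t-s|<h_T$, would cost an extra factor $\sim h_T$ and break down at intermediate and long horizons. The martingale-difference ``lead-index'' cancellation in $E[Z_\ell Z_{\ell'}']$ is exactly what removes this factor and produces the clean $g(\rho_i^*,h_T)^4$ scaling. A secondary difficulty is that every moment and impulse-response bound must hold \emph{uniformly} over the drifting sequence $\{A_T\}\subset\mathcal{A}(0,C,\epsilon)$, which is why I would rely on the uniform estimates of \cref{thm:var_bound_for_IRFs_A} rather than the trivial geometric bounds available only in the fixed-stationary case.
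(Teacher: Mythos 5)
Your proposal is correct and takes essentially the same route as the paper: show the numerator is exactly mean zero, then bound its variance by exploiting the fact that \cref{asn:u_mds} kills every cross term except those in which the two innovation lead indices coincide, and finish with Chebyshev using the uniform bounds $\sum_{k}\|\beta_i(A,k)\|\lesssim g(\rho_i^*(A,\epsilon),h)^2$ and $E[y_{j,\tau}^2]\lesssim g(\rho_j^*(A,\epsilon),\tau)^2$. You correctly identify the one step that matters — the ``matched largest index'' cancellation that avoids paying an extra factor of $h_T$ relative to a naive Cauchy--Schwarz treatment of the serially correlated $\xi_{i,t}$.

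One caveat: your final inequality chain, which divides $g(\rho_i^*(A_T,\epsilon),h_T)^4/(T-h_T)$ by $v(A_T,h_T,w)^2\gtrsim g(\rho_1^*(A_T,\epsilon),h_T)^2$, only closes when $i=1$, and you are right to flag this. With $v$ defined via $\xi_{1,t}$ as in the main text, the statement for $i\neq 1$ can genuinely fail: in a diagonal bivariate VAR(1) with $\rho_1=0$, $\rho_2=1$ and $i=j=2$, a direct computation gives $E[S^2]\asymp (T-h_T)^2h_T^2$ while the squared normalization is only $\asymp (T-h_T)^3$, so the ratio diverges whenever $h_T/\sqrt{T}\to\infty$, which the maintained rate condition permits. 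So either the normalizing constant in the full statement should carry the index $i$ (i.e., $v_i(A,h,w)\equiv\lbrace E[\xi_{i,t}(A,h)^2(w'u_t)^2]\rbrace^{1/2}$, in which case your bound yields $g(\rho_i^*,h_T)^2/(T-h_T)\to 0$ for every $i$), or the lemma should be read as applying only to the case $i=1$ actually invoked in the proof of \cref{thm:var_lp_inference}. Your proof is complete for that case, which is the one the main result needs.
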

\begin{proof}
See \cref{sec:var_estim_conv_numer_proof}.
\end{proof}

\begin{lem}[Consistency of $\hat{\Sigma}(h)$.] \label{thm:var_Sigmahat_conv}
Let \cref{asn:u_mds,asn:var_u_reg,asn:var_XpX} hold. Let the sequence $\lbrace h_T \rbrace$ of non-negative integers satisfy $T-h_T \to \infty$. Then both the following statements hold:
\begin{enumerate}[i)]
\item $\frac{1}{T-h_T}\sum_{t=1}^{T-h_T}u_tu_t' \stackrel{p}{\to} \Sigma$. \label{itm:var_Sigmahat_conv_i}
\item Assume the sequence $\lbrace A_T \rbrace$ in $\mathcal{A}(0,C,\epsilon)$ and $\lbrace h_T \rbrace$ satisfy $g(\max_i\lbrace |\rho_i(A_T)|\rbrace,h_T)^2/(T-h_T) \to \infty$. Then $\hat{\Sigma}(h_T) - \frac{1}{T-h_T}\sum_{t=1}^{T-h_T}u_tu_t' \underset{P_{A_T}}{\overset{p}{\to}} 0$. \label{itm:var_Sigmahat_conv_ii}
\end{enumerate}
\end{lem}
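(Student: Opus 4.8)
The plan is to handle the two parts separately: part (\ref{itm:var_Sigmahat_conv_i}) is a law of large numbers for the \emph{fixed} innovation process, and part (\ref{itm:var_Sigmahat_conv_ii}) is a perturbation argument around the infeasible average $\frac{1}{T-h_T}\sum_t u_tu_t'$. For part (\ref{itm:var_Sigmahat_conv_i}) the drifting sequence $\lbrace A_T \rbrace$ plays no role, since the distribution of $\lbrace u_t \rbrace$ is fixed and the process is strictly stationary with $E(u_tu_t')=\Sigma$. Writing $N \equiv T-h_T \to \infty$, I would establish $L^2$ convergence of each entry and invoke Chebyshev. The variance of a generic entry equals $N^{-2}\sum_{k}(N-|k|)\gamma_{ij}(k)$ with $\gamma_{ij}(k)\equiv\cov((u_tu_t')_{ij},(u_{t+k}u_{t+k}')_{ij})$; these autocovariances of $\lbrace u_t\otimes u_t\rbrace$ are fourth-order in $u_t$ and absolutely summable by \cref{asn:var_u_reg}(\ref{itm:var_asn_u2_cum}) (together with $E\|u_t\|^8<\infty$), so the variance is $O(N^{-1})\to0$.

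For part (\ref{itm:var_Sigmahat_conv_ii}) I would first reduce the claim to a single scalar quantity. Because the true model is a VAR($p$), $u_t=y_t-A_TX_t$ exactly, so $\hat u_t(h_T)-u_t=-(\hat A(h_T)-A_T)X_t$. Expanding $\hat u_t\hat u_t'-u_tu_t'=(\hat u_t-u_t)(\hat u_t-u_t)'+(\hat u_t-u_t)u_t'+u_t(\hat u_t-u_t)'$ and applying the triangle and Cauchy--Schwarz inequalities in the Frobenius norm gives
\[\Big\|\hat\Sigma(h_T)-\tfrac1N\textstyle\sum_t u_tu_t'\Big\|\le Q_T+2\,Q_T^{1/2}\Big(\tfrac1N\textstyle\sum_t\|u_t\|^2\Big)^{1/2},\quad Q_T\equiv\tfrac1N\textstyle\sum_t\big\|(\hat A(h_T)-A_T)X_t\big\|^2.\]
The factor $\tfrac1N\sum_t\|u_t\|^2=\tr(\tfrac1N\sum_t u_tu_t')$ is $O_{P_{A_T}}(1)$ by part (\ref{itm:var_Sigmahat_conv_i}), so it remains to show $Q_T\underset{P_{A_T}}{\overset{p}{\to}}0$.

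Controlling $Q_T$ is where the scaling matrix $G\equiv G(A_T,N,\epsilon)$ does the work; it is invertible because its diagonal entries $g(\rho_i^*(A_T,\epsilon),N)$ are strictly positive. Inserting $GG^{-1}$ and using $\tr(MSM')\le\lambda_{\max}(S)\|M\|^2$ for symmetric positive semidefinite $S$,
\[Q_T=\tr\!\Big((\hat A(h_T)-A_T)\big[\tfrac1N\textstyle\sum_t X_tX_t'\big](\hat A(h_T)-A_T)'\Big)\le\big\|(\hat A(h_T)-A_T)G\big\|^2\,\lambda_{\max}\!\Big(G^{-1}\big[\tfrac1N\textstyle\sum_t X_tX_t'\big]G^{-1}\Big).\]
The first factor is $O_{P_{A_T}}(N^{-1})$ by \cref{thm:var_estim_conv}(\ref{itm:var_estim_conv_iii}), whose hypotheses coincide with those maintained here. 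The second factor is $O_{P_{A_T}}(1)$: this upper bound on the scaled sample second-moment matrix is the counterpart of the lower eigenvalue bound in \cref{asn:var_XpX}, and I would obtain it from the uniform second-moment bounds established in the proof of \cref{thm:var_estim_conv} (replacing the horizon $T$ in \cref{asn:var_XpX} by $N=T-h_T$ alters $G$ only by an asymptotically bounded factor). Hence $Q_T=O_{P_{A_T}}(N^{-1})=o_{P_{A_T}}(1)$, and part (\ref{itm:var_Sigmahat_conv_ii}) follows from the displayed bound.

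The main obstacle is the $O_{P_{A_T}}(1)$ upper bound on $\lambda_{\max}(G^{-1}[\tfrac1N\sum_t X_tX_t']G^{-1})$, which must hold uniformly over all admissible drifting sequences $\lbrace A_T\rbrace$, including mixtures of stationary, local-to-unity, and exact unit-root coordinates. The non-stationary coordinates of $X_t$ are not tamed by the $1/N$ normalization, and only the coordinate-wise rescaling encoded in $G$ renders the second-moment matrix stochastically bounded; establishing this boundedness uniformly (the companion upper bound to \cref{asn:var_XpX}) is the delicate step, and everything else reduces to the routine manipulations above once it and \cref{thm:var_estim_conv}(\ref{itm:var_estim_conv_iii}) are in hand.
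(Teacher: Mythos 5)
Your argument is correct and follows essentially the same route as the paper: part (i) is the Chebyshev/summable-cumulants law of large numbers for the fixed process $\lbrace u_t\otimes u_t\rbrace$, and part (ii) reduces $\hat{\Sigma}(h_T)-\frac{1}{T-h_T}\sum_t u_tu_t'$ to the quadratic form in $(\hat{A}(h_T)-A_T)X_t$, which is killed by the $G$-scaled rate in \cref{thm:var_estim_conv}(\ref{itm:var_estim_conv_iii}) together with the stochastic boundedness of $G^{-1}[\frac{1}{T-h_T}\sum_t X_tX_t']G^{-1}$. The one step you defer---the $O_{P_{A_T}}(1)$ upper bound on that scaled Gram matrix, obtained via $\lambda_{\max}\leq\tr$ and the uniform moment bound $E[y_{j,t}^2]\lesssim g(\rho_j^*(A_T,\epsilon),T)^2$---is exactly the moment calculation the paper relies on, so nothing essential is missing.
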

\begin{proof}
See \cref{sec:var_Sigmahat_conv_proof}.
\end{proof}

\begin{lem}[Consistency of the sample variance of $\xi_{i,t}(A_{T},h)(w'u_t)$] \label{thm:var_se_infeas}
Let the conditions of \cref{thm:var_clt} hold. Then
\[\frac{\sum_{t=1}^{T-h_T}\xi_{i,t}(A_T,h_T)^2 (w'u_t)^2}{(T-h_T) v(A_{T},h_T,w)^2} \underset{P_{A_T}}{\overset{p}{\to}} 1.\]
\end{lem}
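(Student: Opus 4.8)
The plan is to apply Chebyshev's inequality to the nonnegative average. Write $Z_t \equiv \xi_{i,t}(A_T,h_T)^2(w'u_t)^2$ and abbreviate $v\equiv v(A_T,h_T,w)$, $g_i\equiv g(\rho_i(A_T),h_T)$, and $g_i^*\equiv g(\rho_i^*(A_T,\epsilon),h_T)$. Because $\lbrace u_t \rbrace$ is strictly stationary by \cref{asn:u_mds}, every $Z_t$ shares the same mean $E[Z_t]=v^2$, so the ratio in the lemma has expectation exactly $1$. It therefore suffices to show that its variance tends to zero, i.e.
\[
\var\Big(\sum_{t=1}^{T-h_T} Z_t\Big) = o\big((T-h_T)^2\, v^4\big).
\]
Note that $g_i^*\ge g_i$ and $(g_i^*)^2\le\max\lbrace g_i^2,2/\epsilon\rbrace$, so the hypothesis $g_i^2/(T-h_T)\to0$ of \cref{thm:var_clt} also gives $(g_i^*)^2/(T-h_T)\to0$.

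Expanding the variance as a double sum and using stationarity of $\lbrace Z_t \rbrace$,
\[
\var\Big(\sum_{t=1}^{T-h_T} Z_t\Big)=\sum_{s,t}\cov(Z_s,Z_t)\le (T-h_T)\sum_{d}\big|\cov(Z_t,Z_{t+d})\big|,
\]
and the whole argument reduces to the uniform covariance bound $\sum_d|\cov(Z_t,Z_{t+d})|\lesssim (g_i^*)^6$, where $\lesssim$ hides a constant depending only on $C,\epsilon,\delta$ and the moments of $u_t$. Granting this, together with the matching lower bound $v^2\ge c\,g_i^2$ (uniform over the parameter space, following from $\lambda_{\min}(\Sigma)\ge\delta$ in \cref{asn:var_u_reg}(\ref{itm:var_asn_u_bounds}) and impulse-response lower bounds), the ratio is $O\big((g_i^*)^6/[(T-h_T)g_i^4]\big)$, which tends to zero: when $g_i^2\ge 2/\epsilon$ this equals $O(g_i^2/(T-h_T))\to0$, while when $g_i^2<2/\epsilon$ both $g_i$ and $g_i^*$ are bounded and the expression is $O(1/(T-h_T))\to0$.

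The core of the proof is therefore the covariance bound, which I would obtain by a cumulant expansion. Viewing $Z_t=\sum_{\ell,m=1}^{h_T}\big(\beta_i(A_T,h_T-\ell)'u_{t+\ell}\big)\big(\beta_i(A_T,h_T-m)'u_{t+m}\big)(w'u_t)^2$ as a bilinear form in the process $\lbrace u_r\otimes u_r\rbrace$, the product $Z_sZ_t$ is degree eight in the innovations and $\cov(Z_s,Z_t)=E[Z_sZ_t]-E[Z_s]E[Z_t]$ equals a weighted sum of exactly those joint cumulants of the $u_r$'s whose index blocks straddle the $s$- and $t$-groups, since the ``disconnected'' partitions cancel against $E[Z_s]E[Z_t]$. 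The dominant contribution comes from the second-order straddling term built from the autocovariance $\cov(\xi_{i,s},\xi_{i,t})$: its lag-sum $\sum_d\cov(\xi_{i,s},\xi_{i,t})^2$ is bounded by $(g_i^*)^6$ using the impulse-response decay estimates of \cref{thm:var_bound_for_IRFs_A}, which give $\sum_\ell\|\beta_i(A_T,\ell)\|\lesssim (g_i^*)^2$ (hence $\sum_\ell\|\beta_i(A_T,\ell)\|^2\lesssim (g_i^*)^2$ and $\sum_d|\cov(\xi_{i,s},\xi_{i,t})|\lesssim (g_i^*)^4$). The remaining second-order terms (those involving the cross-covariances $\cov(\xi_{i,s},w'u_t)$) and the third- and fourth-order joint cumulants of $\lbrace u_r\otimes u_r\rbrace$---the latter absolutely summable by \cref{asn:var_u_reg}(\ref{itm:var_asn_u2_cum}), and of finite order since $E\|u_t\|^8<\infty$ by \cref{asn:var_u_reg}(\ref{itm:var_asn_u_bounds})---contribute at most $O((g_i^*)^4)$ and are dominated; the $d=0$ term is handled separately by the bound $E[Z_t^2]\le M v^4$ from \cref{thm:var_res_4th_bound}.

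The main obstacle is precisely this covariance bound, and the subtlety is that the individual covariances do \emph{not} decay across the overlap window: in the unit-root regime $\cov(\xi_{i,s},\xi_{i,t})$ stays of order $g_i^2$ for all lags $d$ up to $\sim h_T$, so a crude count over the window of width $h_T$ would only give the useless bound $O(h_T\, v^4)$. The resolution is to exploit the geometric decay of the impulse responses to show that the relevant lag-sums telescope to the effective correlation length $(g_i^*)^2$ rather than to $h_T$, yielding the sharp factor $(g_i^*)^6$. Carrying these estimates uniformly over $A_T\in\mathcal{A}(0,C,\epsilon)$ and over the growing horizon $h_T$---handling simultaneously the near-diagonal lags (small $d$, driven by $\cov(\xi_{i,s},\xi_{i,t})^2$) and the shifted lags ($d\approx h_T$, driven by $\cov(\xi_{i,s},w'u_t)^2$), and absorbing the non-Gaussian cumulant corrections---is where the work concentrates. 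I would organize the calculation exactly so that the lemma's hypothesis $g_i^2/(T-h_T)\to0$ is the single scalar condition that makes the final ratio vanish.
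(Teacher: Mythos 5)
Your proposal takes essentially the same route as the paper: the lemma is proved by Chebyshev's inequality (as announced in the proof of \cref{thm:var_clt}), using that the summands have exact mean $v(A_T,h_T,w)^2$ by stationarity and then bounding the variance of the sum via covariance/cumulant expansions controlled by uniform impulse-response decay estimates, the summable-cumulant condition in \cref{asn:var_u_reg}(\ref{itm:var_asn_u2_cum}), and a lower bound on $v(A_T,h_T,w)^2$ of order $g(\rho_i(A_T),h_T)^2$. The orders of magnitude you identify (a lag-sum of covariances of order $(g_i^*)^6$ against $(T-h_T)\,v^4 \gtrsim (T-h_T)\,g_i^4$, so that the ratio vanishes precisely under the hypothesis $g(\rho_i(A_T),h_T)^2/(T-h_T)\to 0$) are the correct ones, so the sketch is sound even though the key covariance bound is outlined rather than fully executed.
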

\begin{proof}
See \cref{sec:var_se_infeas_proof}.
\end{proof}

\begin{lem}[Bounds on the fourth moments of $\xi_{i,t}(A,h)(w' u_t)$ and $\xi_{i,t}(A,h)$] \label{thm:var_res_4th_bound}
Let \cref{asn:u_mds} and \cref{asn:var_u_reg}(\ref{itm:var_asn_u_bounds}) hold. Then
\[E\left[\left(v(A,h,a)^{-1}\xi_{i,t}(A,h)(w' u_t ) \right)^4 \right] \leq \frac{6 E(\|u_t\|^8) }{\delta^2 \lambda_{\min}(\Sigma)^2}  \]
and
\[E\left[\left(v(A,h,w)^{-1}\xi_{i,t}(A,h)\right)^4 \right] \leq \frac{6 E(\|u_t\|^4) }{\delta^2 \lambda_{\min}(\Sigma)^2 \|w\|^4} \]
for all $h \in \mathbb{N}$, $A \in \mathcal{A}(0,\epsilon,C)$, and $w \in \mathbb{R}^n \backslash \lbrace 0 \rbrace$.
\end{lem}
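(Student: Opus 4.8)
The plan is to reduce both inequalities to two ingredients whose ratio produces the stated constant: a lower bound on the normalizing factor $v(A,h,w)$ and an upper bound on the relevant weighted fourth moment. Throughout I write $\xi_{i,t}(A,h)=\sum_{\ell=1}^h c_\ell'u_{t+\ell}$ with $c_\ell\equiv\beta_i(A,h-\ell)$ and $d_\ell\equiv c_\ell'u_{t+\ell}$, and I read $v(A,h,w)^2$ in \eqref{eqn:var_vph} with the \emph{same} index $i$ as the numerator, i.e.\ $v(A,h,w)^2=E[\xi_{i,t}(A,h)^2(w'u_t)^2]$ (a matching-index reading is needed for the bound to hold for general $i$). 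The structural fact I exploit repeatedly is that \cref{asn:u_mds} gives mean independence with respect to \emph{both} past and future innovations: in any expectation of a product of the $d_\ell$, possibly times a weight measurable with respect to $\lbrace u_s\rbrace_{s\le t}$, a factor whose innovation index $t+\ell$ occurs exactly once can be integrated out by conditioning on $\lbrace u_s\rbrace_{s\neq t+\ell}$ and vanishes. This two-sided strengthening of the martingale-difference property is the crux of the whole argument.

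First I would establish the lower bound $v(A,h,w)^2\ge \delta\,\lambda_{\min}(\Sigma)\,\|w\|^2\sum_{\ell=1}^h\|c_\ell\|^2$. Expanding $v(A,h,w)^2=\sum_{\ell,m}E[d_\ell d_m(w'u_t)^2]$, every cross term with $\ell\neq m$ has a unique largest index and hence vanishes by the displayed argument, leaving $v(A,h,w)^2=\sum_{\ell=1}^h E[(w'u_t)^2 d_\ell^2]$. For each $\ell\ge 1$ the weight $(w'u_t)^2$ is measurable with respect to $\lbrace u_s\rbrace_{s<t+\ell}$, so conditioning on the past and invoking \cref{asn:var_u_reg}(\ref{itm:var_asn_u_bounds}) gives $E[d_\ell^2\mid\lbrace u_s\rbrace_{s<t+\ell}]=c_\ell'E[u_{t+\ell}u_{t+\ell}'\mid\lbrace u_s\rbrace_{s<t+\ell}]c_\ell\ge\delta\|c_\ell\|^2$; taking expectations and using $E[(w'u_t)^2]=w'\Sigma w\ge\lambda_{\min}(\Sigma)\|w\|^2$ yields the claim, and squaring gives $v(A,h,w)^4\ge\delta^2\lambda_{\min}(\Sigma)^2\|w\|^4\big(\sum_{\ell=1}^h\|c_\ell\|^2\big)^2$.

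Next I would bound the numerator $E[\xi_{i,t}(A,h)^4(w'u_t)^4]=\sum_{\ell_1,\ell_2,\ell_3,\ell_4}E\big[(w'u_t)^4\prod_{j=1}^4 d_{\ell_j}\big]$. Since $(w'u_t)^4$ is a weight in $u_t$ with index $t<t+\ell_j$, the two-sided argument kills every term in which some index $\ell_j$ occurs exactly once; as the four indices must then all have multiplicity at least two, only the patterns $\lbrace a,a,a,a\rbrace$ and $\lbrace a,a,b,b\rbrace$ ($a\neq b$) survive, giving $\sum_a E[(w'u_t)^4 d_a^4]+3\sum_{a\neq b}E[(w'u_t)^4 d_a^2 d_b^2]$. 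I would bound each surviving term by Cauchy--Schwarz and H\"{o}lder together with strict stationarity, via $(w'u_t)^4\le\|w\|^4\|u_t\|^4$ and $|d_\ell|\le\|c_\ell\|\,\|u_{t+\ell}\|$: the diagonal terms obey $E[(w'u_t)^4 d_a^4]\le\|w\|^4\|c_a\|^4\,E[\|u_t\|^8]$ and the paired terms $E[(w'u_t)^4 d_a^2 d_b^2]\le\|w\|^4\|c_a\|^2\|c_b\|^2\,E[\|u_t\|^8]$. Collecting, $\sum_a\|c_a\|^4+3\sum_{a\neq b}\|c_a\|^2\|c_b\|^2\le 3\big(\sum_\ell\|c_\ell\|^2\big)^2$, so $E[\xi_{i,t}(A,h)^4(w'u_t)^4]\le 6\,\|w\|^4 E[\|u_t\|^8]\big(\sum_\ell\|c_\ell\|^2\big)^2$. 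Dividing by the $v^4$ lower bound, the $\|w\|^4$ and $\big(\sum_\ell\|c_\ell\|^2\big)^2$ factors cancel and deliver the first inequality. The second inequality is the identical computation with the weight $(w'u_t)^4$ replaced by $1$ and $E[\|u_t\|^8]$ replaced by $E[\|u_t\|^4]$, the $\|w\|^{-4}$ then entering only through the $v^4$ denominator.

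The routine parts are the H\"{o}lder/stationarity estimates and the elementary inequality collecting the coefficient sums into $\big(\sum_\ell\|c_\ell\|^2\big)^2$; these I would not belabor, and the displayed constant $6$ is merely a convenient non-sharp value. The one genuinely load-bearing step---and where an i.i.d.\ or merely one-sided martingale-difference assumption would not suffice---is the vanishing of all odd-multiplicity cross terms such as $E[(w'u_t)^4 d_a^3 d_b]$ and $E[(w'u_t)^4 d_a^2 d_b d_c]$. Under two-sided mean independence these are exactly zero, whereas under a one-sided martingale-difference assumption they need not vanish, and crude bounds show their coefficient sums (e.g.\ $\sum_{a\neq b}\|c_a\|^3\|c_b\|$) cannot be absorbed into $\big(\sum_\ell\|c_\ell\|^2\big)^2$ uniformly in $h$. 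It is therefore this cancellation that both forces \cref{asn:u_mds} and makes the bounds uniform over the horizon, since it aligns the $h$-growth of the numerator with that of $v^4$.
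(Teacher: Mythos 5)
Your proof is correct and follows essentially the same route the paper takes: expand the fourth moment, use the two-sided mean independence in \cref{asn:u_mds} to annihilate every term containing an innovation index of multiplicity one, bound the surviving $\lbrace a,a,a,a\rbrace$ and $\lbrace a,a,b,b\rbrace$ terms by H\"{o}lder and stationarity, and divide by the lower bound $v(A,h,w)^2 \geq \delta\,\lambda_{\min}(\Sigma)\,\|w\|^2\sum_{\ell=1}^h\|\beta_i(A,h-\ell)\|^2$ obtained from the conditional variance bound in \cref{asn:var_u_reg}(\ref{itm:var_asn_u_bounds}). The only blemish is bookkeeping: your own display gives the constant $3$ rather than $6$, but since $3\le 6$ the stated bound follows a fortiori, and you correctly flag the constant as non-sharp.
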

\begin{proof}
See \cref{sec:var_res_4th_bound_proof}.
\end{proof}

\section{Comparison of Inference Procedures}
\label{sec:appendix}

\subsection{AR(1) Simulation Study: ARCH Innovations}
\label{sec:sim_ar1_arch}
Consider the AR(1) model \eqref{eqn:ar1} with innovations $u_{t}$ that follow an ARCH(1) process
\begin{equation} \label{eqn:ARCH}
u_t = \tau_t \varepsilon_t, \quad \tau^2_t  = \alpha_0 + \alpha_1 u^2_{t-1}, \quad \varepsilon_{t} \stackrel{i.i.d.}{\sim} N(0,1).
\end{equation} 
These innovations satisfy \cref{asn:u_mds}. In our simulations, we set $\alpha_1=.7$ and $\alpha_0=(1-\alpha_1)$.\footnote{This value of $\alpha_0$ ensures $\mathbb{E}[\tau_t^2]=1$.} \cref{tab:TableMC-arch} presents the results, which are qualitatively similar to the i.i.d. case discussed in \cref{sec:sim_ar1}.

\afterpage{
\begin{landscape}
\begin{table}[p]
    \centering
    \caption{Monte Carlo results: ARCH innovations}
    \vspace{0.5\baselineskip}
    \begin{tabular}{r|cccccc|cccccc}
& \multicolumn{6}{c|}{Coverage} & \multicolumn{6}{c}{Median length} \\
$h$ & $\text{LP-LA}_b$ & $\text{LP-LA}$ & $\text{LP}_b$ & $\text{LP}$ & $\text{AR-LA}_b$ & $\text{AR}$ & $\text{LP-LA}_b$ & $\text{LP-LA}$ & $\text{LP}_b$ & $\text{LP}$ & $\text{AR-LA}_b$ & $\text{AR}$ \\
\hline
\multicolumn{13}{c}{$\rho = 0.00$} \\
  1 & 0.892 & 0.861 & 0.916 & 0.804 & 0.831 & 0.868 & 0.386 & 0.356 & 0.406 & 0.316 & 0.337 & 0.360 \\
  6 & 0.913 & 0.903 & 0.910 & 0.865 & 0.000 & 1.000 & 0.211 & 0.207 & 0.218 & 0.195 & 0.000 & 0.000 \\
 12 & 0.901 & 0.895 & 0.896 & 0.874 & 0.000 & 1.000 & 0.209 & 0.205 & 0.211 & 0.204 & 0.000 & 0.000 \\
 36 & 0.903 & 0.894 & 0.899 & 0.890 & 0.000 & 1.000 & 0.222 & 0.217 & 0.221 & 0.215 & 0.000 & 0.000 \\
 60 & 0.899 & 0.889 & 0.904 & 0.887 & 0.000 & 0.991 & 0.236 & 0.229 & 0.237 & 0.231 & 0.000 & 0.000 \\
\multicolumn{13}{c}{$\rho = 0.50$} \\
  1 & 0.891 & 0.865 & 0.908 & 0.806 & 0.836 & 0.874 & 0.387 & 0.357 & 0.330 & 0.257 & 0.336 & 0.294 \\
  6 & 0.900 & 0.892 & 0.908 & 0.843 & 0.837 & 0.776 & 0.246 & 0.238 & 0.272 & 0.232 & 0.090 & 0.048 \\
 12 & 0.904 & 0.895 & 0.896 & 0.879 & 0.837 & 0.689 & 0.240 & 0.233 & 0.265 & 0.250 & 0.008 & 0.001 \\
 36 & 0.897 & 0.887 & 0.894 & 0.869 & 0.837 & 0.579 & 0.254 & 0.246 & 0.277 & 0.265 & 0.000 & 0.000 \\
 60 & 0.901 & 0.885 & 0.902 & 0.879 & 0.837 & 0.540 & 0.273 & 0.262 & 0.300 & 0.283 & 0.000 & 0.000 \\
\multicolumn{13}{c}{$\rho = 0.95$} \\
  1 & 0.897 & 0.859 & 0.823 & 0.824 & 0.838 & 0.856 & 0.392 & 0.359 & 0.084 & 0.079 & 0.335 & 0.086 \\
  6 & 0.896 & 0.819 & 0.854 & 0.788 & 0.838 & 0.806 & 0.621 & 0.519 & 0.381 & 0.327 & 1.746 & 0.355 \\
 12 & 0.880 & 0.785 & 0.850 & 0.747 & 0.838 & 0.758 & 0.724 & 0.560 & 0.604 & 0.489 & 3.942 & 0.467 \\
 36 & 0.869 & 0.788 & 0.859 & 0.667 & 0.838 & 0.643 & 0.717 & 0.596 & 0.816 & 0.596 & 64.319 & 0.291 \\
 60 & 0.881 & 0.825 & 0.885 & 0.692 & 0.838 & 0.579 & 0.711 & 0.615 & 0.900 & 0.625 & 1032.604 & 0.095 \\
\multicolumn{13}{c}{$\rho = 1.00$} \\
  1 & 0.896 & 0.860 & 0.841 & 0.579 & 0.839 & 0.560 & 0.386 & 0.356 & 0.040 & 0.041 & 0.330 & 0.041 \\
  6 & 0.879 & 0.759 & 0.859 & 0.543 & 0.839 & 0.513 & 0.686 & 0.585 & 0.240 & 0.228 & 2.035 & 0.223 \\
 12 & 0.854 & 0.662 & 0.845 & 0.454 & 0.839 & 0.468 & 0.902 & 0.715 & 0.473 & 0.396 & 5.510 & 0.391 \\
 36 & 0.731 & 0.424 & 0.752 & 0.213 & 0.839 & 0.352 & 1.384 & 0.935 & 1.170 & 0.609 & 177.260 & 0.669 \\
 60 & 0.640 & 0.279 & 0.697 & 0.164 & 0.839 & 0.294 & 1.475 & 0.964 & 1.647 & 0.642 & 5593.663 & 0.729 \\
\end{tabular}
    \label{tab:TableMC-arch}
    \\
    \vspace{0.5\baselineskip}
\begin{minipage}{1.1 \textwidth} 
{\footnotesize Coverage probability and median length of nominal 90\% confidence intervals at different horizons. AR(1) model with $\rho \in \lbrace 0,.5,.95,1\rbrace $, $T=240$, innovations as in equation \eqref{eqn:ARCH}. 5,000 Monte Carlo repetitions; 2,000 bootstrap iterations.}
\end{minipage}
\end{table}
\end{landscape}
}

\subsection{Analytical Results}
\label{sec:comp_details}
In this subsection we provide details on the relative efficiency of lag-augmented LP versus other procedures. Throughout, we focus on the tractable AR(1) model in \cref{sec:ar1}.

\subsubsection{Relative Efficiency of Lag-Augmented LP}
\label{sec:comp_details_releff}
Here we compare the efficiency of lag-augmented LP relative to (i) non-augmented AR, (ii) lag-augmented AR, and (iii) non-augmented LP. We restrict attention to a stationary, homoskedastic AR(1) model and to a fixed impulse response horizon $h$.

Specifically, we here assume the AR(1) model \eqref{eqn:ar1} with $\rho \in (-1,1)$ and where the innovations $u_t$ are assumed to be i.i.d. with variance $\sigma^2$. This provides useful intuition, even though the main purpose of this paper is to develop methods that work in empirically realistic settings with several variables/lags, high persistence, and longer horizons.

\paragraph{Comparison with non-augmented AR.}
In a stationary and homoskedastic AR(1) model, the non-augmented AR estimator is the asymptotically efficient estimator among all regular estimators that are consistent also under heteroskedasticity. This follows from standard semiparametric efficiency arguments, since the non-augmented AR estimator simply plugs the semiparametrically efficient OLS estimator of $\rho$ into the smooth impulse response transformation $\rho^h$. In particular, non-augmented AR is weakly more efficient than (i) lag-augmented AR, (ii) non-augmented LP, and (iii) lag-augmented LP. As we have discussed in \cref{sec:comp}, however, standard non-augmented AR inference methods perform poorly in situations outside of the benign stationary, short-horizon case.

To gain intuition about the efficiency loss associated with lag augmentation, consider the first horizon $h=1$. At this horizon, the lag-augmented LP and lag-augmented AR estimators coincide. These estimators regress $y_{t+1}$ on $y_t$, while controlling for $y_{t-1}$. As discussed in \cref{sec:ar1_intuition}, this is the same as regressing $y_{t+1}$ directly on the innovation $u_t$, while controlling for $y_{t-1}$ (which is uncorrelated with $u_t$). In contrast, the non-augmented AR estimator just regresses $y_{t+1}$ on $y_t$ without controls. Note that (i) the regressor $y_t$ has a higher variance than the regressor $u_t$, and (ii) the residual in both the augmented and non-augmented regressions equals $u_{t+1}$. Thus, the usual homoskedastic OLS asymptotic variance formula implies that the non-augmented AR estimator is more efficient than the lag-augmented AR/LP estimator.

\paragraph{Comparison with lag-augmented AR.}
The relative efficiency of the lag-augmented AR and lag-augmented LP impulse response estimators is ambiguous. In the homoskedastic AR(1) model, the proof of \cref{thm:var_lp_inference} implies that the asymptotic variance of the lag-augmented LP estimator $\hat{\beta}(h)$ is
\begin{equation} \label{eqn:asyvar_lalp}
\text{AsyVar}_\rho(\hat{\beta}(h)) = \frac{E[u_t^2\xi_t(\rho,h)^2]}{[E(u_t^2)]^2} = \frac{\sigma^2 E[\xi_t(\rho,h)^2]}{\sigma^4} = \frac{\sigma^2 \sum_{\ell=0}^{h-1}\rho^{2\ell}\sigma^2}{\sigma^4} = \sum_{\ell=0}^{h-1}\rho^{2\ell}.
\end{equation}
We want to compare this to the asymptotic variance of the plug-in AR estimator $\hat{\beta}_\text{ARLA}(h) \equiv \hat{\rho}_\text{LA}^h$, where $\hat{\rho}_\text{LA}$ is the coefficient estimate on the first lag in a regression with \emph{two} lags \citep{Inoue2020}. Note that $\hat{\rho}_\text{LA} = \hat{\beta}(1)$ by definition. By the delta method, the asymptotic variance of $\hat{\beta}_\text{ARLA}(h)$ is given by
\[\text{AsyVar}_\rho(\hat{\beta}_\text{ARLA}(h)) = (h\rho^{h-1})^2 \times \text{AsyVar}_\rho(\hat{\rho}_\text{LA}) = (h\rho^{h-1})^2 \times \text{AsyVar}_\rho(\hat{\beta}(1)) = (h\rho^{h-1})^2.\]
To rank the LP and ARLA estimators in terms of asymptotic variance, note that
\[\text{AsyVar}_\rho(\hat{\beta}(h)) \leq \text{AsyVar}_\rho(\hat{\beta}_\text{ARLA}(h)) \Longleftrightarrow \sum_{\ell=0}^{h-1} \rho^{2(\ell-h+1)} \leq h^2 \Longleftrightarrow \sum_{m=0}^{h-1} \rho^{-2m} \leq h^2.\]
Consider the inequality on the far right of the above display. For $h \geq 2$, the left-hand side is monotonically decreasing from $\infty$ to $h$ as $|\rho|$ goes from 0 to 1. Hence, there exists an indifference function $\underline{\rho} \colon \mathbb{N} \to (0,1)$ such that
\[\text{AsyVar}_\rho(\hat{\beta}(h)) \leq \text{AsyVar}_\rho(\hat{\beta}_\text{ARLA}(h)) \Longleftrightarrow |\rho| \geq \underline{\rho}(h).\]

\noindent \cref{fig:se_indiff} in \cref{sec:comp} plots the indifference curve between lag-augmented LP standard errors and lag-augmented AR standard errors (lower thick line).


\paragraph{Comparison with non-augmented LP.}
The non-augmented LP estimator $\hat{\beta}_\text{LPNA}(h)$ is obtained from a regression of $y_{t+h}$ on $y_t$ without controls. As is clear from the representation \eqref{eqn:ytph_decomp}, the asymptotic variance of this estimator is given by
\begin{align*}
\text{AsyVar}_\rho(\hat{\beta}_\text{LPNA}(h)) &= \frac{\sum_{\ell=-\infty}^\infty E[y_t\xi_t(\rho,h)y_{t-\ell}\xi_{t-\ell}(\rho,h)]}{[E(y_t^2)]^2} \\
&= \frac{\sum_{\ell=-h+1}^{h-1} \rho^{|\ell|} E[y_{t-|\ell|}^2]E[\xi_t(\rho,h)\xi_{t-|\ell|}(\rho,h)]}{[E(y_t^2)]^2} \\
&= \frac{\sum_{\ell=-h+1}^{h-1} \sum_{m=|\ell|}^{h-1}\rho^{2m}}{E(y_t^2)/\sigma^2} = (1-\rho^2)\sum_{\ell=-h+1}^{h-1} \sum_{m=|\ell|}^{h-1}\rho^{2m} \\
&= \sum_{\ell=-h+1}^{h-1} (\rho^{2|\ell|}-\rho^{2h}) = \sum_{\ell=0}^{h-1} \rho^{2\ell} + \sum_{\ell=1}^{h-1} \rho^{2\ell} - (2h-1)\rho^{2h}.
\end{align*}
Thus, using \eqref{eqn:asyvar_lalp}, we find that
\[\text{AsyVar}_\rho(\hat{\beta}(h)) \leq \text{AsyVar}_\rho(\hat{\beta}_\text{LPNA}(h)) \Longleftrightarrow \sum_{\ell=1}^{h-1} \rho^{2\ell} \geq (2h-1)\rho^{2h} \Longleftrightarrow \sum_{\ell=1}^{h-1} \rho^{-2\ell} \geq (2h-1).\]
The last equivalence assumes $\rho \neq 0$, since lag-augmented and non-augmented LP are clearly equally efficient when $\rho=0$. For $h=1$, the last inequality above is never satisfied. This is because at this horizon lag-augmented and non-augmented LP reduce to lag-augmented and non-augmented AR, respectively, and the latter is more efficient, as discussed previously. For $h\geq 2$, the left-hand side of the last inequality above decreases monotonically from $\infty$ to $h-1$ as $|\rho|$ goes from 0 to 1. Thus, there exists an indifference function $\overline{\rho} \colon \mathbb{N} \to (0,1)$ such that
\[\text{AsyVar}_\rho(\hat{\beta}(h)) \leq \text{AsyVar}_\rho(\hat{\beta}_\text{LPNA}(h)) \Longleftrightarrow |\rho| \leq \overline{\rho}(h).\]

\noindent \cref{fig:se_indiff} in \cref{sec:comp} plots the indifference curve between lag-augmented LP and non-augmented LP (upper thick line).


\subsubsection{Length of Lag-Augmented AR Bootstrap Confidence Interval}
\label{sec:comp_details_laarboot}
Here we prove that the lag-augmented AR bootstrap confidence interval of \citet{Inoue2020} is very wide asymptotically when the data is persistent and the horizon is moderately long.

Let $Y^{T}\equiv (y_{1}, \ldots, y_{T})$ denote a sample of size $T$ generated by the AR(1) model \eqref{eqn:ar1}. Let $P_{\rho}$ denote the distribution of the data when the autoregressive parameter equals  $\rho$. Let $\hat{\rho}$ denote the lag-augmented autoregressive estimator of the parameter $\rho$ based on the data $Y^T$ (i.e., the first coefficient in an AR(2) regression). Let $\hat{\rho}^*$ be the corresponding  lag-augmented autoregressive estimator based on a bootstrap sample. We use $\mathbb{P}^*(\cdot  \mid Y^{T})$ to denote the distribution of the bootstrap samples conditional on the data.

By the results in \cite{Inoue2020} we will assume that (i) $\sqrt{T}(\hat{\rho}-\rho)$ converges uniformly to $\mathcal{N}(0, \omega^2)$ for some $\omega>0$, and (ii) the law of  $\sqrt{T}(\hat{\rho}^*-\hat{\rho}) \mid Y^{T} $ also converges to $\mathcal{N}(0, \omega^2)$ (in probability). 

We consider a sequence of autoregressive parameters $\lbrace \rho_{T}\rbrace $ approaching unity as $T \to \infty$, and a sequence of horizons $\lbrace h_{T} \rbrace$ that increases with the sample size. The restrictions on these sequences are as follows:
\begin{equation} \label{eqn:rhoT_boot}
    h_{T}(1-\rho_{T}) \rightarrow a \in [0,\infty),
\end{equation}
\begin{equation} \label{eqn:ht_boot}
    h_{T} \propto T^{\eta}, \quad \eta \in [1/2,1]. 
\end{equation}
For example, these assumptions cover the cases of (i) local-to-unity DGPs $\rho_T = 1-a/T$, $a \geq 0$, at long horizons $h_T \propto T$, and (ii) not-particularly-local-to-unity DGPs $\rho_T = 1-a/\sqrt{T}$, $a>0$, at medium-long horizons $h_T \propto \sqrt{T}$.

We now derive an expression for the quantiles of the bootstrap distribution of the impulse response estimates. For any $c \in \mathbb{R}$,
\begin{eqnarray*}
\mathbb{P}^*( (\hat{\rho}^*)^{h_{T}} \leq c \mid Y^{T}) &=& \mathbb{P}^*( (\hat{\rho}^*)^{h_{T}} \leq c \textrm{ and } \hat{\rho}^* \geq 0  \mid Y_{T} ) + o_{P_{\rho_T}}(1),\\
&=& \mathbb{P}^*( \sqrt{T}(\hat{\rho}^*-\hat{\rho}) \leq \sqrt{T}(c^{1/h_{T}}-\hat{\rho}) \mid Y^{T}) + o_{P_{\rho_T}}(1). 
\end{eqnarray*}
The equation above implies that the $\alpha$ bootstrap quantile of $(\hat{\rho}^*)^{h_{T}}$ is given by $c^*_{\alpha} = \hat{c}_{\alpha} + o_{P_{\rho_T}}$, where
\begin{equation}
    \hat{c}_{\alpha} = \left( \hat{\rho} + \omega z_{\alpha}/\sqrt{T} \right)^{h_{T}},
\end{equation}
and $z_{\alpha}$ is the $\alpha$ quantile of the standard normal distribution. Note that
\begin{eqnarray*}
\log \hat{c}_{\alpha} &=& h_T\left[\log\hat{\rho} + \frac{\omega z_{\alpha}}{\sqrt{T}\hat{\rho}} + o_{P_{\rho_T}}(T^{-1/2}) \right] \\
&& \text{(since $\log(x+y) = \log(x) + y/x + o(y/x)$)} \\
&=& h_T\log\rho_T + \frac{\omega}{\rho_T}\frac{h_T}{\sqrt{T}}\left[\frac{\rho_T}{\omega}\sqrt{T}\log\frac{\hat{\rho}}{\rho_T} + z_{\alpha} + o_{P_{\rho_T}}(1)\right].
\end{eqnarray*}
By \eqref{eqn:rhoT_boot}, we have $\rho_T \to 1$ and $h_T \log\rho_T \to -a$. Also, the delta method implies
\[\frac{\rho_T}{\omega}\sqrt{T}\log\frac{\hat{\rho}}{\rho_T} \stackrel{d}{\to} Z \equiv N(0,1).\]
Since $\sqrt{T}/h_T = O(1)$ by \eqref{eqn:ht_boot}, we then conclude that
\begin{equation} \label{eqn:ik_quant}
\frac{\sqrt{T}}{h_T}(\log \hat{c}_{\alpha} + a) \stackrel{d}{\to} \omega (Z+z_{\alpha}).
\end{equation}
This convergence in distribution is joint if we consider several quantiles $\alpha$ simultaneously.

The \citet{Inoue2020} lag-augmented AR Efron bootstrap confidence interval is given by $[c^*_{\alpha/2},c^*_{1-\alpha/2}]$, so its length equals $\hat{c}_{1-\alpha/2}-\hat{c}_{\alpha/2} + o_{P_{\rho_T}}(1)$. We now argue that this length does not shrink to zero asymptotically in two separate cases.

\paragraph{Case 1:  $h_{T} = \kappa \sqrt{T}, \kappa \in (0,1]$.}
In this case the result \eqref{eqn:ik_quant} immediately implies that the length of the \citet{Inoue2020} bootstrap confidence interval converges to a non-degenerate random variable asymptotically (though the confidence interval has correct asymptotic coverage). This contrasts with the lag-augmented LP confidence interval, whose length shrinks to zero in probability asymptotically.

\paragraph{Case 2: $h_{T} \propto T^{\eta}, \eta \in (1/2,1]$.}
In this case $h_T/\sqrt{T} \to \infty$. The result \eqref{eqn:ik_quant} then implies that, for any $\zeta>0$,
\begin{eqnarray*}
P_{\rho_T}\big([\zeta,1/\zeta] \subset [c^*_{\alpha/2},c^*_{1-\alpha/2}]\big) &=& P_{\rho_T}\big(\log\hat{c}_{\alpha/2} \leq \log\zeta \;\; \text{and} \;\; \log\hat{c}_{1-\alpha/2} \geq \log(1/\zeta) \big) + o(1) \\
&=& P\big(Z+z_{\alpha/2} < 0 \;\; \text{and} \;\; Z+z_{1-\alpha/2} > 0 \big) + o(1) \\
&=& 1-\alpha + o(1).
\end{eqnarray*}
This means that, though the Efron bootstrap confidence interval of \citet{Inoue2020} has correct coverage, it achieves this at the expense of reporting---with probability $(1-\alpha)$--- intervals that asymptotically contain any compact subset of the positive real line $(0,\infty)$. A similar argument shows that if we intersect the \citet{Inoue2020} confidence interval with the parameter space $[-1,1]$ for the impulse response, the confidence interval almost equals $[0,1]$ with probability $1-\alpha$. In contrast, as long as $\eta<1$, the lag-augmented LP confidence interval has valid coverage and length that tends to zero in probability.

\section{Verification of \texorpdfstring{\cref{asn:var_XpX}}{Assumption \ref{asn:var_XpX}}: AR(1) Case} \label{sec:var_XpX_ar1}
In the notation of \cref{sec:ar1}, and setting $\epsilon=0$ without loss of generality, it suffices to show: Any sequence $\lbrace \rho_T \rbrace \in [-1,1]$ has a subsequence (which we will also denote by $\lbrace \rho_T \rbrace$ for simplicity) such that the random variable $\max\lbrace T(1-|\rho_T|), 1\rbrace \frac{1}{T^2}\sum_{t=1}^T y_{t-1}^2$ converges in distribution along $\lbrace P_{\rho_T} \rbrace$ to a random variable that is strictly positive almost surely. By passing to a further subsequence if necessary, we may assume that $\lim_{T\to\infty} T(1-|\rho_T|)$ exists.

\paragraph{Case 1: $T(1-|\rho_T|) \to \infty$.}
We will argue that $\frac{1-|\rho_T|}{T}\sum_{t=1}^T y_{t-1}^2$ converges in probability to a nonzero constant along some subsequence. This follows from three facts. First, $\rho_T \to \tilde{c} \in [-1,1]$, at least along some subsequence. Second, direct calculation using \cref{asn:u_mds} shows that $E[\frac{1-\rho_T^2}{T}\sum_{t=1}^T y_{t-1}^2] \to \sigma^2=E(u_t^2)>0$. Third, tedious calculations similar to the proof of \cref{thm:var_estim_conv_numer} show that $\var[\frac{1-\rho_T^2}{T}\sum_{t=1}^T y_{t-1}^2] \to 0$.

\paragraph{Case 2: $T(1-|\rho_T|) \to c \in [0,\infty)$.}
By passing to a further subsequence, we may assume $\rho_T \to 1$ (the case $\rho_T \to -1$ can be handled similarly). We impose the additional assumption that, for ``local-to-unity'' sequences $\lbrace \rho_T \rbrace$ satisfying $T(1-\rho_T) \to c \in [0,\infty)$, the sequence of probability measures $\lbrace P_{\rho_T} \rbrace$ is contiguous to the measure $P_1$ (i.e., with $\rho=1$). This is known to hold for i.i.d. innovations $\lbrace u_t \rbrace$ whose density satisfy a smoothness condition \citep{Jansson2008}, and it also allows for certain types of conditional heteroskedasticity \citep[Section 4]{Jeganathan1995}. Under this extra assumption, we now just need to argue that, when $\rho=1$ is \emph{fixed}, $\frac{1}{T^2}\sum_{t=1}^T y_{t-1}^2$ converges in distribution to a continuously distributed random variable concentrated on $(0,\infty)$. But this is a well-known result from the unit root literature \citep[e.g.,][Chapter 17.4]{Hamilton1994}, since $\lbrace u_t \rbrace$ satisfies a Functional Central Limit Theorem under \cref{asn:u_mds,asn:var_u_reg} \citep[Theorem 27.14]{Davidson1994}.

\phantomsection
\addcontentsline{toc}{section}{References}
\bibliography{ref}

\end{document}


\title{\texorpdfstring{\vspace{-2em}}{}Online Appendix: Local Projection Inference is Simpler and More Robust Than You Think}
\author{Jos\'{e} Luis Montiel Olea \and Mikkel Plagborg-M{\o}ller}
\date{December 4, 2020}
\maketitle

\begin{appendices}
\crefalias{section}{sappsec}
\crefalias{subsection}{sappsubsec}
\crefalias{subsubsection}{sappsubsubsec}
\setcounter{section}{3}

\section{Further Simulation Results}
\label{sec:sim_further}
\paragraph{Bivariate VAR(4) model.}
We first consider the bivariate VAR($p_0$) model
\[\textstyle y_{1,t} = \rho y_{1,t-1} + u_{1,t},\quad (1-\frac{1}{2}L)^{p_0} y_{2,t} = \frac{1}{2}y_{1,t-1} + u_{2,t},\quad (u_{1,t},u_{2,t})' \stackrel{i.i.d.}{\sim} N\left(0, \begin{psmallmatrix}
1 & 0.3 \\
0.3 & 1
\end{psmallmatrix}\right),\]
where $L$ is the lag operator, and the parameter $\rho$ indexes the persistence. For $p_0=1$, this model reduces to the one considered by \citet[section III]{Kilian2011}; we instead set $p_0=4$ to generate richer dynamics. The parameters of interest are the reduced-form impulse responses of $y_{2,t}$ with respect to the innovation $u_{1,t}$.

\cref{tab:TableMC_var_biv} shows that the qualitative conclusions from the AR(1) simulation study in \cref{sec:sim_ar1} carry over to the present bivariate DGP with $p_0=4$. We employ four different inference methods that use the correct estimation lag length $p=p_0$: non-augmented VAR, delta method confidence interval (``AR''); lag-augmented VAR \citep{Inoue2020}, Efron bootstrap interval (``AR-LA$_b$''); local projection with HAR standard errors as in \cref{sec:sim_ar1}, percentile-t bootstrap interval (``LP$_b$''); and our preferred method, lag-augmented local projection with heteroskedasticity-robust standard errors, percentile-t bootstrap interval (``LP-LA$_b$''). As a fifth method, we consider our preferred procedure with a larger estimation lag length $p=8$ (``LP-LA$_b^8$''). The bootstrap is a wild recursive residual VAR bootstrap. We set $T=240$. The nominal confidence level is 90\%.

Consistent with the theory in \cref{sec:var}, lag-augmented local projection achieves good coverage in all cases, except at long horizons $h \geq 36$ when there is a unit root ($\rho=1$). Over-specifying the lag length to be 8 instead of 4 barely affects the coverage of lag-augmented local projection confidence intervals and only widens them by 3--5\% (see columns 2 and 7). Non-augmented delta method VAR inference suffers from poor coverage at long horizons when $\rho \geq 0.95$, while lag-augmented VAR confidence intervals can be very wide.

\afterpage{
\begin{landscape}
\begin{table}[p]
    \centering
    \caption{Monte Carlo results: bivariate VAR(4) model}
    \vspace{0.5\baselineskip}
    \begin{tabular}{r|ccccc|ccccc}
& \multicolumn{5}{c|}{Coverage} & \multicolumn{5}{c}{Median length} \\
$h$ & $\text{LP-LA}_b$ & $\text{LP-LA}_b^8$ & $\text{LP}_b$ & $\text{AR-LA}_b$ & $\text{AR}$ & $\text{LP-LA}_b$ & $\text{LP-LA}_b^8$ & $\text{LP}_b$ & $\text{AR-LA}_b$ & $\text{AR}$ \\
\hline
\multicolumn{11}{c}{$\rho = 0.00$} \\
  1 & 0.910 & 0.906 & 0.906 & 0.901 & 0.902 & 0.234 & 0.241 & 0.245 & 0.229 & 0.226 \\
  6 & 0.892 & 0.892 & 0.899 & 0.894 & 0.895 & 1.481 & 1.518 & 1.517 & 1.310 & 1.278 \\
 12 & 0.895 & 0.889 & 0.895 & 0.903 & 0.901 & 1.605 & 1.661 & 1.627 & 3.813 & 0.660 \\
 36 & 0.906 & 0.901 & 0.905 & 0.924 & 1.000 & 1.694 & 1.754 & 1.709 & 30.081 & 0.015 \\
 60 & 0.913 & 0.912 & 0.911 & 0.927 & 1.000 & 1.825 & 1.901 & 1.832 & 301.439 & 0.000 \\
\multicolumn{11}{c}{$\rho = 0.50$} \\
  1 & 0.908 & 0.906 & 0.907 & 0.900 & 0.900 & 0.235 & 0.240 & 0.244 & 0.228 & 0.226 \\
  6 & 0.896 & 0.890 & 0.894 & 0.892 & 0.889 & 1.731 & 1.774 & 1.776 & 1.706 & 1.624 \\
 12 & 0.891 & 0.880 & 0.889 & 0.889 & 0.897 & 2.006 & 2.065 & 2.037 & 7.186 & 1.264 \\
 36 & 0.902 & 0.897 & 0.902 & 0.922 & 1.000 & 2.079 & 2.148 & 2.101 & 89.302 & 0.066 \\
 60 & 0.913 & 0.909 & 0.906 & 0.922 & 1.000 & 2.239 & 2.322 & 2.262 & 1517.269 & 0.001 \\
\multicolumn{11}{c}{$\rho = 0.95$} \\
  1 & 0.904 & 0.902 & 0.907 & 0.895 & 0.893 & 0.235 & 0.241 & 0.245 & 0.230 & 0.227 \\
  6 & 0.891 & 0.890 & 0.888 & 0.887 & 0.889 & 2.296 & 2.361 & 2.362 & 2.407 & 2.136 \\
 12 & 0.890 & 0.884 & 0.891 & 0.902 & 0.881 & 4.542 & 4.665 & 4.641 & 16.838 & 4.014 \\
 36 & 0.830 & 0.809 & 0.832 & 0.933 & 0.841 & 6.295 & 6.421 & 6.407 & 1113.555 & 5.413 \\
 60 & 0.876 & 0.859 & 0.872 & 0.931 & 0.763 & 6.146 & 6.297 & 6.343 & 73988.007 & 3.253 \\
\multicolumn{11}{c}{$\rho = 1.00$} \\
  1 & 0.904 & 0.897 & 0.900 & 0.893 & 0.890 & 0.236 & 0.242 & 0.245 & 0.230 & 0.227 \\
  6 & 0.894 & 0.892 & 0.890 & 0.859 & 0.874 & 2.381 & 2.445 & 2.472 & 2.450 & 2.181 \\
 12 & 0.877 & 0.873 & 0.872 & 0.879 & 0.828 & 5.278 & 5.407 & 5.364 & 17.862 & 4.491 \\
 36 & 0.767 & 0.760 & 0.769 & 0.965 & 0.775 & 11.346 & 11.558 & 11.509 & 1311.475 & 8.200 \\
 60 & 0.659 & 0.654 & 0.677 & 0.961 & 0.751 & 12.436 & 12.355 & 12.750 & 95033.410 & 11.423 \\
\end{tabular}
    \label{tab:TableMC_var_biv}
    \\
    \vspace{0.5\baselineskip}
\begin{minipage}{1 \textwidth} 
{\footnotesize Coverage probability and median length of nominal 90\% confidence intervals at different horizons. Bivariate VAR(4) model with $\rho \in \lbrace 0,.5,.95,1\rbrace $, $T=240$. 5,000 Monte Carlo repetitions; 2,000 bootstrap iterations.} 
\end{minipage}
\end{table}
\end{landscape}
}

\paragraph{Empirically calibrated VAR(12) models.}
We additionally consider two empirically calibrated VAR(12) models with four or five observables. The first DGP broadly follows \citet[section IV]{Kilian2011} and is given by the empirical least-squares estimate of a workhorse monetary VAR model estimated on monthly U.S. data for 1984--2018 ($T=419$). The four variables in the empirical VAR are the Federal Funds Rate, the Chicago Fed National Activity Index, CPI inflation, and real commodity price inflation (CRB Raw Industrials deflated by CPI).\footnote{St. Louis FRED codes: CFNAI, CPIAUCSL, FEDFUNDS. Global Financial Data code: CMCRBIND.} The second DGP is based on the main specification in \citet{Gertler2015} estimated on their monthly data set for 1990--2012 ($T=270$).\footnote{The data was downloaded from: \url{https://www.aeaweb.org/articles?id=10.1257/mac.20130329}} The five variables are industrial production (log levels), CPI (log levels), the 1-year Treasury rate, the Excess Bond Premium, and a monetary shock series given by high-frequency changes in 3-month Federal Funds Futures prices around FOMC announcements. For both DGPs, we simulate data from a Gaussian VAR(12) model with true parameters given by the empirically estimated coefficients and innovation covariance matrix (but no intercept). The sample sizes are the same as in the real data, mentioned earlier.

\cref{fig:sim_kk_gk} shows that lag-augmented local projection achieves acceptable coverage in these empirically calibrated DGPs. The figure shows the coverage and median length of 90\% confidence intervals for reduced-form impulse responses of selected response variables with respect to an innovation in the Federal Funds Rate (first DGP) or the monetary shock series (second DGP). Our preferred lag-augmented local projection procedure (solid black line) exhibits coverage distortions below 5 percentage points at all horizons for four of the six impulse response functions shown. The distortions only approach 10 percentage points for two response variables at long horizons in the second DGP. This second DGP is very challenging: Four of the eigenvalues of the VAR companion matrix exceed 0.98 in magnitude, while the sample size (270) is small relative to the number of covariates in each equation (60 plus the intercept). The \citet{Inoue2020} procedure (dashed blue line) exhibits near-uniform coverage in both DGPs, but this comes at the expense of extremely large confidence interval length at medium and long horizons.

\begin{figure}[p]
\centering
\textsc{Monte Carlo results: Kilian-Kim VAR(12) specification} \\[1ex]
\includegraphics[width=0.85\linewidth,clip=true,trim=5em 2.5em 5em 1em]{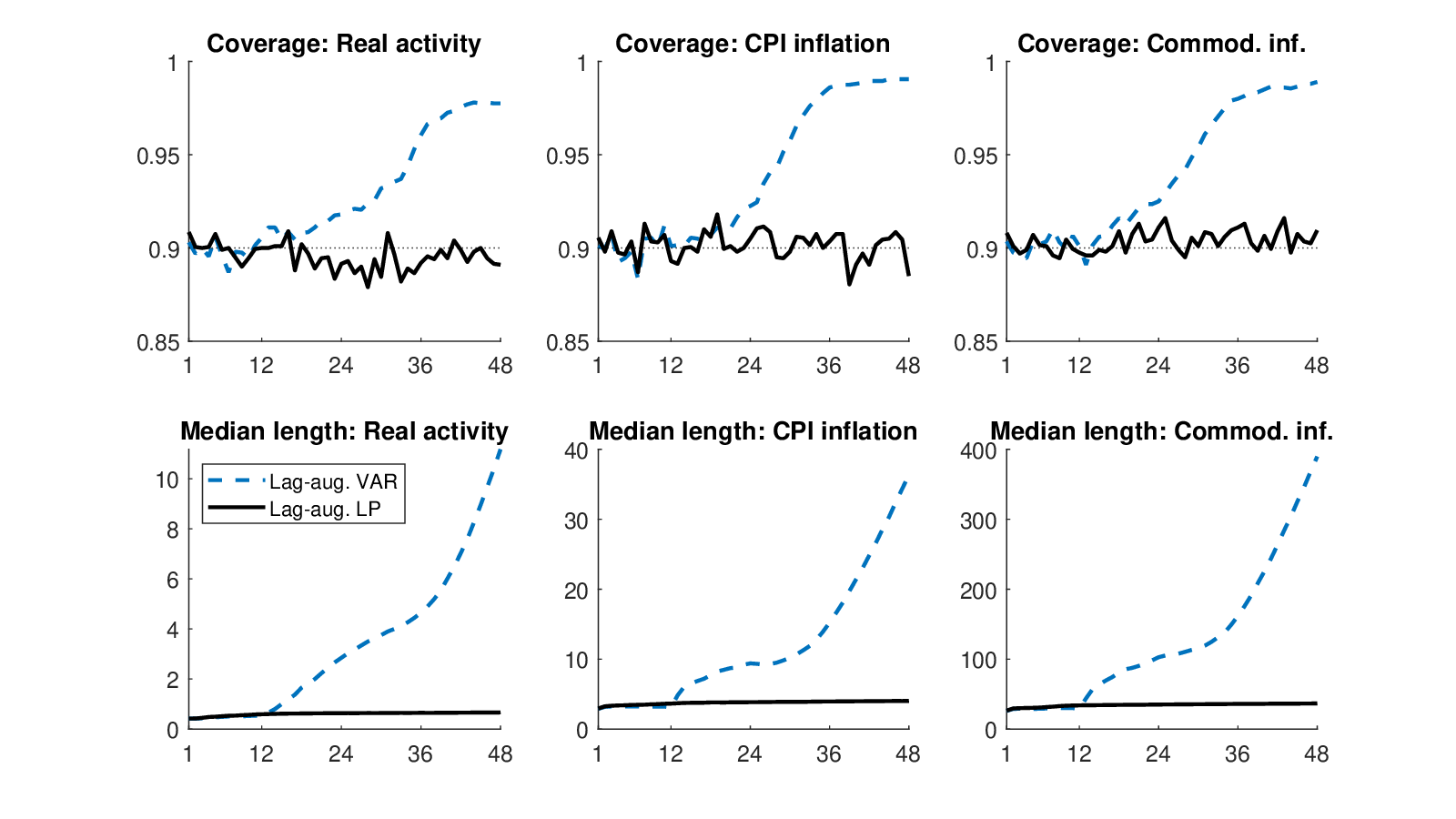} \\[\baselineskip]
\textsc{Monte Carlo results: Gertler-Karadi VAR(12) specification} \\[1ex]
\includegraphics[width=0.85\linewidth,clip=true,trim=5em 2.5em 5em 1em]{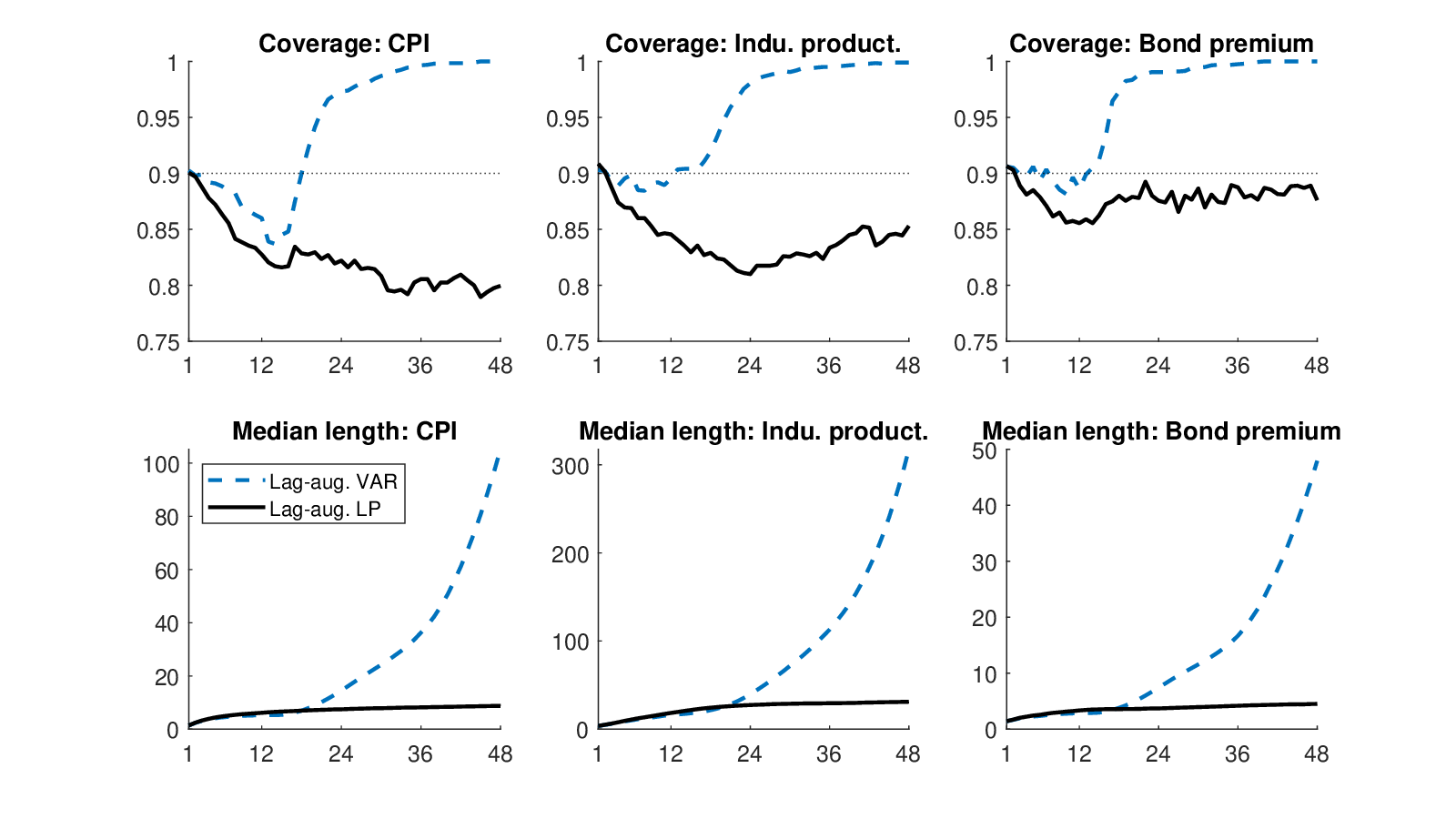}

\caption{Coverage rate and median length of 90\% confidence intervals for reduced-form impulse responses at horizons up to 48 (horizontal axis). Black solid line: lag-augmented local projection, percentile-t bootstrap interval. Blue dashed line: \citet{Inoue2020} Efron bootstrap interval. 2,000 Monte Carlo repetitions; 2,000 bootstrap iterations.} \label{fig:sim_kk_gk}
\end{figure}

\section{Additional Proofs}

\subsection{Notation}
\label{sec:var_notation}

%
%
%
%
%

Geometric series of the form $\sum_{\ell=0}^{h-1} (\rho_i^*(A,\epsilon))^{2\ell}$ will show up repeatedly in the proofs below. Observe that, for any $A \in \mathcal{A}(0,C,\epsilon)$  and $h \in \mathbb{N}$,
\[1 \leq \sum_{\ell=0}^{h-1}\rho^*_i(A,\epsilon)^{2\ell} \leq \min\left\lbrace \frac{1}{1-\rho^*_i(A,\epsilon)^2},h \right\rbrace \leq \min\left\lbrace \frac{1}{1-\rho^*_i(A,\epsilon)},h \right\rbrace = g(\rho_i^*(A,\epsilon),h)^2.\]
Recall also the definition of the lag-augmented LP residuals $\hat{\xi}_{1,t}(h) = y_{1,t+h} - \hat{\beta}_1(h)' y_{t} - \hat{\gamma}_1(h)' X_{t}$. We can write
\begin{align} 
\hat{\xi}_{1,t}(h) - \xi_{1,t}(\rho,h) &= (y_{1,t+h}- \hat{\beta}_{1}(h)'y_{t} - \hat{\gamma}_1(h)'X_{t})  -( y_{1,t+h}-\beta_{1}(A,h)'u_{t}-\eta_{1}(A,h)'X_{t}  ) \nonumber\\
&= - \hat{\beta}_1(h)'\underbrace{(y_t-A X_{t})}_{=u_t} - (\underbrace{\hat{\beta}_1(h)'A +\hat{\gamma}_1(h)'}_{\equiv \hat{\eta}_{1}(A,h)'})X_{t} +\beta_{1}(A,h)u_{t}+\eta_{1}(A,h)X_{t} \nonumber \\
&=
[\beta_1(A,h)-\hat{\beta}_1(h)]'u_t + [\eta_{1}(A,h)-\hat{\eta}_{1}(A,h)]'X_{t}. \label{eqn:var_xihat_error}
\end{align}

\subsection{Proof of \texorpdfstring{\cref{thm:var_se}}{Lemma \ref{thm:var_se}}} \label{sec:var_se_proof}

Define $\hat{\nu}(h_T) \equiv \hat{\Sigma}(h_T)^{-1}\nu$, where $\nu \in \mathbb{R} \backslash \{0 \}$ is a user-specified vector. The result follows from \cref{thm:var_se_infeas} if we can show that
\[\frac{\sum_{t=1}^{T-h_T} \hat{\xi}_{1,t}(h_T)^2 (\hat{\nu}(h_T)'\hat{u}_t(h_T))^2-\sum_{t=1}^{T-h_T} \xi_{1,t}(h_T)^2 (\tilde{\nu}'u_t)^2}{(T-h_T)v(A_T,h_T,\tilde{\nu})^2} \underset{P_{A_T}}{\overset{p}{\to}} 0,\]
where we have defined $\tilde{\nu} \equiv \Sigma^{-1} \nu$. Algebra shows that 
\begin{align*}
&\frac{\left|\sum_{t=1}^{T-h_T}\left[\hat{\xi}_{1,t}(h_T)^2(\hat{\nu}(h_T)'\hat{u}_t(h_T))^2 - \xi_{1,t}(A_T,h_T)^2(\tilde{\nu}'u_t)^2\right]\right|}{(T-h_T)v(A_T,h_T,\tilde{\nu})^2} \\
&\leq \frac{\sum_{t=1}^{T-h_T} \left|\hat{\xi}_{1,t}(h_T)^2(\hat{\nu}(h_T)'\hat{u}_t(h_T))^2 - \xi_{1,t}(A_T,h_T)^2(\tilde{\nu}'u_t)^2\right|}{(T-h_T)v(A_T,h_T,\tilde{\nu})^2} \\
&= \frac{1}{{(T-h_T)v(A_T,h_T,\tilde{\nu})^2}}\sum_{t=1}^{T-h_T} \left|\hat{\xi}_{1,t}(h_T)(\hat{\nu}(h_T)'\hat{u}_t(h_T)) - \xi_{1,t}(A_T,h_T)(\tilde{\nu}'u_t)\right| \\
&\qquad\qquad\qquad\qquad\qquad \times \left|\hat{\xi}_{1,t}(h_T)(\hat{\nu}(h_T)'\hat{u}_t(h_T)) - \xi_{1,t}(A_T,h_T)(\tilde{\nu}'u_t) + 2\xi_{1,t}(A_T,h_T)(\tilde{\nu}'u_t)\right| \\
& \textrm{(as $(a+b)(a-b)=a^2 -b^2)$} \\
&\leq \left(\frac{\sum_{t=1}^{T-h_T} \left[\hat{\xi}_{1,t}(h_T)(\hat{\nu}(h_T)'\hat{u}_t(h_T)) - \xi_{1,t}(A_T,h_T)(\tilde{\nu}'u_t)\right]^2}{(T-h_T)v(A_T,h_T,\tilde{\nu})^2}\right)^{1/2} \\
&\qquad \times \left(\frac{\sum_{t=1}^{T-h_T} \left[\hat{\xi}_{1,t}(h_T)(\hat{\nu}(h_T)'\hat{u}_t(h_T)) - \xi_{1,t}(A_T,h_T)(\tilde{\nu}'u_t) + 2\xi_{1,t}(A_T,h_T)(\tilde{\nu}'u_t) \right]^2}{(T-h_T)v(A_T,h_T,\tilde{\nu})^2}\right)^{1/2}.
\end{align*}
Consider the expression in the last line above. By Lo\`{e}ve's inequality \citep[Thm. 9.28]{Davidson1994}, this expression is bounded above by
\[ \left(2\frac{\sum_{t=1}^{T-h_T} \left[\hat{\xi}_{1,t}(h_T)(\hat{\nu}(h_T)'\hat{u}_t(h_T)) - \xi_{1,t}(A_T,h_T)(\tilde{\nu}'u_t)\right]^2}{(T-h_T)v(A_T,h_T,\tilde{\nu})^2} + 8\frac{\sum_{t=1}^{T-h_T} \xi_{1,t}(A_T,h_T)^2(\tilde{\nu}'u_t)^2}{(T-h_T)v(A_T,h_T,\tilde{\nu})^2}\right)^{1/2}.\]
The last fraction above is bounded in probability by \cref{thm:var_se_infeas}. Thus, it is sufficient to show that 
\[ \frac{\sum_{t=1}^{T-h_T} \left[\hat{\xi}_{1,t}(h_T)(\hat{\nu}(h_T)'\hat{u}_t(h_T)) - \xi_{1,t}(A_T,h_T)(\tilde{\nu}'u_t)\right]^2}{(T-h_T)v(A_T,h_T,\tilde{\nu})^2}\]
converges in probability to zero. To that end, decompose
\begin{align*}
& \hat{\xi}_{1,t}(h_T)(\hat{\nu}(h_T)'\hat{u}_t(h_T)) - \xi_{1,t}(A_T,h_T)(\tilde{\nu}'u_t) \\
&=  (\hat{\xi}_{1,t}(h_T)-\xi_{1,t}(A_T,h_T))(\tilde{\nu}'u_t) + (\hat{\nu}(h_T)'\hat{u}_t(h_T)-\tilde{\nu}'u_{T})\xi_{1,t}(A_T,h_T) \\
&\qquad + (\hat{\xi}_{1,t}(h_T)-\xi_{1,t}(A_T,h_T))(\hat{\nu}(h_T)'\hat{u}_t(h_T)-\tilde{\nu}'u_{T}).  
\end{align*}
Hence, by another application of Lo\`{e}ve's inequality,
\begin{align*}
&\frac{\sum_{t=1}^{T-h_T} \left[\hat{\xi}_{1,t}(h_T)(\hat{\nu}(h_T)'\hat{u}_t(h_T)) - \xi_{1,t}(A_T,h_T)(\tilde{\nu}'u_t)\right]^2}{(T-h_T)v(A_T,h_T,\tilde{\nu})^2} \\
&\leq 3\frac{\sum_{t=1}^{T-h_T} [\hat{\xi}_{1,t}(h_T)-\xi_{1,t}(A_T,h_T)]^2(\tilde{\nu}'u_t)^2}{(T-h_T)v(A_T,h_T,\tilde{\nu})^2} \\
&\qquad + 3 \frac{\sum_{t=1}^{T-h_T}[\hat{\nu}(h_T)'\hat{u}_t(h_T)-\tilde{\nu}'u_t]^2\xi_{1,t}(A_T,h_T)^2}{(T-h_T)v(A_T,h_T,\tilde{\nu})^2} \\
&\qquad + 3\frac{\sum_{t=1}^{T-h_T}[\hat{\xi}_{1,t}(h_T)-\xi_{1,t}(A_T,h_T)]^2[\hat{\nu}(h_T)'\hat{u}_t(h_T)-\tilde{\nu}'u_t]^2}{(T-h_T)v(A_T,h_T,\tilde{\nu})^2} \\
&\leq 3\left(\frac{\sum_{t=1}^{T-h_T} [\hat{\xi}_{1,t}(h_T)-\xi_{1,t}(A_T,h_T)]^4}{(T-h_T)v(A_T,h_T,\tilde{\nu})^4}\right)^{1/2} \times \left(\|\tilde{\nu}\|^4 \frac{\sum_{t=1}^{T-h_T} \|u_t\|^4}{T-h_T}\right)^{1/2} \\
&\qquad + 3\left(\frac{\sum_{t=1}^{T-h_T}[\hat{\nu}(h_T)'\hat{u}_t(h_T)-\tilde{\nu}'u_t]^4}{T-h_T}\right)^{1/2} \times \left(\frac{\sum_{t=1}^{T-h_T}\xi_{1,t}(A_T,h_T)^4}{(T-h_T)v(A_T,h_T,\tilde{\nu})^4}\right)^{1/2} \\
&\qquad + 3\left(\frac{\sum_{t=1}^{T-h_T} [\hat{\xi}_{1,t}(h_T)-\xi_t(A_T,h_T)]^4}{(T-h_T)v(A_T,h_T,\tilde{\nu})^4}\right)^{1/2} \times \left(\frac{\sum_{t=1}^{T-h_T}[\hat{\nu}(h_T)'\hat{u}_t(h_T)-\tilde{\nu}'u_t]^4}{T-h_T}\right)^{1/2} \\
& \textrm{(by Cauchy-Schwarz)}\\
&\equiv 3\left[ (\hat{R}_1)^{1/2} \times (\hat{R}_2)^{1/2} \right] + 3\left[(\hat{R}_3)^{1/2} \times (\hat{R}_4)^{1/2} \right] + 3 \left[(\hat{R}_1)^{1/2} \times (\hat{R}_3)^{1/2} \right].
\end{align*}
It follows from \cref{thm:var_xihat_conv} below that $\hat{R}_1$ tends to zero in probability. $\hat{R}_2$ is bounded in probability by \cref{asn:var_u_reg}(\ref{itm:var_asn_u_bounds}) and a standard application of Markov's inequality. We show below that $\hat{R}_3$ tends to zero in probability. Another standard application of Markov's inequality combined with \cref{thm:var_res_4th_bound} implies that $\hat{R}_4$ is also uniformly bounded in probability. Hence, the entire expression tends to zero in probability, as needed.

To finish the proof, we prove the claim that $\hat{R}_3$ tends to zero in probability. Note that
\[\hat{R}_3 \leq \|\hat{\nu}(h_T)\|^4 \frac{\sum_{t=1}^{T-h_T}\|\hat{u}_t(h_T)-u_t\|^4}{T-h_T} + \|\hat{\nu}(h_T)-\tilde{\nu}\|^4 \frac{\sum_{t=1}^{T-h_T}\|u_t\|^4}{T-h_T}.\]
Since $\|\hat{\nu}(h_T)-\tilde{\nu}\| \leq \|\hat{\Sigma}(h_T)^{-1}-\Sigma^{-1}\| \times \|\nu\|$, it follows from \cref{thm:var_Sigmahat_conv}(\ref{itm:var_Sigmahat_conv_ii}), \cref{thm:var_uhat_conv} below, \cref{asn:var_u_reg}(\ref{itm:var_asn_u_bounds}), and an application of Markov's inequality that the above display tends to zero in probability. \qed

\begin{lem}[Negligibility of estimation error in $\hat{\xi}_{1,t}(h)$] \label{thm:var_xihat_conv}
Let the conditions of \cref{thm:var_se} hold. Let $w \in \mathbb{R}^n \backslash \lbrace 0 \rbrace$. Then
\[\frac{\sum_{t=1}^{T-h_T}[\hat{\xi}_{1,t}(h_T)-\xi_{1,t}(A_T,h_T)]^4}{(T-h_T)v(A_T,h_T,w)^4} \underset{P_{A_T}}{\overset{p}{\to}} 0.\]
\end{lem}

\begin{proof}
Recall equation \eqref{eqn:var_xihat_error}:
\[\hat{\xi}_{1,t}(h) - \xi_{1,t}(A,h)=  [\beta_1(A,h)-\hat{\beta}_1(h)]'u_t + [\eta_1(A,h)-\hat{\eta}_1(A,h)]'X_t. \]
By Lo\`{e}ve's inequality \citep[Thm. 9.28]{Davidson1994},
\begin{align*}
&\frac{\sum_{t=1}^{T-h_T}[\hat{\xi}_{1,t}(h) - \xi_{1,t}(\rho,h)]^4}{(T-h_T)v(A_T,h_T,w)^4} \\
&\leq 8\frac{\|\hat{\beta}_1(h)-\beta_1(A_T,h_T)\|^4}{v(A_T,h_T,w)^4} \frac{\sum_{t=1}^{T-h_T}\|u_t\|^4}{T-h_T} \\
&\quad + 8\frac{\|G(A_T,T-h_T,\epsilon)[\hat{\eta}_1(A_T,h_T)-\eta_1(A_T,h_T)]\|^4}{v(A_T,h_T,w)^4} \frac{\sum_{t=1}^{T-h_T}\|G(A_T,T-h_T,\epsilon)^{-1}X_t\|^4}{T-h_T}.
\end{align*}
By \cref{asn:var_u_reg}(\ref{itm:var_asn_u_bounds}) and Markov's inequality, we have $(T-h_T)^{-1}\sum_{t=1}^{T-h_T}\|u_t\|^4 = O_{P_{A_T}}(1)$. \cref{thm:var_estim_conv}(\ref{itm:var_estim_conv_i}) then implies that the first term on the right-hand side in the above display tends to zero in probability. Similarly, the second term on the right-hand side of the above display tends to zero in probability by \cref{thm:var_y_4th_bound} below, \cref{thm:var_estim_conv}(\ref{itm:var_estim_conv_ii}), and Markov's inequality. 
\end{proof}

\begin{lem}[Negligibility of estimation error in $\hat{u}_t(h)$] \label{thm:var_uhat_conv}
Let the conditions of \cref{thm:var_se} hold. Then
\[\frac{\sum_{t=1}^{T-h_T}\|\hat{u}_t(h_T)-u_t\|^4}{T-h_T} \underset{P_{A_T}}{\overset{p}{\to}} 0.\]
\end{lem}

\begin{proof}
Since $\hat{u}_t(h_T)-u_t = [A-\hat{A}(h_T)]X_t$, we have
\[\frac{\sum_{t=1}^{T-h_T}\|\hat{u}_t(h_T)-u_t\|^4}{T-h_T} \leq \|G(A_T,T-h_T,\epsilon)(\hat{A}(h_T)-A_T)\|^4\frac{\sum_{t=1}^{T-h_T}\|G(A_T,T-h_T,\epsilon)^{-1}X_t\|^4}{T-h_T}.\]
\cref{thm:var_estim_conv}(\ref{itm:var_estim_conv_iii}) shows that the first factor after the inequality is $o_{P_{A_T}}(1)$. \Cref{thm:var_y_4th_bound} below and Markov's inequality show that the second factor is $O_{P_{A_T}}(1)$.
\end{proof}

\begin{lem}[Moment bound for $y_{i,t}^4$] \label{thm:var_y_4th_bound}
Let \cref{asn:u_mds} and \cref{asn:var_u_reg}(\ref{itm:var_asn_u_bounds}) hold. Then, for all $T \in \mathbb{N}$, $A \in \mathcal{A}(0,C,\epsilon)$, and $i=1,\dots,n$,
\[\max_{1 \leq t \leq T} E(y_{i,t}^4) \leq \frac{6C_1(E(\|u_t\|^4))^3}{\delta^2 \lambda_{\min}(\Sigma)^2} \times g(\rho_i^*(A,\epsilon),T)^4\]
where the expectations are taken with respect to the measure $P_A$, and $C_1$ is the constant defined in \cref{thm:var_bound_for_IRFs_A} below.
\end{lem}

\begin{proof}
We have defined 
\[\xi_{i,t}(A,h) \equiv \sum_{\ell=1}^h \beta_i(A,\ell)' u_{t+\ell}.\]
Since we have set the initial conditions $y_0=\ldots=y_{-p+1}=0$, we have 
\[ y_{i,t} = \sum_{\ell=1}^{t} \beta_i(A,\ell)' u_{\ell} = \xi_{i,0}(A,t).  \] 
Consider any $w \in \mathbb{R}^n$ such that $\|w\|=1$. Then \cref{thm:var_res_4th_bound} gives
\begin{align*}
\max_{1 \leq t \leq T} E(y_{i,t}^4) &= \max_{1 \leq t \leq T} E[\xi_{i,0}(A,t)^4] \\
&\leq \frac{6E(\|u_0\|^4)}{\delta^2 \lambda_{\min}(\Sigma)^2} \times \max_{1 \leq t \leq T} v(A,t,w)^4. 
\end{align*}
\cref{thm:var_v_bounds,thm:var_bound_for_IRFs_A} below then imply that
\begin{align*}
\pushQED{\qed}
\max_{1\leq t \leq T} E(y_{i,t}^4) &\leq \frac{6E(\|u_0\|^4)}{\delta^2 \lambda_{\min}(\Sigma)^2} \times (E[\|u_0\|^4])^2\|w\|^4 \times \max_{1 \leq t \leq T} \left(\sum_{\ell=0}^{t-1} \|\beta_i(A,\ell)\|^2\right)^2 \\
&= \frac{6(E(\|u_0\|^4))^3}{\delta^2 \lambda_{\min}(\Sigma)^2} \times \left(\sum_{\ell=0}^{T-1} \|\beta_i(A,\ell)\|^2\right)^2 \\
&\leq \frac{6(E(\|u_0\|^4))^3}{\delta^2 \lambda_{\min}(\Sigma)^2} \times \left(\sum_{\ell=0}^{T-1} C_1 \rho_i^*(A,\epsilon)^{2\ell}\right)^2 \\
&\leq \frac{6C_1^2(E(\|u_0\|^4))^3}{\delta^2 \lambda_{\min}(\Sigma)^2} \times g(\rho_i^*(A,\epsilon),T)^4. \qedhere
\end{align*}
\end{proof}

\begin{lem}  \label{thm:var_bound_for_IRFs_A}
Let $A(L)$ be a lag polynomial such that $A=(A_1,\dots,A_p) \in \mathcal{A}(0,C,\epsilon)$ for constants $C>0$ and $0<\epsilon<1$. Then, for any $i=1,\ldots, n$, the following statements hold.
\begin{enumerate}[i)]
\item $\| \beta_{i}(A,h) \| \leq  C_1 \rho_{i}^*(A,\epsilon)^h$, where $C_1 \equiv  1 + 2C \times \frac{1-\epsilon}{\epsilon}  $. \label{itm:var_bound_for_IRFs_A_i}
\item $\| \beta_{i}(A,h+m) \| \leq  \rho_{i}^*(A,\epsilon) ^m \times C_2 \sum_{b=0}^{p-1} \| \beta_{i}(A,h-b)\|$, where $C_2 \equiv  1+ 4 \tilde{C} \left( \frac{1-\epsilon}{\epsilon} \right)$, and $\tilde{C} \equiv C \left( 1 + C(p-1) \right)$. \label{itm:var_bound_for_IRFs_A_ii}
\end{enumerate}
\end{lem}

\begin{proof}
Since $A$ is in the parameter space $\mathcal{A}(0,C,\epsilon)$ in \cref{dfn:param_space},

\begin{equation}\label{eqn:var_aux_recurv}
\beta_{i}(A,h) = \rho_{i}\beta_{i} (A, h-1) + \beta_{i}(B,h).
\end{equation}
Thus, applying the equation above recursively,
\[\beta_{i}(A,h+m) = \rho_{i}^m \beta_{i} (A, h) +  \sum_{\ell=1}^{m} \rho_{i}^{m-\ell} \beta_{i}(B,h+\ell ).\]
We now use the above equation to prove each of the two statements of the lemma.

\paragraph{Part (\ref{itm:var_bound_for_IRFs_A_i}).}
We have
\begin{eqnarray*} 
\| \beta_{i}(A,h) \| &\leq& |\rho_{i}|^h \| \beta_{i}(A,0) \| + \sum_{\ell=1}^{h}  |\rho_{i}|^{h-\ell} \| \beta_{i}(B,\ell) \| \\
&\leq &  |\rho_{i}|^h + \sum_{\ell=1}^{h}  |\rho_{i}|^{h-\ell} C (1-\epsilon)^{\ell}\\
&& \textrm{(where we have used \cref{thm:var_bound_for_IRFs_B} below and $\beta(A,0)=I_n$}) \\ 
&\leq& \max\lbrace    |\rho_{i}| , 1-\epsilon/2 \rbrace ^h + \sum_{\ell=1}^{h}  \max\lbrace    |\rho_{i}| , 1-\epsilon/2 \rbrace ^{h-\ell} C (1-\epsilon)^{\ell}\\
&=&   \rho_{i}^*(A,\epsilon) ^h  \left(1+ C \sum_{\ell=1}^{h} \left( \frac{1-\epsilon}{\max\lbrace    |\rho_{i}| , 1-\epsilon/2 \rbrace  } \right)^\ell \right) \\
&\leq& \rho_{i}^*(A,\epsilon) ^h  \left(1+ C \sum_{\ell=1}^{\infty} \left( \frac{1-\epsilon}{1-\epsilon/2 } \right)^\ell \right) \\
&=&  \rho_{i}^*(A,\epsilon) ^h \left( 1 + C \left( \frac{1-\epsilon}{\epsilon/2}  \right) \right).
\end{eqnarray*}

\paragraph{Part (\ref{itm:var_bound_for_IRFs_A_ii}).}
To establish the remaining inequality, note that
\begin{align*}
\pushQED{\qed}
&\|\beta_{i}(A,h+m) \| \\
&\leq |\rho_{i}|^m \| \beta_{i}(A,h) \| + \sum_{\ell=1}^{m} |\rho_{i}|^{m-\ell} \| \beta_{i}(B,h+\ell) \| \\
&\leq   |\rho_{i}|^m \| \beta_{i}(A,h) \| + \sum_{\ell=1}^{m} |\rho_{i}|^{m-\ell} \left( \tilde{C} (1-\epsilon)^{\ell} \sum_{b=0}^{p-2} \| \beta_{i}(B,h-b) \| \right)  \\
& \textrm{(by \cref{thm:var_bound_for_IRFs_B}(\ref{itm:var_bound_for_IRFs_B_ii}) below)} \\
&\leq \max\lbrace  |\rho_{i}|, 1-\epsilon/2 \rbrace ^m \\
& \quad \times \left(  \| \beta_{i}(A,h) \| + \tilde{C}  \left( \sum_{\ell=1}^{m} \left( \frac{1-\epsilon}{\max\lbrace  |\rho_{i}|, 1-\epsilon/2 \rbrace  } \right)^{\ell} \right) \left(  \sum_{b=0}^{p-2} \| \beta_{i}(B,h-b) \| \right)  \right) \\
&\leq \rho_{i}^*(A,\epsilon) ^m \times \left(  \| \beta_{i}(A,h) \| + 2\tilde{C} \left( \frac{1-\epsilon}{\epsilon} \right)   \left(  \sum_{b=0}^{p-2} \| \beta_{i}(A,h-b) \|+\| \beta_{i}(A,h-b-1) \|  \right)  \right) \\
& \textrm{(where we have used equation \eqref{eqn:var_aux_recurv})} \\
&\leq \rho_{i}^*(A,\epsilon) ^m \times  \left(1+4\tilde{C} \left( \frac{1-\epsilon}{\epsilon} \right)  \right) \sum_{b=0}^{p-1} \| \beta_{i}(A,h-b) \|. \qedhere
\end{align*}
\end{proof}

\begin{lem}[Bounds on $v(A,h,w)$] \label{thm:var_v_bounds}
Let \cref{asn:u_mds} and \cref{asn:var_u_reg}(\ref{itm:var_asn_u_bounds}) hold. Then for any $i=1,\dots,n$ and for any matrix of autoregressive parameters $A$, and any  $h \in \mathbb{N}$
\[    \delta \times  \lambda_{\min} (\Sigma) \leq  \frac{1}{ \left \| a \right \|^2} \frac{v_i(A,h,w)^2}{\sum_{\ell=0}^{h-1} \left \| \beta_{i}(A,\ell) \right \|^2} \leq  E \left(  \left \| u_{t}  \right \| ^4 \right), \]
where $v_i(A,h,w) \equiv E[\xi_{i,t}(A,h)^2 (w'u_t)^2] $
\end{lem}

\begin{proof}
Algebra shows
\begin{align*}
v(A,h,w)^2 &=  E[\xi_{i,t}(A,h)^2 (w'u_t)^2]  \\
&=  E \left[ (\beta_{i}(A,h-1)' u_{t+1}+ \ldots + \beta_{i}(A,0)' u_{t+h})^2 u_{t}^2   \right] \\
&= E \left[ \left(\sum_{\ell=1}^{h}\sum_{m=1}^{h} \left( \beta_{i}(A,h-\ell)'u_{t+\ell} u'_{t+m} \beta_{i}(A,h-m)  \right) \right) (w' u_t)^{2}  \right].
\end{align*}
\cref{asn:u_mds} implies that the last expression above equals
\begin{equation} \label{eqn:var_aux1}
\sum_{\ell=1}^h E \left(  \left( \beta_{i}(A,h-\ell )'u_{t+\ell}  \right)^2 (w' u_t)^{2} \right) .
\end{equation}
An application of Cauchy-Schwarz gives the upper bound
\begin{eqnarray*}
v(A,h,w)^2 &\leq&  \sum_{\ell=1}^{h} E \left(  \left( \beta_{i}(A,h-\ell )'u_{t+\ell}  \right)^4 \right)^{1/2} E \left(  \left( w'u_{t}  \right)^4 \right)^{1/2} .\\
& \leq & \sum_{\ell=1}^{h} \left \| \beta_{i}(A,h-\ell) \right \|^2 E \left(  \left \| u_{t+\ell}  \right \| ^4 \right)^{1/2} \left \| w \right \|^2 E \left(  \left \| u_{t}  \right \| ^4 \right)^{1/2} \\
&=& E \left(  \left \| u_{t}  \right \| ^4 \right) \left \| w \right \|^2  \left( \sum_{\ell=0}^{h-1} \left \| \beta_{i}(A,\ell) \right \|^2 \right),
\end{eqnarray*}
where the last line follows from stationarity. 

For the lower bound, re-write expression \eqref{eqn:var_aux1} as

\[ \left \| w \right \|^2 \sum_{\ell=1}^h    \left \| \beta_{i}(A,h-\ell) \right \|^2 E \left(  \left( \omega_1 'u_{t+\ell}  \right)^2 (\omega_2' u_t)^{2} \right) . \]
where $\omega_1, \omega_2$ are vectors of unit norm. 

By \cref{asn:var_u_reg}(\ref{itm:var_asn_u_bounds}),
\begin{eqnarray*}
E \left(  \left( \omega_1 'u_{t+\ell}  \right)^2 (\omega_2' u_t)^{2} \right) &=& E\left[ E\big( \left( \omega_1 'u_{t+\ell}  \right)^2   \,\big|\, \lbrace u_s\rbrace _{s < t+\ell}  \big) (\omega_2' u_t)^{2} \right] \\
&\geq& \delta E[(\omega_2' u_t)^{2}] \\
&=& \delta  \omega_2'  E[u_t u_{t}']  \omega_2 \\
&\geq& \delta \lambda_{\min} (\Sigma).
\end{eqnarray*}
This gives the lower bound
\[ v(A,h,w)^2   \geq \left \| w \right \|^2   \delta \lambda_{\min} (\Sigma) \sum_{\ell=0}^{h-1} \left \| \beta_{i}(A,\ell) \right \|^2,\]
which concludes the proof.
\end{proof}

\begin{lem} \label{thm:var_Bound_h_m}
Partition the identity matrix $I_{np}$ of dimension $np \times np$ into $p$ column blocks of size $n$:
\[ I_{np} =   ( J_1', \ldots, J_p' ).   \]
Let $A(L)$ be a lag polynomial of order $p$ with autoregressive coefficients $A=(A_1, \ldots, A_{p})$.  Then, for any $h,m=0,1,\dots$,
\begin{align*}
\beta_{i}(A,h+m)' &=  \beta_{i}(A, h)' \: (J_1 \mathbf{A}^{m} J_1') \\
&\quad + \sum_{j=2}^{p} \left( \sum_{k=0}^{p-j} \beta_{i}(A, h-1-k)' A_{j+k}  \right)  \left( J_{j-1}  \mathbf{A}^{m-1} J_1' \right) ,
\end{align*}
where we define $\beta_i(A,\ell)=0$ for $\ell < 0$.
\end{lem}

\begin{proof}
Define $\beta(A,\ell) \equiv (\beta_1(A,\ell),\dots,\beta_n(A,\ell))'$. Then
\begin{eqnarray*}
\beta(A,h+m)& \equiv & J_1 \mathbf{A}^{h+m} J_1' \\
&=& J_1 \mathbf{A}^{h} \mathbf{A}^m J_1' \\
&=& J_1 \mathbf{A}^h I_{np} I_{np}' \mathbf{A}^m J_1' \\
&=& J_1 \mathbf{A}^h [ J_1', \ldots, J_{p}' ] \begin{bmatrix} J_1 \\ \vdots \\ J_{p} \end{bmatrix} \mathbf{A}^m J_1' \\
&=&  \left( J_1 \mathbf{A}^{h} J_1' \right) (J_1 \mathbf{A}^m J_1')  +   \sum_{j=2}^{p}  J_1 \mathbf{A}^{h} J_{j}' J_{j} \mathbf{A}^m J_1'  \\
&=& \beta(A,h) \beta(A,m) +  \sum_{j=2}^{p}  J_1 \mathbf{A}^{h} J_{j}'J_{j} \mathbf{A}^m J_1'.
\end{eqnarray*}
The definition of the companion matrix implies
\[J_j\mathbf{A} = J_{j-1},\quad j=2,\dots,p,\]
and
\[\mathbf{A} J_j'  =  J_1' A_j + J_{j+1}',\quad j=1,\dots,p-1,\quad \mathbf{A} J_p'  =  J_1' A_p.\]
Therefore, for $j \leq p$,
\begin{eqnarray*}
J_1 \mathbf{A}^h J_j' &=&  \sum_{k=0}^{p-j}  \beta(A,h-1-k) A_{j+k}.
\end{eqnarray*}
Thus, we have shown that
\begin{align*}
 \beta(A,h+m) &=  \beta(A,h) \beta(A,m)  \\
 &\qquad + \sum_{j=2}^{p} \left( \left( \sum_{k=0}^{p-j}  \beta(A,h-1-k) A_{j+k}  \right)  \left( J_{j-1}  \mathbf{A}^{m-1} J_1' \right) \right).
\end{align*}
The lemma follows by selecting the $i$-th equation of the above system of equations.
\end{proof}

\begin{lem} \label{thm:var_bound_for_IRFs_B}
Let $B(L)$ be a lag polynomial of order $p-1$ satisfying $\| \mathbf{B}^{\ell} \| \leq C(1-\epsilon)^{\ell}$ for every $\ell = 1,2, \dots$. Then the following two statements hold.
\begin{enumerate}[i)]
\item Define the $n \times n$ matrix $\beta(B,\ell) \equiv (\beta_1(B,\ell),\dots,\beta_n(B,\ell))'$. Then $\|\beta(B,\ell)\| \leq C (1-\epsilon)^{\ell}$ for all $\ell \geq 0$. \label{itm:var_bound_for_IRFs_B_i}
\item $\| \beta_{i}(B, h+m) \| \leq   \tilde{C} \times   (1-\epsilon)^m \times   \sum_{\ell=0}^{p-2} \| \beta_{i}(B, h-\ell ) \|$ for all $h,m \geq 0$, where $\tilde{C} \equiv C \left( 1 + C(p-1) \right)$. \label{itm:var_bound_for_IRFs_B_ii}
\end{enumerate}
\end{lem}

\begin{proof}
Let the selector matrix $J_j$ be defined as in \cref{thm:var_Bound_h_m}. Part (\ref{itm:var_bound_for_IRFs_B_i}) follows immediately from the fact
\[ \beta(B,\ell) = J_1 \mathbf{B}^{m} J_1' \]
and the assumed bound on $\|\mathbf{B}^m\|$.

We now turn to part (\ref{itm:var_bound_for_IRFs_B_ii}). \cref{thm:var_Bound_h_m} implies
\begin{align*}
\|\beta_{i}(B,h+m)\| &\leq  \|\beta_{i}(B, h)\| \times \|J_1 \mathbf{B}^{m} J_1'\|   \\
 &\qquad + \sum_{j=2}^{p-1} \left( \left( \sum_{k=0}^{p-1-j} \|\beta_{i}(B, h-1-k)\| \times 
\|B_{j+k}\|  \right)  \| J_{j-1}  \mathbf{B}^{m-1} J_1' \| \right) \\
 & \leq  \|  \beta_{i}(B, h)\|  \: C(1-\epsilon)^{m}  \\
 &\qquad + \sum_{j=2}^{p-1} \left( \left( \sum_{k=0}^{p-1-j} \| \beta_{i}(B, h-1-k) \| \times  \| B_{j+k} \|  \right)  C(1-\epsilon)^{m-1} \right)   \\
&  \textrm{(since $\|J_1 \mathbf{B}^{m} J_1'\| \leq C(1-\epsilon)^m $ and $\| J_{j-1}  \mathbf{B}^{m-1} J_1' \| \leq C(1-\epsilon)^{m-1} $ ) }\\
&\leq  C(1-\epsilon)^m \left(  \|  \beta_{i}(B, h)\| + \sum_{j=2}^{p-1} \left( \left( \sum_{k=0}^{p-1-j} \| \beta_{i}(B, h-1-k) \| \times  C  \right)  \right) \right)   \\
& \textrm{(since $\|B_{j+k}\| = \|J_1 \mathbf{B} J_{j+k}'\| \leq \|\mathbf{B}\| $)} \\
&\leq  C(1-\epsilon)^m \left(  \|  \beta_{i}(B, h)\| + C(p-2) \left( \sum_{k=0}^{p-3} \| \beta_{i}(B, h-1-k) \|    \right)   \right) \\
&\leq  (1-\epsilon)^m C \left( 1 + C(p-2) \right) \left( \sum_{\ell=0}^{p-2} \| \beta_{i}(B, h-\ell) \|    \right),\\
&\leq  (1-\epsilon)^m C \left( 1 + C(p-1) \right) \left( \sum_{\ell=0}^{p-2} \| \beta_{i}(B, h-\ell) \|    \right).
\end{align*}
The last step merely ensures that the constant is positive for all $p \geq 1$. Note that, in the case $p=1$, the sum in the last expression is zero.
\end{proof}

\subsection{Proof of \texorpdfstring{\cref{thm:var_estim_conv}}{Lemma \ref{thm:var_estim_conv}}} \label{sec:var_estim_conv_proof}
We first prove the statements (\ref{itm:var_estim_conv_i})--(\ref{itm:var_estim_conv_ii}), and then turn to statement (\ref{itm:var_estim_conv_iii}). For brevity, denote $G_T \equiv G(A_T,T-h_T,\epsilon)$.

\paragraph{Parts (\ref{itm:var_estim_conv_i})--(\ref{itm:var_estim_conv_ii}).}
Recall the definition $\hat{\eta}_1(A,h) \equiv A'\hat{\beta}_1(h)+\hat{\gamma}_1(h)$ in equation \eqref{eqn:var_xihat_error}. Since the OLS coefficients $(\hat{\beta}_1(h)',\hat{\eta}_1(A,h)')'$ are a non-singular linear transformation of the OLS coefficients $(\hat{\beta}_1(h)',\hat{\gamma}_1(h)')'$, the former vector equals the OLS coefficients in a regression of $y_{1,t+h}$ on $(u_t',X_t')'$, due to the relationship $u_t = y_t - A X_t$. By the representation 
\[y_{1,t+h} = \beta_1(A,h)'u_t + \eta_1(A,h)' X_t + \xi_{1,t}(A,h)\]
in equation \eqref{eqn:var_long_regression_u}, we can therefore write
\begin{align}
&\begin{pmatrix}
\frac{1}{v(A_T,h_T,w)}[\hat{\beta}_1(h_T)-\beta_1(A_T,h_T)] \\
\frac{1}{v(A_T,h_T,w)}G_T[\hat{\eta}(A_T,h_T)-\eta(A_T,h_T)]
\end{pmatrix} \nonumber \\
&=
\begin{pmatrix}
\frac{1}{T-h_T}\sum_{t=1}^{T-h_T} u_tu_t' & \frac{1}{T-h}\sum_{t=1}^{T-h_T} u_tX_t'G_T^{-1} \\
\frac{1}{T-h_T}\sum_{t=1}^{T-h_T} G_T^{-1}X_tu_t' & \frac{1}{T-h_T}\sum_{t=1}^{T-h_T} G_T^{-1}X_tX_t'G_T^{-1}
\end{pmatrix}
^{-1} \label{eqn:var_Mhat} \\
&\qquad \times 
\begin{pmatrix}
\frac{1}{(T-h_T)v(A_T,h_T,w)}\sum_{t=1}^{T-h_T} u_t \xi_{1,t}(A_T,h_T) \\
\frac{1}{(T-h_T)v(A_T,h_T,w)}\sum_{t=1}^{T-h_T} G_T^{-1} X_t\xi_{1,t}(A_T,h_T)
\end{pmatrix} \nonumber \\
&\equiv \hat{M}^{-1}
\begin{pmatrix}
\hat{m}_1 \\
\hat{m}_2
\end{pmatrix}. \nonumber
\end{align}
We must prove that the above display tends to zero in probability. $\hat{m}_1$ tends to zero in probability by \cref{thm:var_clt} and the fact that \cref{thm:var_v_bounds} implies that $v(A_T,h_T,w)/v(A_T,h_T,\tilde{w})$ is uniformly bounded from below and from above for any $\tilde{w} \in \mathbb{R}^n \backslash \lbrace 0 \rbrace$. $\hat{m}_2$ also tends to zero in probability by \cref{thm:var_estim_conv_numer}. Hence, it just remains to show that the $n(p+1) \times n(p+1)$ symmetric positive semidefinite matrix $\hat{M}^{-1}$ is bounded in probability. It suffices to show that $1/\lambda_{\min}(\hat{M})$ is uniformly asymptotically tight. Consider the $2 \times 2$ block partition of $\hat{M}$ in \eqref{eqn:var_Mhat}. The off-diagonal blocks of $\hat{M}$ tend to zero in probability by \cref{thm:var_estim_conv_denom} below. Moreover, the upper left block of $\hat{M}$ tends in probability to the positive definite matrix $\Sigma$ by \cref{thm:var_Sigmahat_conv}(\ref{itm:var_Sigmahat_conv_i}) and \cref{asn:var_u_reg}. Thus, the tightness of 1/$\lambda_{\min}(\hat{M})$ follows from \cref{asn:var_XpX}, which pertains to the lower right block of $\hat{M}$. This concludes the proof of the first two statements.

\paragraph{Part (\ref{itm:var_estim_conv_iii}).}
Write
\begin{align*}
& (T-h_T)^{1/2}[\hat{A}(h_T)-A_T]G(A_T,T-h_T,\epsilon) \\
&= \left(\frac{1}{(T-h_T)^{1/2}}\sum_{t=1}^{T-h_T} u_tX_t'G_T^{-1} \right) \times \left(\frac{1}{T-h_T}\sum_{t=1}^{T-h_T} G_T^{-1}X_tX_t'G_T^{-1} \right)^{-1}.
\end{align*}
The first factor on the right-hand side above is $O_{P_{A_T}}(1)$ by \cref{thm:var_estim_conv_denom} below, while the second factor is also $O_{P_{A_T}}(1)$ by the same argument as in parts (\ref{itm:var_estim_conv_i})--(\ref{itm:var_estim_conv_ii}) above. \qed

\begin{lem}[OLS denominator] \label{thm:var_estim_conv_denom}
Let \cref{asn:u_mds} and \cref{asn:var_u_reg}(\ref{itm:var_asn_u_bounds}) hold. Let there be given a sequence $\lbrace A_T \rbrace$ in $\mathcal{A}(0,C,\epsilon)$ and a sequence $\lbrace h_T \rbrace$ of nonnegative integers satisfying $T-h_T \to \infty$. Then for any $i,j=1,\dots,n$ and $r = 1,\dots,p$,
\[\frac{\sum_{t=1}^{T-h_T} u_{i,t}y_{j,t-r}}{(T-h_T)^{1/2}g(\rho_j^*(A,\epsilon),T-h_T) } = O_{P_{A_T}}(1).\]
\end{lem}

\begin{proof}
Write $g_{j,T} \equiv g(\rho_j^*(A,\epsilon),T-h_T)$ for brevity. Note that $\lbrace u_{i,t} y_{j,t-r} \rbrace_t$ is a martingale difference array with respect to the natural filtration $\tilde{\mathcal{F}}_t = \sigma(u_t,u_{t-1},\dots)$ under \cref{asn:u_mds}. Thus, the sequence is serially uncorrelated, implying that
\begin{align*}
E\left[\left(\frac{\sum_{t=1}^{T-h_T} u_{i,t}y_{j,t-r}}{(T-h_T)^{1/2}g_{j,T} }\right)^2\right] &= \frac{1}{(T-h_T)g_{j,T}^2} \sum_{t=1}^{T-h_T} E[u_{i,t}^2y_{j,t-r}^2] \\
&\leq \frac{1}{g_{j,T}^2} \times [E(u_{i,t}^4)]^{1/2} \times \max_{1 \leq t \leq T-h_T} E(y_{j,t-1}^4)^{1/2} \\
&= [E(u_{i,t}^4)]^{1/2} \times \left(\frac{\max_{1 \leq t \leq T-h_T} E(y_{j,t-1}^4)}{g_{j,T}^4}\right)^{1/2} \\
&\leq \frac{\sqrt{6}C_1(E(\|u_t\|^4))^2}{\delta \lambda_{\min}(\Sigma)},
\end{align*}
where the last inequality uses \cref{thm:var_y_4th_bound}. The lemma follows from Markov's inequality.
\end{proof}

\subsection{Proof of \texorpdfstring{\cref{thm:var_estim_conv_numer}}{Lemma \ref{thm:var_estim_conv_numer}}} \label{sec:var_estim_conv_numer_proof}

We will show that
\[E\left[\left(\frac{\sum_{t=1}^{T-h_T} \xi_{i,t}(A_T,h_T) y_{j,t-r}}{(T-h_T)v(A_T,h_T,w)g(\rho_j^*(A_{T},\epsilon),T-h_T)}\right)^2\right] \to 0.\]
To that end, observe that if $t \geq s+h_T$, then
\begin{align*}
&E[\xi_{i,t}(A_T,h_T) y_{j,t-r}\xi_{i,s}(\rho_T,h_T)y_{j,s-r}] \\
&= E\left[E(\xi_{i,t}(A_T,h_T) \mid u_t,u_{t-1},\dots)y_{j,t-r}\xi_s(A_T,h_T)y_{j,s-r}\right] \\
&= 0,
\end{align*}
by \cref{asn:u_mds}. By symmetry, the far left-hand side above equals 0 also if $s \geq t+h_T$. Thus,
\begin{align}
& E\left[\left(\frac{\sum_{t=1}^{T-h_T} \xi_{i,t}(A_T,h_T)y_{j,t-r}}{(T-h_T)v(A_T,h_T,w)g(\rho_j^*(A_{T},\epsilon),T-h_T)}\right)^2\right] \nonumber \\
&\leq \sum_{t=1}^{T-h_T}\sum_{s=1}^{T-h_T} \mathbbm{1}(|s-t|<h_T) \frac{|E[\xi_{i,t}(A_T,h_T)y_{j,t-r}\xi_{i,s}(A_T,h_T)y_{j,s-r}]|}{(T-h_T)^2v(A_T,h_T,w)^2g(\rho_j^*(A_{T},\epsilon),T-h_T)^2}. \label{eqn:var_estim_conv_numer_ii_bound}
\end{align}
We now bound the summands on the right-hand side above. Consider first the case $s \in (t-h_T, t]$ (we will handle the case $t \in (s-h_T,s]$ by symmetry). Since the initial conditions for the VAR are zero, we have
\[y_{j,t-r}= \xi_{j,0}(A_T,t-r).\]
Thus,
\begin{align*}
&E[\xi_{i,t}(A_T,h_T)y_{j,t-r}\xi_{i,s}(A_T,h_T)y_{j,s-r}] \\
&= E[\xi_{i,t}(A_T,h_T)  \xi_{j,0}(A_T,t-r) \xi_{i,s}(A_T,h_T) \xi_{0,j}(A_T,t-r) ]  \\
&= \sum_{\ell_1=1}^{h_T}\sum_{\ell_2=1}^{h_T} \sum_{m_1=r}^{t-1} \sum_{m_2=r}^{s-1} E\Big[ \left( \beta_{i}(A_{T},h_{T}-\ell_1)'u_{t+\ell_1} \right) ( \beta_{j}(A_{T}, m_1-r )' u_{t-m_1}) \\
&  \qquad\qquad\qquad\qquad\qquad \times ( \beta_{i}(A_{T},h_{T}-\ell_2)'u_{s+\ell_2} ) ( \beta_{j}(A_{T}, m_2-r)'u_{s-m_2} ) \Big].
\end{align*}
Consider any summand above defined by its indices $(\ell_1,\ell_2,m_1,m_2)$. Since $t+\ell_1 > \max\lbrace t-m_1,s-m_2\rbrace$, \cref{asn:u_mds} implies that the summand can only be nonzero if $s+\ell_2=t+\ell_1$, which requires $\ell_1 \leq h_T+s-t$. Moreover, when $s+\ell_2=t+\ell_1$, we also need $t-m_1=s-m_2$ for the summand to be nonzero, which in turn requires $m_1 \geq t-s+1$. Thus,
\begin{align*}
& \lvert E[\xi_{i,t}(A_T,h_T)y_{j,t-r}\xi_{i,s}(A_T,h_T)y_{j,s-r}] \rvert \\
&\leq \sum_{\ell_1=1}^{h_T+s-t} \sum_{m_1=t-s+r}^{t-r}  E \left [ \lvert ( \beta_{i}(A_{T},h_{T}-\ell_1)^{\prime }u_{t+\ell_1} ) ( \beta_{i}(A_{T},h_{T}-\ell_1-(t-s))^{\prime }u_{t+\ell_1} )   \right. \\
& \qquad\qquad\qquad\qquad\qquad \times \left.  ( \beta_{j}(A_{T},m_1-r)^{\prime }u_{m_1-r} )   ( \beta_{j}(A_{T},m_1-r-(t-s))^{\prime }u_{m_1-r} ) \rvert \right] \\
&= \sum_{\ell_1=1}^{h_T+s-t} \sum_{m_1=t-s+r}^{t-r} \left \| \beta_{i}(A_{T},h_{T}-\ell_1)  \right \| \times  \left \|  \beta_{i}(A_{T},h_{T}-\ell_1-(t-s)) \right \| \times \left \|  \beta_{j}(A_{T},m_1-r) \right \|  \\
&  \qquad\qquad\qquad\qquad \times \left \| \beta_{j}(A_{T},m_1-r-(t-s))\right \| \times E \left [ \| u_{t+\ell_1} \|^2 \times \| u_{m_1-r}  \|^2 \right] \\ 
& \textrm{(by Cauchy-Schwarz)}\\
& \leq C_1^2 E(\|u_0\|^4) \sum_{\ell_1=1}^{h_T+s-t} \sum_{m_1=t-s+r}^{t-r} \left \| \beta_{i}(A_{T},h_{T}-\ell_1)  \right \| \times \left \|  \beta_{i}(A_{T},h_{T}-\ell_1-(t-s)) \right \| \\
& \qquad\qquad\qquad\qquad\qquad\qquad\qquad \times  \rho^*_j(A_{T},\epsilon)^{2(m_1-r)-(t-s)}   \\
& \textrm{(since $\| \beta_{j} (A_{T},h) \| \leq C_1 \rho^*_j(A_{T},\epsilon)^{h}$ for any $j, h$ by \cref{thm:var_bound_for_IRFs_A})} \\
&\leq C_1^2 \times  E(\|u_0\|^4) \times  \rho^*_j(A_{T},\epsilon)^{(t-s)} \left( \sum_{\ell_1=1}^{h_T+s-t} \left \| \beta_{i}(A_{T},h_{T}-\ell_1)  \right \| \times  \left \|  \beta_{i}(A_{T},h_{T}-\ell_1-(t-s)) \right \|  \right) \\
& \qquad \times \left(  \sum_{m_1=t-s+r}^{t-r} \rho^*_j(A_T,\epsilon)^{2[m_1-r-(t-s)]} \right) \\
&\leq  E(\|u_0\|^4) \times  \rho^*_j(A,\epsilon)^{(t-s)} \left( \sum_{\ell_1=1}^{h_T+s-t} B^p_{i}(A_{T},h_{T}-\ell_1-(t-s)) \rho^*_i(A,\epsilon)^{(t-s)} \right) \\
& \qquad \times \left(  \sum_{m_1=t-s+r}^{t-r} \rho^*_j(A_T,\epsilon)^{2[m_1-r-(t-s)]} \right) \\
& \textrm{(using \cref{thm:var_bound_for_IRFs_A} and the definition of $B^p_{i}(A_{T},h_{T}-\ell_1-(t-s))$ in \cref{thm:var_bound_for_consistency} below)} \\
&= E(\|u_0\|^4) \times  \rho^*_j(A_T,\epsilon)^{(t-s)} \rho^*_i(A_T,\epsilon)^{(t-s)} \left( \sum_{\ell=0}^{h_T-1-(t-s)} B_i^{p}  (A_{T}, \ell)  \right) \left(  \sum_{m=0}^{s-2r} \rho^*_j(A_{T},\epsilon)^{2m} \right) \\
&\leq E( \|u_0\|^4)\times  \rho^*_j(A_T,\epsilon)^{(t-s)} \rho^*_i(A_T,\epsilon)^{(t-s)} \left( \sum_{\ell=0}^{h_T-1} B_i^{p}  (A_{T}, \ell)  \right) \left(  \sum_{m=0}^{T-h_{T}} \rho^*_j(A,\epsilon)^{2m} \right) \\
&\leq E( \|u_0\|^4)\times  \rho^*_j(A_T,\epsilon)^{(t-s)} \rho^*_i(A_{T},\epsilon)^{(t-s)}  \left( \sum_{\ell=0}^{h_T-1} B_i^{p}  (A_{T}, \ell)   \right) g(\rho_j^*(A_{T},\epsilon),T-h_T)^2  \\
&\leq E( \|u_0\|^4)\times  \rho^*_j(A_T,\epsilon)^{(t-s)} \rho^*_i(A_T,\epsilon)^{(t-s)} \\
&\qquad \times  C_2 p \left( \sum_{\ell=0}^{h_T-1} \| \beta_{i}(A_{T},\ell) \|^2   \right) g(\rho_j^*(A_{T},\epsilon),T-h_T)^2  \\
& \textrm{(by \cref{thm:var_bound_for_consistency} below).} 
\end{align*}
We have derived the bound in the above display under the assumption $s \in (t-h_T,t]$, but by symmetry, it also applies when $t \in (s-h_T,s]$ if we replace $(t-s)$ with $|t-s|$. Inserting into \eqref{eqn:var_estim_conv_numer_ii_bound}, we get
\begin{align*}
\pushQED{\qed}
&E\left[\left(\frac{\sum_{t=1}^{T-h_T} \xi_{i,t}(A_T,h_T)y_{j,t-r}}{(T-h_T)v(A_T,h_T,w)g(\rho_j^*(A_{T},\epsilon),T-h_T)}\right)^2\right] \\
&\leq   C_2p \times  \frac{E( \|u_0\|^4)}{(T-h_T)^2} \\
&\qquad \times  \frac{  \sum_{\ell=0}^{h_T-1} \| \beta_{i}(A_{T},\ell) \|^2   }{v(A_{T},h_{T},w)^2} \sum_{t=1}^{T-h_T}\sum_{s=1}^{T-h_T} \mathbbm{1}(|s-t|<h_T) \left( \rho^*_j(A_{T},\epsilon) \rho^*_i(A_{T},\epsilon) \right)^{|t-s|} \\
&\leq  \frac{C_2p}{\|w\|^2 \times  \delta \times  \lambda_{\min}(\Sigma)} \times  \frac{E( \|u_0\|^4)}{(T-h_T)^2}  \times    \sum_{t=1}^{T-h_T}\sum_{s=1}^{T-h_T} \mathbbm{1}(|s-t|<h_T) \left( \rho^*_j(A_{T},\epsilon) \rho^*_i(A_{T},\epsilon) \right)^{|t-s|} \\
& \textrm{(where we have used the lower bound of \cref{thm:var_v_bounds})} \\
&= \frac{C_2p}{\|w\|^2 \times  \delta \times  \lambda_{\min}(\Sigma)} \times  \frac{E( \|u_0\|^4)}{(T-h_T)}  \times  \sum_{|m|<h_T}\left(1 - \frac{|m|}{T-h_T}\right) \left( \rho^*_j(A_{T},\epsilon) \rho^*_i(A_{T},\epsilon) \right)^{|m|} \\
&\leq \frac{C_2p}{\|w\|^2 \times  \delta \times  \lambda_{\min}(\Sigma)} \times  \frac{E( \|u_0\|^4)}{(T-h_T)}  \times   \sum_{m=0}^{h_T-1} \left( \rho^*_j(A_{T}) \rho^*_i(A_{T},\epsilon) \right)^{m} \\
&\leq \frac{C_2p}{\|w\|^2 \times  \delta \times  \lambda_{\min}(\Sigma)} \times  \frac{E( \|u_0\|^4)}{(T-h_T)}  \times   \left( \sum_{m=0}^{h_T-1}  \rho^*_j(A_{T},\epsilon)^{2m} \right)^{1/2}  \left( \sum_{m=0}^{h_T-1} \rho^*_i(A_{T},\epsilon)^{2m} \right)^{1/2} \\
& \textrm{(by Cauchy-Schwarz)} \\
&\leq \frac{C_2p \times  E( \|u_0\|^4) }{\|w\|^2 \times  \delta \times  \lambda_{\min}(\Sigma)}  \times   \left( \frac{g(\rho_i^*(A_{T},\epsilon),T-h_T)}{T-h_T} \right)^{1/2} \left( \frac{g(\rho_j^*(A_{T},\epsilon),T-h_T)}{T-h_T} \right)^{1/2} \\
&\to 0. \qedhere
\end{align*}

\begin{lem} \label{thm:var_bound_for_consistency} 
Consider any lag polynomial $A(L)$ of order $p$ with autoregressive coefficients $A=(A_1, \ldots, A_{p})$. Then for any $h=1,2,\dots$,
\[ \frac{   \sum_{\ell=0}^{h-1}  B_{i}^p(A,\ell)    }{ \sum_{\ell=0}^{h-1}  \left \| \beta_{i}(A,\ell) \right \|^2  }  \leq C_2 p,  \]
where
\[B_{i}^p(A,\ell) \equiv C_2 \sum_{b=0}^{p-1} \left( \| \beta_{i}(A,\ell)\| \times  \left \|  \beta_i(A, \ell-b) \right \| \right),\]
and we define $\beta_i(A,\ell)=0$ whenever $\ell<0$. Here $C_2$ is the constant defined in \cref{thm:var_bound_for_IRFs_A}.
\end{lem}

\begin{proof}
Changing the order of summation, we have
\begin{align*}
&\sum_{\ell=0}^{h-1} \left( \sum_{b=0}^{p-1} \| \beta_{i}(A,\ell)\| \times  \left \|  \beta_i(A, \ell -b) \right \|   \right) \\
&= \sum_{b=0}^{p-1}\left( \sum_{\ell=0}^{h-1} \| \beta_{i}(A,\ell)\| \times  \left \|  \beta_i(A, \ell -b) \right \|   \right) \\
&\leq  \sum_{b=0}^{p-1}\left( \sum_{\ell=0}^{h-1} \| \beta_{i}(A,\ell)\|^2 \right)^{1/2}  \times \left(  \sum_{\ell=0}^{h-1} \left \|  \beta_i(A, \ell -b) \right \|^2   \right)^{1/2}\\
& \leq \sum_{b=0}^{p-1}\left( \sum_{\ell=0}^{h-1} \| \beta_{i}(A,\ell)\|^2 \right) \\
& \textrm{(since $\| \beta_{i}(A,\ell-b) \| = 0$ for $\ell-b < 0$)}\\
&= p \left( \sum_{\ell=0}^{h-1} \| \beta_{i}(A,\ell)\|^2 \right).
\end{align*}
Therefore, 
\[\pushQED{\qed}
\sum_{\ell=0}^{h-1}  B_{i}^p(A,\ell) \leq  C_2p \left( \sum_{\ell=0}^{h-1} \| \beta_{i}(A,\ell)\|^2 \right). \qedhere\]
\end{proof}

\subsection{Proof of \texorpdfstring{\cref{thm:var_Sigmahat_conv}}{Lemma \ref{thm:var_Sigmahat_conv}}} \label{sec:var_Sigmahat_conv_proof}

We consider each statement separately.

\paragraph{Part (\ref{itm:var_Sigmahat_conv_i}).}
Since $E(u_tu_t')=\Sigma$ by definition, this statement follows from a standard application of Chebyshev's inequality, exploiting the summability of the autocovariances of $\lbrace u_t \otimes u_t \rbrace$, cf. Assumption \ref{asn:var_u_reg}(\ref{itm:var_asn_u2_cum}). See for example \citet[Thm. 19.2]{Davidson1994}.

\paragraph{Part (\ref{itm:var_Sigmahat_conv_ii}).}
Using $\hat{u}_t(h)-u_t = (A-\hat{A}(h))X_t$, we get
\begin{align*}
& \left\|\hat{\Sigma}(h_T)-\frac{1}{T-h_T}\sum_{t=1}^{T-h_T}u_tu_t' \right\| \\
&\leq \frac{1}{T-h_T}\sum_{t=1}^{T-h_T} \|\hat{u}_t(h_T)\hat{u}_t(h_T)'-u_tu_t'\| \\
&\leq \frac{1}{T-h_T}\sum_{t=1}^{T-h_T} \|\hat{u}_t(h_T)-u_t\|^2 + \frac{2}{T-h_T}\sum_{t=1}^{T-h_T} \|(\hat{u}_t(h_T)-u_t)u_t'\| \\
&\leq \|G(A_T,T-h_T,\epsilon)(\hat{A}(h_T)-A_T)\|^2 \times \frac{1}{T-h_T}\sum_{t=1}^{T-h_T} \|G(A_T,T-h_T,\epsilon)^{-1}X_t\|^2 \\
&\qquad + 2 \times \|G(A_T,T-h_T,\epsilon)(\hat{A}(h_T)-A_T)\| \times \frac{1}{T-h_T}\sum_{t=1}^{T-h_T} \|G(A_T,T-h_T,\epsilon)^{-1}X_tu_t'\|.
\end{align*}
\cref{thm:var_y_4th_bound}, \cref{thm:var_estim_conv}(\ref{itm:var_estim_conv_iii}), \cref{thm:var_estim_conv_denom}, and an application of Markov's inequality imply that the last expression above is
\[\pushQED{\qed}
o_{P_{A_T}}(1) \times O_{P_{A_T}}(1) + 2 \times o_{P_{A_T}}(1) \times o_{P_{A_T}}(1) = o_{P_{A_T}}(1). \qedhere\]

\subsection{Proof of \texorpdfstring{\cref{thm:var_se_infeas}}{Lemma \ref{thm:var_se_infeas}}} \label{sec:var_se_infeas_proof}
We would like to show $\hat{\varsigma} \underset{P_{A_T}}{\overset{p}{\to}} 1$, where
\[\hat{\varsigma} \equiv \frac{1}{T-h_T}\sum_{t=1}^{T-h_T}\frac{\xi_{i,t}(A_T,h_T)^2(w'u_t)^2}{v(A_T,h_T,w)^2}.\]
Note that the summands could be serially correlated under our assumptions. We establish the desired convergence in probability by showing that the variance of $\hat{\varsigma}$ tends to 0 (since its mean is 1). Observe that
\begin{align}
\var(\hat{\varsigma}) &= \frac{1}{(T-h_T)^2v(A_T,h_T,w)^4}\sum_{t=1}^{T-h_T}\sum_{s=1}^{T-h_T} \cov\left(\xi_{i,t}(A_T,h_T)^2(w' u_t)^2,\xi_{i,s}(A_T,h_T)^2(w' u_s)^2\right) \nonumber \\
&= \frac{1}{(T-h_T)v(A_T,h_T,w)^4} \nonumber \\
& \qquad \times \sum_{|m|<T-h_T} \left(1-\frac{|m|}{T-h_T}\right) \cov\left(\xi_{i,0}(A_T,h_T)^2(w' u_0)^2,\xi_{i,m}(A_T,h_T)^2(w' u_m)^2\right) \nonumber \\
&\leq \frac{2}{(T-h_T)v(A_T,h_T,w)^4}\sum_{m=0}^{T-h_T} |\Gamma_T(m)|, \label{eqn:var_variance_varsigmahat}
\end{align}
where we define
\[\Gamma_T(m) \equiv \cov\left(\xi_{i,0}(A_T,h_T)^2(w' u_{i,0})^2,\xi_{i,m}(A_T,h_T)^2(w' u_m)^2\right),\quad m =0,1,2,\dots\]
By expanding the squares $\xi_0(\rho,h)^2$ and $\xi_m(\rho,h)^2$, we obtain
\begin{align*}
\Gamma_T(m) = \sum_{\ell_1=1}^{h_T} \sum_{\ell_2=1}^{h_T} \sum_{\ell_3=1}^{h_T} \sum_{\ell_4=1}^{h_T}  & \cov\Big((\beta_{i}(A_{T},h_{T}-\ell_1)' u_{\ell_1})   (\beta_{i}(A_{T},h_{T}-\ell_2)' u_{\ell_2})   (w' u_0)^2, \\
& \qquad\quad (\beta_{i}(A_{T},h_{T}-\ell_3)' u_{m+\ell_3}) (\beta_{i}(A_{T},h_{T}-\ell_4)' u_{m+\ell_4}) (w' u_m)^2 \Big).
\end{align*}
Consider any summand on the right-hand side above defined by indices $(\ell_1,\ell_2,\ell_3,\ell_4)$. If $\ell_1=\ell_2$, then \cref{asn:u_mds} implies that the covariance in the summand equals zero whenever $\ell_3 \neq \ell_4$, since in this case at most one of the subscripts $m+\ell_3$ or $m+\ell_4$ can equal $\ell_1$ ($=\ell_2$). Thus, if $\ell_1=\ell_2$, then the summand can only be nonzero when $\ell_3=\ell_4$. If instead $\ell_1 \neq \ell_2$, then \cref{asn:u_mds} implies that the summand can only be nonzero when $\lbrace \ell_1,\ell_2 \rbrace = \lbrace m+\ell_3,m+\ell_4 \rbrace$, which in turn requires that $m < h_T$. Putting these facts together, we obtain
\begin{align}
&|\Gamma_T(m)|  \nonumber \\
&\leq \sum_{\ell_1=1}^{h_T}  \sum_{\ell_3=1}^{h_T} \left|\cov\left(( \beta_{i}(A_{T},h_{T}-\ell_{1})' u_{m+\ell_1})^2   (w'u_m)^2, (\beta_{i}(A_{T},h_{T}-\ell_{3})'u_{\ell_3})^2  (w' u_0)^2 \right)\right| \label{eqn:var_variance_varsigmahat_term1}  \\
&\quad + \mathbbm{1}(m < h_T)2\sum_{\ell_1=1}^{h_T} \sum_{\ell_2 \neq \ell_1}  \left|\cov\left((\beta_{i}(A_{T},h_{T}-\ell_1)' u_{\ell_1})  (\beta_{i}(A_{T},h_{T}-\ell_2)' u_{\ell_2})   (w' u_m)^2, \right . \right . \nonumber \\
& \qquad\qquad\qquad\qquad \left . \left .  (\beta_{i}(A_{T},h_{T}-(\ell_1-m))' u_{\ell_1})  (\beta_{i}(A_{T},h_{T}-(\ell_2-m))' u_{\ell_2})   (w' u_0)^2 \right) \right|.  \label{eqn:var_variance_varsigmahat_term2}
\end{align}
Let $\tilde{\Gamma}_{1,T}(m)$ and $\tilde{\Gamma}_{2,T}(m)$ denote expressions \eqref{eqn:var_variance_varsigmahat_term1} and \eqref{eqn:var_variance_varsigmahat_term2}, respectively. We will now bound $\sum_{m=0}^{T-h_T} \tilde{\Gamma}_{1,T}(m)$ and $\sum_{m=0}^{T-h_T} \tilde{\Gamma}_{2,T}(m)$, so that we can ultimately insert these bounds into \eqref{eqn:var_variance_varsigmahat}.

\paragraph{Bound on $\sum_{m=0}^{T-h_{T}} \tilde{\Gamma}_{1,T}(m)$.}
We first bound the term in expression  \eqref{eqn:var_variance_varsigmahat_term1}. To do this, we define the unit-norm vectors
\[ \omega_{A_{T}, h_{T}, \ell} \equiv \beta_{i}(A_{T},h_{T}-\ell)/ \| \beta_{i}(A_{T},h_{T}-\ell)   \|, \quad \omega_{w} \equiv w / \| w \|.   \]
By \cref{thm:var_bound_for_IRFs_A}, the term
\[ \left|\cov\left(( \beta_{i}(A_{T},h_{T}-\ell_{1})' u_{m+\ell_1})^2  (w'u_m)^2, (\beta_{i}(A_{T},h_{T}-\ell_{3})'u_{\ell_3})^2   (w' u_0)^2 \right)\right| \]
is bounded above by
\[ \| w \|^4  C_1^4 \rho^*_i(A_{T},\epsilon) ^{2(h_{T}-\ell_1)+2(h_{T}-\ell_{3})} \left|\cov\left(( \omega_{A_{T}, h_{T}, \ell_1} ' u_{m+\ell_1})^2  (\omega_w'u_m)^2, (\omega_{A_{T}, h_{T}, \ell_3} 'u_{\ell_3})^2  (\omega_w' u_0)^2 \right)\right| .\]
Since $A_{T} \in \mathcal{A}(0,\epsilon,C)$, we have $\rho^*_i(A_{T},\epsilon) \leq 1$, so
\begin{align}
&\sum_{m=0}^{T-h_T} \tilde{\Gamma}_{1,T}(m) \nonumber \\
&\leq \|w\|^4 C_1^4 \sum_{m=0}^{T-h_T}\sum_{\ell_1=1}^{h_T}  \sum_{\ell_3=1}^{h_T}\rho^*_i(A_{T},\epsilon)^{2(h_T-\ell_3)} \nonumber \\
& \qquad\qquad\qquad\qquad\quad \times \left|\cov\left(( \omega_{A_{T}, h_{T},\ell_1} ' u_{m+\ell_1})^2  (\omega_w'u_m)^2, (\omega_{A_{T}, h_{T},\ell_3} 'u_{\ell_3})^2  (\omega_w' u_0)^2 \right)\right|  \nonumber \\
&\leq \|w\|^4 C_1^4 \sum_{b_1=1}^{h_T} \rho^*_i(A_{T},\epsilon)^{2(h_T-b_1)} \nonumber \\
& \qquad\qquad \times \left ( \sum_{b_2= -\infty}^{\infty} \sum_{b_3=-\infty}^{\infty}  \sup_{\|\omega_{j}\|=1} \left|\cov\left( ({\omega_1}' u_{b_1})^2(\omega_2' u_0)^2,(\omega_3' u_{b_3+b_{2}})^2(\omega_{4}' u_{b_3})^2 \right)\right|\right). \label{eqn:var_variance_varsigmahat_Gamma1} 
\end{align}
Consider the double sum in large parentheses above. If we expand the various squares of the form $(\omega_j'u_t)^2$, then the double sum can be bounded above by at most $4n^2$ terms of the form
\begin{equation} \label{eqn:var_varsigmahat_Gamma1_supp}
\sum_{b_2= -\infty}^{\infty} \sum_{b_3=-\infty}^{\infty} \left|\cov\left( \tilde{u}_{j_1,b_1}\tilde{u}_{j_2,0},\tilde{u}_{j_3,b_3+b_2}\tilde{u}_{j_4,b_3}\right)\right|,
\end{equation}
where $\tilde{u}_t = (\tilde{u}_{1,t},\dots,\tilde{u}_{n^2,t})' \equiv u_t \otimes u_t$, and $j_1,j_2,j_3,j_4 \in \lbrace 1,2,\dots,n^2 \rbrace$ are summation indices. By \cref{asn:var_u_reg}(\ref{itm:var_asn_u2_cum}), the process $\lbrace \tilde{u}_t \rbrace$ has absolutely summable cumulants up to order four. We can therefore show there exists a constant $K \in (0,\infty)$ such that the large parenthesis \eqref{eqn:var_variance_varsigmahat_Gamma1} is bounded above by $K$.\footnote{According to \citet[Thm. 2.3.2]{Brillinger2001},
\begin{align*}
\cov\left( \tilde{u}_{j_1,b_1} \tilde{u}_{j_2,0}  ,\tilde{u}_{j_3,b_2} \tilde{u}_{j_4,b_3} \right) &= \cov\left(\tilde{u}_{j_2,0},\tilde{u}_{j_3,b_2} \right)\cov\left( \tilde{u}_{j_1,b_1},\tilde{u}_{j_4,b_3}\right) + \cov\left(\tilde{u}_{j_2,0},\tilde{u}_{j_4,b_3} \right)\cov\left(\tilde{u}_{j_1,b_1},\tilde{u}_{j_3,b_2} \right) \\
&\quad + \text{Cum}\left( \tilde{u}_{j_2,0},\tilde{u}_{j_1,b_1},\tilde{u}_{j_3,b_2} ,\tilde{u}_{j_4,b_3}\right),
\end{align*}
where ``Cum'' denotes the joint fourth-order cumulant. Thus, the expression \eqref{eqn:var_varsigmahat_Gamma1_supp} is bounded above by
\begin{align*}
& \left(  \sum_{b_2=-\infty}^{\infty}  \left| \cov\left(\tilde{u}_{j_2,0},\tilde{u}_{j_3,b_2} \right) \right| \right) \left( \sum_{b_3=-\infty}^{\infty}\left| \cov\left( \tilde{u}_{j_1,b_1},\tilde{u}_{j_4,b_3}\right)  \right| \right) \\
&+ \left( \sum_{b_2=-\infty}^{\infty}   \left|  \cov\left(\tilde{u}_{j_1,b_1},\tilde{u}_{j_3,b_2} \right) \right| \right)  \left( \sum_{b_3=-\infty}^{\infty}   \left|  \cov\left(\tilde{u}_{j_2,0},\tilde{u}_{j_4,b_3} \right) \right| \right) \\
& + \sum_{b_1=-\infty}^{\infty} \sum_{b_2=-\infty}^{\infty}\sum_{b_3=-\infty}^{\infty}\left|  \text{Cum}\left( \tilde{u}_{j_2,0},\tilde{u}_{j_1,b_1},\tilde{u}_{j_3,b_2} ,\tilde{u}_{j_4,b_3}\right) \right|.
\end{align*}
The third term above is finite by \cref{asn:var_u_reg}(\ref{itm:var_asn_u2_cum}), since $\tilde{u_t} \equiv u_t \otimes u_t$ has absolutely summable cumulants up to order 4. Consider the first term above (the second term is handled similarly). The stationarity of $\tilde{u}_t$ implies that this term equals $\left(  \sum_{b_2=-\infty}^{\infty}  \left| \cov\left(\tilde{u}_{j_2,0},\tilde{u}_{j_3,b_2} \right) \right| \right) \left( \sum_{\ell=-\infty}^{\infty}\left| \cov\left( \tilde{u}_{j_1,0},\tilde{u}_{j_4,\ell}\right)  \right| \right)$. By \cref{asn:var_u_reg}(\ref{itm:var_asn_u2_cum}), the autocovariances of $  \{ \tilde{u_t} \} $ are absolutely summable. This implies the above display is bounded. Thus, we have shown that there exists a constant $K(j_1,j_2,j_3,j_4)$ (which only depends on the fixed data generating process for $\lbrace u_t \rbrace$) that bounds the expression \eqref{eqn:var_varsigmahat_Gamma1_supp}. Picking the largest constant over all summation indices gives the desired result. 
} Consequently, 

\[\sum_{m=0}^{T-h_T} \tilde{\Gamma}_{1,T}(m)  \leq \|w\|^4 C_1^4 K \sum_{b_1=1}^{h_T} \rho^*_i(A_{T},\epsilon)^{2(h_T-b_1)} = \|w\|^4 C_1^4 K \sum_{\ell=0}^{h_T-1} \rho^*_i(A_{T},\epsilon)^{2\ell }    .\]

\paragraph{Bound on $\sum_{m=0}^{T-h_{T}} \tilde{\Gamma}_{2,T}(m)$.}
Expression \eqref{eqn:var_variance_varsigmahat_term2} can be bounded above by
\begin{align*}
\mathbbm{1}(m < h_T)2\sum_{\ell_1=1}^{h_T} \sum_{\ell_2 \neq \ell_1}  & E \left[ \:  | \beta_{i}(A_{T},h_{T}-\ell_1)' u_{\ell_1} |  \times  | \beta_{i}(A_{T},h_{T}-\ell_2)' u_{\ell_2} |  \times  (w' u_m)^2 \right.  \\
& \left . | \beta_{i}(A_{T},h_{T}-(\ell_1-m))' u_{\ell_1} | \times   | \beta_{i}(A_{T},h_{T}-(\ell_2-m) )' u_{\ell_2} |  \times    (w'u_0)^2  \right] .
\end{align*}
Applying Cauchy-Schwarz, we get the upper bound
\begin{align*}
\mathbbm{1}(m < h_T)2\sum_{\ell_1=1}^{h_T} \sum_{\ell_2 \neq \ell_1} &   \Big( \: \|w\|^4 \times   \| \beta_{i}(A_{T},h_{T}-\ell_1)\| \times  \|\beta_{i}(A_{T},h_{T}-\ell_2) \|  \\
& \quad \times \left . \| \beta_{i}(A_{T},h_{T}-(\ell_1-m))\| \times  \| \beta_{i}(A_{T},h_{T}-(\ell_2-m))\|     \right .  \\
& \quad \times E \left[ \|u_{\ell_1}\|^2 \times  \| u_{\ell_2} \|^2 \times  \|u_m\|^2 \times  \| u_0 \| ^2 \right]  \: \Big).
\end{align*}
Another application of the Cauchy-Schwarz inequality gives 
\[E \left[ \|u_{\ell_1}\|^2 \times  \| u_{\ell_2} \|^2 \times  \|u_m\|^2 \times  \| u_0 \| ^2 \right] \leq E[ \: \| u_{t}^8 \| \: ]. \]
Thus,
\begin{align*}
&\sum_{m=0}^{T-h_T} \tilde{\Gamma}_{2,T}(m)  \\
& \leq  2 \times  E[ \: \| u_{t}^8 \| \: ] \times  \|w\|^4  \times   \sum_{m=0}^{h_T-1} \sum_{\ell_1=1}^{h_T} \sum_{\ell_2= 1}^{h_{T}}  \left(  \| \beta_{i}(A_{T},h_{T}-\ell_1)\| \times  \|\beta_{i}(A_{T},h_{T}-\ell_2) \| \right.   \\
&  \qquad\qquad\qquad\qquad\qquad\qquad \times \left.   \| \beta_{i}(A_{T},h_{T}-(\ell_1-m))\| \times  \| \beta_{i}(A_{T},h_{T}-(\ell_2-m))\| \right). 
\end{align*}
The bound in \cref{thm:var_bound_for_IRFs_A} implies that
\[\| \beta_{i}(A_{T},h_{T}-\ell_1)\| \times  \| \beta_{i}(A_{T},h_{T}-(\ell_1-m))\| \]
is less than or equal to
\begin{equation} \label{eqn:var_B_i_p}
\underbrace{ C_2 \sum_{b=0}^{p-1} \| \beta_{i}(A_{T},h_{T}-\ell_1)\| \times  \left \|  \beta_i(A_{T}, h_{T}-\ell_{1}-b) \right \| }_{ \equiv B^p_{i}(A_{T}, h_{T}-\ell_1) }  \times  \rho^*_i(A_{T},\epsilon)^{m},
\end{equation}
for a positive constant $C_2$ that depends on $p$ and $\epsilon$. Thus,
\begin{align}
&\sum_{m=0}^{T-h_T} \tilde{\Gamma}_{2,T}(m) \nonumber \\
& \leq  2 \times  E[ \: \| u_{t}^8 \| \: ] \times  \|w\|^4  \times   \sum_{m=0}^{h_T-1} \sum_{\ell_1=1}^{h_T} \sum_{\ell_2= 1}^{h_{T}}  \left(  B^p_i(A_{T}, h_{T}-\ell_1) \times  B^p_i(A_{T}, h_{T}-\ell_2) \times  \rho^*_i(A_T,\epsilon)^{2m} \right )   \nonumber \\
&= 2 \times  E[ \: \| u_{t}^8 \| \: ] \times  \|w\|^4  \left(\sum_{\ell=0}^{h_T-1} \rho^*_i(A_{T},\epsilon)^{2\ell} \right) \left( \sum_{\ell=0}^{h_{T}-1}  B_{i}^p(A_{T},\ell) \right)^2.  \label{eqn:var_variance_varsigmahat_Gamma2sum}
\end{align}

\paragraph{Conclusion of proof.}
Putting together \eqref{eqn:var_variance_varsigmahat}, \eqref{eqn:var_variance_varsigmahat_term1}, \eqref{eqn:var_variance_varsigmahat_term2},  and \eqref{eqn:var_variance_varsigmahat_Gamma2sum}, we get
\begin{align*}
\var(\hat{\varsigma}) &\leq \frac{2 \| w \|^4 }{(T-h_T)v(A_T,h_T,w)^4 }\left\lbrace  C_1^4 K \sum_{\ell=0}^{h_T-1} \rho^*_i(A_{T},\epsilon)^{2\ell }  \right. \\
& \qquad + \left.  2 \times  E[ \: \| u_{t}^8 \| \: ] \times   \left(\sum_{\ell=0}^{h_T-1} \rho^*_i(A_{T},\epsilon)^{2\ell} \right) \left( \sum_{\ell=0}^{h_{T}-1}  B_{i}^p(A_{T},\ell) \right)^2  \right\rbrace \\
&\leq \left \lbrace  \frac{2C_1^4 K \times \sum_{\ell=0}^{h_T-1} \rho^*_i(A_{T},\epsilon)^{2\ell } }{ (T-h_T) \left( \sum_{\ell=0}^{h_{T}-1}  \left \| \beta_{i}(A,\ell) \right \|^2   \right)^2 \delta^2 \lambda_{\min}(\Sigma)^2 }  \right .    \\
&\qquad  + \left .\frac{2 \times  E[ \: \| u_{t}^8 \| \: ] \times   \sum_{\ell=0}^{h_T-1} \rho^*_i(A_{T},\epsilon)^{2\ell}  }{(T-h_T)  \delta^2 \lambda_{\min}(\Sigma)^2  } \times   \frac{   \left( \sum_{\ell=0}^{h_{T}-1}  B_{i}^p(A_{T},\ell) \right)^2   }{ \left( \sum_{\ell=0}^{h_{T}-1}  \left \| \beta_{i}(A,\ell) \right \|^2   \right)^2 } \right \rbrace  \\
& \textrm{(by the lower bound for $v(A_T,h_T,w)^2$ derived in \cref{thm:var_v_bounds}) } \\
&\leq \frac{2 \left\lbrace \left( C_1^4 K \times \sum_{\ell=0}^{h_T-1} \rho^*_i(A_{T},\epsilon)^{2\ell } \right) + \left(  2 \times  E[ \: \| u_{t}^8 \| \: ] \times  C_2p \times   \sum_{\ell=0}^{h_T-1} \rho^*_i(A_{T},\epsilon)^{2\ell} \right)  \right\rbrace }{(T-h_T) \delta^2 \lambda_{\min} (\Sigma)^2  } \\
& \textrm{(where we have used $\textstyle \sum_{\ell=0}^{h-1}  \left \| \beta_{i}(A,\ell) \right \|^2  \geq \|\beta_{i}(A,0)\|=1$ and \cref{thm:var_bound_for_consistency})  } \\
&= \frac{ \left(2 \times C_1^4 K \right) +  \left( 4\times  E[ \: \| u_{t}^8 \| \: ] \times  C_2p  \right) }{ \delta^2 \lambda_{\min}(\Sigma)^2 } \times   \frac{\sum_{\ell=0}^{h_T-1} \rho^*_i(A_{T},\epsilon)^{2\ell }}{T-h_{T}} .
\end{align*}
The final expression above tends to zero as $T \to \infty$, since
\[\frac{\sum_{\ell=0}^{h_T-1} \rho^*_i(A_{T},\epsilon)^{2\ell}}{T-h_T} \leq  \frac{g(\rho^*_i(A_{T},\epsilon),h_T)^2}{T-h_T} \to 0.\]
Thus, $\var(\hat{\varsigma}) \to 0$.  \qed

\subsection{Proof of \texorpdfstring{\cref{thm:var_res_4th_bound}}{Lemma \ref{thm:var_res_4th_bound}}} \label{sec:var_res_4th_bound_proof}

We prove only the first statement of the lemma, as the proof is completely analogous for the second part. Define the unit-norm vectors
\[ \omega_{A, h, \ell} \equiv \beta_{i}(A,h-\ell)/ \| \beta_{i}(A,h-\ell)   \|, \quad \omega_{w} \equiv w / \| w \|.   \]
In a slight abuse notation, throughout the proof of this lemma we will sometimes write $\beta_{i}(h-\ell)$ instead of $\beta_{i}(A,h-\ell)$. Expanding the four-fold product $\xi_{i,t}(A,h)^4$, we obtain
\begin{align} 
& E[\xi_{i,t}(A,h)^4(a' u_t)^4] \nonumber \\
&= \sum_{\ell_1=1}^h \sum_{\ell_2=1}^h \sum_{\ell_3=1}^h \sum_{\ell_4=1}^h \| \beta_{i}(h-\ell_1)\| \times  \| \beta_{i}(h-\ell_2) \| \times  \| \beta_{i}(h-\ell_3) \| \times  \| \beta_{i}(h-\ell_4) \|  \nonumber  \\
&\qquad \times E \left[  (\omega_{A, h, \ell_1}' u_{t+\ell_1}) \times  (\omega_{A, h, \ell_2}'  u_{t+\ell_2}) \times  (   \omega_{A, h, \ell_3}'  u_{t+\ell_3} ) \times  ( \omega_{A, h, \ell_4}' u_{t+\ell_4}) \times  (w' u_t)^4 \right]. \label{eqn:var_xi4}
\end{align}
By \cref{asn:u_mds}, the summands above equal zero if one of the indices $\ell_j$ is different from the three other indices. Hence, the only possibly nonzero summands are those for which the four indices appear in two pairs, e.g., $\ell_1=\ell_3$ and $\ell_2=\ell_4$. The typical nonzero summand can thus be written in the form 
\[ \| \beta_{i}(h-\ell)\|^2 \| \beta_{i}(h-m)\|^2   E \left [  ( \omega_{A,h,\ell}' u_{t+\ell} )^2\times  (\omega_{A,h,m}'u_{t+m})^2\times  (w' u_t)^4 \right] \] 
where $\ell,m \in \lbrace 1,\dots,h\rbrace$. For given $\ell$ and $m$, this specific type of summand is obtained precisely when either (i) $\ell_1=\ell_2=\ell$ and $\ell_3=\ell_4=m$, or (ii) $\ell_1=\ell_3=\ell$ and $\ell_2=\ell_4=m$, or (iii) $\ell_1=\ell_4=\ell$ and $\ell_2=\ell_3=m$, or (iv) $\ell_1=\ell_2=m$ and $\ell_3=\ell_4=\ell$, or (v) $\ell_1=\ell_3=m$ and $\ell_2=\ell_4=\ell$, or (vi) $\ell_1=\ell_4=m$ and $\ell_2=\ell_3=\ell$. That is, there are six summands in \eqref{eqn:var_xi4} of this form. Thus,
\begin{align*}
E[\xi_{i,t}(A,h)^4(w' u_t)^4] &= 6 \sum_{\ell=1}^h\sum_{m=1}^h  \left(  \| \beta_{i}(h-\ell)\|^2 \| \beta_{i}(h-m)\|^2 \right. \\
&   \qquad\qquad\qquad \times \left . E \left [  ( \omega_{A,h,\ell}' u_{t+\ell} )^2\times  (\omega_{A,h,m}'u_{t+m})^2\times  (w' u_t)^4 \right]  \right)  \\
&\leq  6 \| w \|^4 E(\|u_t\|^8) \sum_{\ell=1}^h\sum_{m=1}^h  \| \beta_{i}(h-\ell)\|^2 \| \beta_{i}(h-m)\|^2 \\
& \textrm{(by applying Cauchy-Schwarz twice)}\\
&= 6 \|w\|^4 E(\|u_t\|^8)  \left(\sum_{\ell=0}^{h-1} \| \beta_{i}(A,h-\ell)\|^2  \right)^2.
\end{align*}
It follows from \cref{thm:var_v_bounds} that
\[\pushQED{\qed}
E\left[\left(v(A,h,w)^{-1}\xi_t(A,h)u_t\right)^4\right] \leq \frac{6 E(\|u_t\|^8) }{\delta^2 \lambda_{\min}(\Sigma)^2}. \qedhere\]

\phantomsection
\addcontentsline{toc}{section}{References}
\bibliography{ref}

\end{appendices}